\let\@fnsymbol\@arabic %alph %@arabic  
\def\bR{\mathbb{R}}
\def\bN{\mathbb{N}}
\def\bZ{\mathbb{Z}}
\def\cI{\mathcal{I}}
\def\cQ{\mathcal{Q}}
\def\cA{\mathcal{A}}
\def\cM{\mathcal{M}}
\def\cP{\mathcal{P}}
\def\cV{\mathcal{V}}
\def\cO{\mathcal{O}}
\def\cF{\mathcal{F}}
\def\cG{\mathcal{G}}
\def\cL{\mathcal{L}}
\def\cJ{\mathcal{J}}
\def\cZ{\mathcal{Z}}
\def\cN{\mathcal{N}}
\def\cE{\mathcal{E}}
\def\cK{\mathcal{K}}
\def\cS{\mathcal{S}}
\def\cT{\mathcal{T}}
\def\eps{\varepsilon}
\def\ph{\varphi}
\def\wt{\widetilde}
\def\ZZZ{\mathbb{Z}}
\def\indic{\hbox{\raise-2pt \hbox{\indbf 1}}}
\let\dpr=\partial
\let\io=\infty
\def\*{{\hfill\break\null\hfill\break}}
\def\tende#1{\,\vtop{\ialign{##\crcr\rightarrowfill\crcr
             \noalign{\kern-1pt\nointerlineskip}
             \hskip3.pt${\scriptstyle #1}$\hskip3.pt\crcr}}\,}
\def\otto{\,{\kern-1.truept\leftarrow\kern-5.truept\to\kern-1.truept}\,}
\def\fra#1#2{{#1\over#2}}
\newtheorem{theorem}{Theorem}[section]  % use thm for %Theorems to keep numbering consistent
\newtheorem{prop}[theorem]{Proposition}
\newtheorem{lemma}[theorem]{Lemma}
\numberwithin{equation}{section}
\def\be{\begin{equation}}
\def\ee{\end{equation}}
\newcommand{\hc}{\mbox{h.c.}}
     \let\g=\gamma          
        \let\k=\kappa     \let\l=\lambda
\let\s=\sigma \let\t=\tau         \let\ph=\varphi   
   \let\o=\omega     
 \let\D=\Delta       \let\L=\Lambda    
\let\O=\Omega
\def\aa{\mathfrak{a}}
\def \blue#1 {\textcolor{blue}{#1}}
\def \red#1 {\textcolor{red}{#1}}
\definecolor{lightblue}{rgb}{0, 0.33, 0.71}
\title{A second order upper bound for the ground state energy of a hard-sphere gas in the Gross-Pitaevskii regime} 
\author{Giulia Basti$^{*,}$\footnote{Electronic mail: giulia.basti@gssi.it, ORCID: 0000-0002-3745-6293}\;, Serena Cenatiempo$^{*,}$\footnote{Corresponding author: serena.cenatiempo@gssi.it, ORCID: 0000-0002-8667-8300}, Alessandro Olgiati$^{\circ,}$\footnote{Electronic mail: alessandro.olgiati@math.uzh.ch, ORCID: 0000-0002-0170-7935}, \\ Giulio Pasqualetti$^{\circ,}$\footnote{Electronic mail: giulio.pasqualetti@math.uzh.ch}, Benjamin Schlein$^{\circ,}\footnote{Electronic mail: benjamin.schlein@math.uzh.ch, ORCID: 0000-0003-0910-9136}$\\[0.2cm]
{\footnotesize $^{*}$Gran Sasso Science Institute, Viale Francesco Crispi 7, 67100 L'Aquila, Italy}\\
{\footnotesize $^{\circ}$Institute of Mathematics, University of Zurich, Winterthurerstrasse 190, 8057 Zurich, Switzerland}}
\begin{document}

\maketitle

\begin{abstract} 
We prove an upper bound for the ground state energy of a Bose gas consisting of $N$ hard spheres with radius $\frak{a}/N$, moving in the three-dimensional unit torus $\Lambda$. Our estimate captures the correct asymptotics of the ground state energy, up to errors that vanish in the limit $N \to \infty$. The proof is based on the construction of an appropriate trial state, given by the product of a Jastrow factor (describing two-particle correlations on short scales) and of a wave function constructed through a (generalized) Bogoliubov transformation, generating orthogonal excitations of the Bose-Einstein condensate and describing correlations on large scales. 
\end{abstract}

\section{Introduction and main result}

In \cite{LHY}, Lee-Huang-Yang predicted that the ground state energy per particle of a 
system of $N$ bosons moving in a box with volume $N/\rho $ and interacting through a potential with scattering length $\frak{a}$ is given, as $N \to \infty$, by 
\begin{equation}\label{eq:LHY}  e(\rho) = 4\pi \frak{a} \rho \Big[ 1 + \frac{128}{15\sqrt{\pi}} (\rho \frak{a}^3)^{1/2} + \dots \Big]  \end{equation} 
up to corrections that are small, in the low density limit $\rho \frak{a}^3 \ll 1$ (see \cite{SP-book, LSSY-book} for the heuristics behind this formula and its relation with the expected occurrence of Bose-Einstein condensation in dilute Bose gases). At leading order, the validity of (\ref{eq:LHY}) follows from the upper bound obtained in \cite{Dy} and from the matching lower bound established in \cite{LY}. Recently, also the second order term on the r.h.s. of (\ref{eq:LHY}) has been rigorously justified. The upper bound has been shown in \cite{YY} (through a clever modification of a quasi-free trial state proposed in \cite{ESY})  and (for a larger class of interactions and using a simpler trial state) in \cite{BCS}. As for the lower bound, it has been first obtained in \cite{FS1} for integrable potentials and then in \cite{FS2}, for particles interacting through general potentials, including hard-spheres. The upper bound for the case of hard-sphere potential is still an open question. An alternative approach to the study of the ground state energy of the zero temperature Bose gas, still not justified rigorously but possibly valid beyond the dilute regime, has been proposed in \cite{L63} and recently revived in \cite{SimpleEq1, SimpleEq2, SimpleEq3}.

Trapped Bose gases can be described as systems of $N$ bosons, confined by external fields in a volume of order one and interacting through a radial, repulsive potential $V$ with scattering length of the order $N^{-1}$; this scaling limit is known as the Gross-Pitaevskii regime (see \cite[Chapter 6]{LSSY-book} for an introduction, and \cite{Rougerie,S} for reviews of more recent results). Focussing for simplicity on systems trapped in the unit torus $\Lambda$, the Hamilton operator takes the form 
\begin{equation}\label{eq:HN1}
H_N = \sum_{j=1}^N -\Delta_{x_j} + \sum_{i<j}^N N^2 V (N (x_i - x_j)) 
\end{equation} 
and acts on $L^2_s (\L^N)$, the subspace of $L^2 (\L^N)$ consisting of functions that are symmetric w.r.t. permutations of the $N$ particles. Note that $x_i-x_j$ is here the difference between the position vectors of particles $i$ and $j$ on the torus. Equivalently, we can think of $x_i - x_j$ as the difference in $\bR^3$; however, in this case, $V$ has to be replaced by its periodisation. As proven in \cite{LY,LSY,NRS}, the ground state energy $E_N$ of (\ref{eq:HN1}) is given, to leading order, by 
\begin{equation}\label{eq:ENlead} E_N = 4\pi \frak{a} N + o (N) \end{equation} 
in the limit $N \to \infty$. For $V \in L^3 (\bR^3)$, more precise information on the low-energy spectrum of (\ref{eq:HN1}) has been determined in \cite{BBCS4}. Here, the ground state energy was proven to satisfy  
\begin{equation}\label{eq:gse1}
\begin{split}  E_N = \; & 4 \pi \aa (N  -1) \, +\,  e_\L \aa^2   \\
&-\frac 12 \sum_{p \in \L^*_+} \bigg[ p^2 + 8 \pi \aa - \sqrt{|p|^4 + 16 \pi \aa p^2} - \frac{(8 \pi \aa)^2}{2p^2}\bigg] +  \cO (N^{-1/4}) 
\end{split} \end{equation} 
where $\L^*_+ = 2 \pi \ZZZ^3 \setminus \{0\}$ and \begin{equation}\label{eq:eLambda} e_\L = 2 - \lim_{M \to \io} \sum_{\substack{p \in {\mathbb Z}^3 \setminus \{0\}:\\ |p_1|, |p_2|, |p_3| \leq M}} \frac{\cos(|p|)}{p^2}\,.\end{equation} 
Additionally, the spectrum of $H_N - E_N$ below a threshold $\zeta > 0$ was shown to consist of eigenvalues having the form 
\begin{equation}\label{eq:excited} \sum_{p \in 2\pi \bZ^3 \backslash \{ 0 \}} n_p \sqrt{|p|^4 + 16 \pi \frak{a} p^2} + \cO (N^{-1/4} \zeta^3)\,. \end{equation} 
A new and simpler proof of (\ref{eq:gse1}), (\ref{eq:excited}) was recently obtained in \cite{HST}, for $V \in L^2 (\Lambda)$. Moreover, these results have been also extended to the non-homogeneous case of Bose gases trapped by external fields in \cite{NT,BSS}. 

While the approach of \cite{NT} applies to $V \in L^1 (\bR^3)$, the validity of (\ref{eq:gse1}), (\ref{eq:excited}) for bosons interacting through non-integrable potentials is still an open question. The goal of this paper is to prove that (\ref{eq:gse1}) remains valid, as an upper bound, for particles interacting through a hard-sphere potential. 

We consider $N$ bosons in $\L=[-\tfrac 1 2, \tfrac 1 2]^3 \subset \bR^3$, with periodic boundary conditions. We assume particles to interact through a hard-sphere potential, with radius $\aa/N$, for some $\frak{a} > 0$.  We are interested in the ground state energy of the system, defined by 
\begin{equation}\label{eq:gs-en}
E^\text{hs}_N = \inf \, \Big\langle \Psi , \sum_{j=1}^N -\Delta_{x_j} \Psi \Big\rangle 
\end{equation}
where the infimum is taken over all normalized wave functions $\Psi \in L_s^2 (\Lambda^N)$ satisfying the hard-core condition 
\begin{equation}\label{eq:hard} \Psi (x_1, \dots , x_N ) = 0 \end{equation} 
almost everywhere on the set \[ \bigcup_{i<j}^N \big\{ (x_1, \dots , x_N) \in \bR^{3N} : |x_ i - x_j| \leq \frak{a}/N \big\} \, . \]

\begin{theorem} \label{thm:main}  Let $E_N^{\text{hs}}$ be defined as in \eqref{eq:gs-en}. 
There exist $C , \eps > 0$ such that 
\begin{equation}\label{eq:main} 
\begin{split} 
E^{\text{hs}}_N \leq \; & 4 \pi \aa(N  -1) \, +\,  e_\L \aa^2   \\
&-\frac 12 \sum_{p \in \L^*_+} \bigg[ p^2 + 8 \pi \aa - \sqrt{|p|^4 + 16 \pi \aa p^2} - \frac{(8 \pi \aa)^2}{2p^2}\bigg] +  C N^{-\eps}
\end{split} \end{equation} 
for all $N$ large enough, with $e_\Lambda$ defined as in (\ref{eq:eLambda}). 
\end{theorem}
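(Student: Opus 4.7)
The proof is variational: we construct an explicit trial state $\Psi_N \in L^2_s(\L^N)$ satisfying the hard-core condition \eqref{eq:hard} and bound from above the quotient $\langle \Psi_N, H_N \Psi_N\rangle / \|\Psi_N\|^2$. Following the blueprint of the abstract, we take
\[ \Psi_N(x_1, \dots, x_N) = \Big[\prod_{1 \le i < j \le N} f_\ell(x_i - x_j)\Big]\, \Phi(x_1, \dots, x_N), \]
where $f_\ell$ is a Jastrow factor vanishing as soon as $|x_i - x_j| \le \aa/N$, so that \eqref{eq:hard} holds independently of $\Phi$. The natural choice is to let $f_\ell$ be the spherically symmetric zero-energy scattering solution of the hard-sphere problem inside a ball of radius $\ell$, and $f_\ell \equiv 1$ outside; the intermediate cut-off $N^{-1} \ll \ell \ll 1$ is a free parameter to be optimised at the end.

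The second factor $\Phi$ is designed to reproduce the large-scale correlations responsible for the $p$-sum in \eqref{eq:main}. We write $\Phi = U_N^* \, \Phi_{\rm Bog}$, where $U_N$ is the excitation map that extracts the contributions orthogonal to the condensate mode $\ph_0 \equiv 1$, and $\Phi_{\rm Bog} = T\Omega$ is the image of the Fock vacuum under a generalised Bogoliubov transformation
\[ T = \exp\Big(\tfrac{1}{2} \sum_{p \in \L^*_+} \eta_p\, \big(b_p^* b_{-p}^* - b_p b_{-p}\big)\Big), \]
built out of modified creation/annihilation operators $b_p^*, b_p$ that respect the $N$-particle constraint. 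The coefficients $\eta_p$ are chosen in terms of the Fourier transform of $\omega_\ell := 1 - f_\ell$ at momenta $|p| \lesssim \ell^{-1}$, so that the long-wavelength correlations introduced by $T$ dovetail with the short-range ones already carried by the Jastrow factor.

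The energy computation then splits in two stages. Taking $\Phi \equiv 1$ first, a Dyson-type evaluation of $-\sum_j \Delta_{x_j} \prod_{i<j} f_\ell$ produces, at the diagonal level, the leading contribution $4\pi \aa (N-1)$ and, after a careful analysis of the periodised short-range energy, the Madelung-type correction $e_\L \aa^2$; higher cluster contributions arising from $f_\ell^2 - 1$ are handled by expanding $\prod_{i<j} f_\ell^{2} = \prod_{i<j}(1 + (f_\ell^{2} - 1))$ and truncating the resulting series. Switching on $\Phi$ next, the residual long-range portion of the effective interaction induces a quadratic Hamiltonian in $b^*, b$ that is diagonalised by $T$: the resulting dispersion yields the Bogoliubov spectrum $\sqrt{|p|^4 + 16 \pi \aa p^2}$, while the subtractions $p^2 + 8 \pi \aa$ and $(8\pi\aa)^2/(2p^2)$ appearing in \eqref{eq:main} come out as renormalisations of the bare quadratic Hamiltonian and of the scattering length, respectively. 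A priori bounds on the excitation-number operator $\cN_+$ in $T\Omega$, standard for this class of transformations, control all remainders.

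The main obstacle is the interplay between the Jastrow and the Bogoliubov factors. Unlike the case $V \in L^p$, here $f_\ell$ is by no means a perturbation of the identity in any functional-analytic sense: the hard core forces $\Psi_N$ to vanish on a set of positive measure, and conjugating creation/annihilation operators by $\prod f_\ell$ destroys the canonical commutation relations on which Bogoliubov theory is built. Two technical points must therefore be handled with great care: first, the expectation $\|\Psi_N\|^2$ and the mixed terms of the form $2\nabla_j\big(\prod f_\ell\big)\cdot \nabla_j \Phi$, for which one needs the cluster expansion alluded to above to be truncated at an order consistent with the target error $N^{-\eps}$; second, the accurate transfer of the high-momentum tail (i.e.\ the contribution from momenta $|p| \gg \ell^{-1}$, which $\Phi$ does not resolve) into a renormalised scattering length, without producing spurious contributions at order $\aa^2$. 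Balancing all these error terms against the one produced by the truncated cluster expansion, and optimising in $\ell$, finally yields the exponent $\eps > 0$ in \eqref{eq:main}.
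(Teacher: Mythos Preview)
Your overall architecture (Jastrow factor for the hard core at scale $\ell$, Bogoliubov transformation for long-range correlations) matches the paper, but two points in your outline are not merely imprecise---they would fail if one tried to fill them in.

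First, the cross term $2\nabla_j\big(\prod f_\ell\big)\cdot\nabla_j\Phi$ is \emph{not} an error to be absorbed by cluster expansion. After the integration-by-parts identity \eqref{eq:en-psi}, this term survives as the drift piece $2\sum_{i<j}\nabla_j\cdot u_\ell(x_i-x_j)\nabla_j$ in the effective Hamiltonian $H_N^{\text{eff}}$ of \eqref{eq:Heff}, and it contributes at order one to the energy. Consequently the Bogoliubov coefficients cannot be built from $\widehat\omega_\ell$ alone: the paper takes $\check\eta=-N(1-g_{\ell_0})$ with $g_{\ell_0}=f_{\ell_0}/f_\ell$ solving the \emph{drift} scattering equation \eqref{eq:g0}, so that in momentum space $\eta_p$ satisfies \eqref{eq:eta-scat} with the extra term $D_p$ from \eqref{eq:Dp-def}. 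These $D_p$ enter both the constant $C_{N,\ell}$ and the quadratic form $\cQ_{N,\ell}$, and the cancellations that produce \eqref{eq:main} depend on this specific choice. Moreover a single Bogoliubov transformation does not suffice: one needs $e^{B(\eta)}$ to renormalise cubic and quartic terms in $\cL_N^{\text{eff}}$, and then a second $e^{B(\tau)}$ to diagonalise the resulting quadratic part.

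Second, the constant $e_\Lambda\aa^2$ does \emph{not} emerge from the Jastrow factor with $\Phi\equiv 1$. That computation gives $4\pi\aa(N-1)$ plus $\ell$-dependent corrections of order $\aa^2/\ell$; the $\ell$-independent term $e_\Lambda\aa^2$ only appears after the full Bogoliubov analysis at the second scale $\ell_0$, through the lattice sum $\sum_p\widehat\chi_{\ell_0}(p)^2/p^2$ in Prop.~\ref{prop:cM}, which combines with the $\ell_0$-dependent pieces of $C_{N,\ell}$ to leave behind exactly $e_\Lambda\aa^2$. A related point you do not address: expanding numerator and denominator in \eqref{eq:exp-Phi} separately and taking the quotient produces the cross-correction in \eqref{eq:red-eff}, namely $-\tfrac{N(N-1)}{2}\langle\Phi_N,[H_{N-2}^{\text{eff}}-4\pi\aa N]\otimes u_\ell\,\Phi_N\rangle$, which is not obviously small and requires the operator lower bound \eqref{eq:GN-JN-LB} together with the delicate Lemma~\ref{lemma:Pr-ex} (comparing $e^{B(\eta)}$ on $N$ and on $N-2$ particles) to be shown negligible.
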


{\it Remarks.}   

\medskip

 1) Theorem \ref{thm:main} establishes an upper bound for the ground state energy (\ref{eq:gs-en}). With minor modifications, it would also be possible to obtain upper bounds for low-energy excited eigenvalues, agreeing with (\ref{eq:excited}). To conclude the proof of the estimates (\ref{eq:gse1}), (\ref{eq:excited}) for particles interacting through hard-sphere potentials, we would need to establish matching lower bounds. A possible approach to achieve this goal (at least for the ground state energy) consists in taking the lower bound established in \cite{FS2}, for particles in the thermodynamic limit, and to translate it to the Gross-Pitaevskii regime. 

\medskip

2) We believe that the statement of Theorem \ref{thm:main} and its proof can also be extended to bosons in the Gross-Pitaevskii regime interacting through a larger class of potentials, combining a hard-sphere potential at short distances and an integrable potential at larger distances. This would require the extension of Lemma \ref{lm:hardcorescatt} to more general interactions. To keep our analysis as simple as possible, we focus here on hard-sphere bosons. 

\medskip

3) Theorem \ref{thm:main} and its proof could also be extended to systems of $N$ bosons interacting through a hard-sphere potential with radius of the order $N^{-1+\kappa}$ for sufficiently small $\kappa > 0$ (results for integrable potentials with scattering length of the order $N^{-1+\kappa}$ have been recently discussed in \cite{ABS,BCaS,F, Ba}). 

\bigskip

The proof of (\ref{eq:gse1}), (\ref{eq:excited}) obtained in \cite{BBCS4} is based on a rigorous version of Bogoliubov theory, developed in \cite{BBCS1,BBCS2,BBCS3}. The starting point of Bogoliubov theory is the observation that, at low energies, the Bose gas exhibits complete condensation; all particles, up to a fraction vanishing in the limit $N \to \infty$, can be described by the same zero-momentum orbital $\ph_0$ defined by $\ph_0 (x) = 1$, for all $x \in \L$. This, however, does not mean that the factorized wave function $\ph_0^{\otimes N}$ is a good approximation for the ground state of (\ref{eq:HN1}); in fact, its energy does not even approximate the ground state energy to leading order. To decrease the energy and approach (\ref{eq:ENlead}), correlations are crucial. The strategy developed in \cite{BBCS1,BBCS2,BBCS3,BBCS4} is based on the idea that most correlations can be inserted through the action of (generalized) Bogoliubov transformations, having the form
\begin{equation}\label{eq:BT1} T = \exp \left[ \frac{1}{2} \sum_{p \in \L^*_+} \eta_p \big( b_p^* b_{-p}^* - b_p b_{-p} \big) \right] \end{equation} 
where the (modified) creation and annihilation operators $b_p^*, b_p$ act on the Fock space of orthogonal excitations of the Bose-Einstein condensate; the precise definitions are given below, in Section \ref{sec:GN} (to be more precise, the action of (\ref{eq:BT1}) has to be corrected through an additional unitary operator, given by the exponential of a cubic, rather than quadratic, expression in creation and annihilation operators; see \cite{BBCS4} for details). An important feature of (generalized) Bogoliubov transformations of the form (\ref{eq:BT1}), which plays a major role in the derivation of (\ref{eq:gse1}), (\ref{eq:excited}), is the fact that their action on creation and annihilation operators is (almost) explicit. This makes computations relatively easy and it gives the possibility of including correlations also at very large length scales. 

Unfortunately, Bogoliubov transformations of the form (\ref{eq:BT1}) do not seem 
compatible with the hard-core condition (\ref{eq:hard}).
 As a consequence, they do not seem appropriate to construct trial states approximating the ground state energy of a system of particles interacting through a hard-sphere potential. A different class of trial states, for which  (\ref{eq:hard})   can be easily verified, consists of products having the form 
\begin{equation}\label{eq:jas} \Psi_N (x_1, \dots , x_N) = \prod_{j=1}^N  f (x_i - x_j) \,  \end{equation} 
for a function $f$ satisfying $f (x) = 0$, for all $|x| < \frak{a}/N$ (as mentioned after \eqref{eq:HN1}, also here $x_i -x_j$ is interpreted as difference on the torus). Such an ansatz was first used in the physics literature in \cite{B,D,J}; it is often known as Jastrow factor. In order for (\ref{eq:jas}) to provide a good approximation for the ground state energy, $f$ must describe two-particle correlations. Probably the simplest possible choice of $f$ is given by the solution 
\[ f(x) = \left\{ \begin{array}{ll} 0 &\quad \text{if } |x| < \frak{a}/N \\ 1 - \frac{\frak{a}}{N|x|} &\quad \text{if } |x| \geq \frak{a}/N \end{array} \right. \]
of the zero-energy scattering equation $-\Delta f = 0$, with the hard-core requirement $f(x) = 0$ for $|x| < \frak{a}/N$ and the boundary condition $f(x) \to 1$, as $|x| \to \infty$. The problem with this choice is the fact that $f$ has long tails; as a consequence, it is extremely difficult to control the product (\ref{eq:jas}). To make computations possible, we need to cutoff $f$ at some intermediate length scale $\frak{a}/N \ll \ell \ll 1$, requiring that $f (x) = 1$ for $|x| \geq \ell$ (the cutoff can be implemented in different ways; below, we will choose $f$ as the solution of a Neumann problem on the ball $|x| \leq \ell$ and we will keep it constant outside the ball). Choosing $\ell$ small enough (in particular, smaller than the typical distance among particles, which is of the order $N^{-1/3}$), the Jastrow factor becomes more manageable and it is not too difficult to show that its energy matches, to leading order, the ground state energy (\ref{eq:ENlead}). In the thermodynamic limit, this was first verified in \cite{Dy}, using a modification of (\ref{eq:jas}), considering only correlations among neighbouring particles. 

While Jastrow factors can lead to the correct leading order term in the ground state energy, it seems much more difficult to use (\ref{eq:jas}) to obtain an upper bound matching also the second order term on the r.h.s. of (\ref{eq:main}). The point is that the second order corrections are generated by correlations at much larger length scales; to produce the term on the second line of (\ref{eq:main}) we would need to take $\ell$ of order one, making computations very difficult. 

In order to prove Theorem \ref{thm:main}, we will therefore consider a trial state given by the product of a Jastrow factor (\ref{eq:jas}), describing correlations up to a sufficiently small length scale $1/N \ll \ell \ll 1$, and of a wave function $\Phi_N$, constructed through a Bogoliubov transformation, describing correlations on length scales larger than $\ell$. This allows us to combine the nice features of the Jastrow factor (in particular, the fact that it automatically takes care of the hard core condition (\ref{eq:hard})) and of the Bogoliubov transformation (in particular, their (almost) explicit action on creation and annihilation operators, which enables us to insert correlations at large length scales). 

The paper is organised as follows. In Section \ref{sec:trial_state}, we define our trial state $\Psi_N$ as the product of a Jastrow factor and an $N$-particle wave function $\Phi_N$, to be specified later on, and we compute its energy. One of the contributions to the energy of $\Psi_N$ is a three-body term; under certain conditions on $\Phi_N$ (see \eqref{eq:3D-assum}), we show that this term is negligible in Sect. \ref{sec:3body}. In Sect. \ref{sec:reduction} we then prove that  the remaining contributions to the energy can be reduced (again under suitable assumptions on $\Phi_N$; see  (\ref{eq:Delta-phi})) to the expectation of an effective Hamiltonian $H_N^{\mathrm{eff}}$, defined in \eqref{eq:Heff}. Sects. \ref{sec:GN} and \ref{sec:diag} are devoted to the study of $H_N^{\mathrm{eff}}$; the goal is to find $\Phi_N$ so that the expectation of $H_N^{\mathrm{eff}}$ produces the energy on the r.h.s of (\ref{eq:main}), up to negligible errors. Here, we use the approach developed in \cite{BBCS1,BBCS2,BBCS3}. In Sect. \ref{sec:bound}, we show that the chosen wave function $\Phi_N$ satisfies the bounds that were used in Sects. \ref{sec:3body} and \ref{sec:reduction}. Finally, in Sect. \ref{sec:main}, we put all ingredients together to conclude the proof of Theorem \ref{thm:main}. The proof of important properties concerning the solution of the scattering equations is deferred to Appendix \ref{sec:app}.

\section{The Jastrow factor and its energy}\label{sec:trial_state}

As explained in the introduction, our trial state involves a Jastrow factor, to describe short-distance correlations. To define the Jastrow factor, we choose $1/N \ll \ell \ll 1$ and we consider the ground state solution of the Neumann problem
\begin{equation} \label{eq:ev}
\left\{ \begin{array}{ll}
- \D f_\ell(x) =\l_\ell f_{\ell}(x)   \qquad &\text{for \;} \aa/N  \leq |x| \leq \ell \\
\; \; \, \partial_r f_\ell (x) = 0 , \qquad &\text{if \; } |x| = \ell 
\end{array} \right.
\end{equation}
on the ball $B_\ell = \{ x \in \bR^3 : |x| \leq \ell \}$, with the hard-core condition $f_\ell (x) = 0$ for $|x| \leq \frak{a}/N$ and the normalization $f_\ell (x) = 1$ for $|x| = \ell$ (we denote here by $\partial_r$ the radial derivative). We extend $f_\ell$ to $\L$ setting $f_\ell(x)=1$ for $|x|\in \L \backslash B_\ell$. We have 
\be \label{eq:fell}
- \D f_\ell(x) = \l_\ell \chi_\ell(x) f_{\ell}(x) 
\ee
where $\chi_\ell$ denotes the characteristic function of $B_\ell$. The following lemma establishes properties of $\lambda_\ell$, $f_\ell$, of the difference $\o_\ell(x)= 1- f_\ell(x)$ and of its Fourier coefficients 
\[
\widehat \o_\ell(p) =   \int e^{ip \cdot x} \o_\ell(x) dx 
\]
defined for $p \in \Lambda^* = 2\pi \bZ^3$ (since $\omega_\ell$ has compact support inside $[-1/2 ; 1/2]^3$, we can think of the integral as being over $\bR^3$).
\begin{lemma} \label{lm:hardcorescatt}
Let $\lambda_\ell$ denote the ground state eigenvalue appearing in (\ref{eq:ev}). Then 
\be  \label{eq:lambdaell}
\tan \big( \sqrt{\l_\ell}\, (\ell- \aa/N) \big) = \sqrt{\l_\ell} \,\ell \,.
\ee
For $N \ell \to \infty$, we find 
\begin{equation} \label{eq:lambdaell-exp}
\lambda_\ell =\frac{3 \mathfrak{a}}{N \ell^3} \left[ 1+\frac{9}{5}\frac{\mathfrak{a}}{N \ell}%+ \frac{459}{175}\frac{\mathfrak{a}^2}{(N \ell)^2} 
+  \mathcal{O}\Big(\frac{\mathfrak{a}^2}{N^2 \ell^2}\Big) \right] \,.
\end{equation}
The corresponding eigenvector $f_\ell$ is given by 
\begin{equation} \label{eq:fell-x}
f_\ell (x) = \frac{\ell}{|x|} \frac{\sin(\sqrt \l_\ell (|x|-\aa/N))}{\sin(\sqrt \l_\ell (\ell-\aa/N))} 
\end{equation} 
for all $\frak{a}/N \leq |x| \leq \ell$ ($f_\ell (x) = 0$ for $|x| \leq \aa /N$ and $f_\ell (x) = 1$ for $|x| > \ell$). We find 
\be  \label{eq:Vell-zero}
	N  \l_{\ell} \int \chi_{\ell} f^2_{\ell}\, dx = 4 \pi \aa  + \frac{24}{5}\pi \frac{\aa^2}{\ell N} + \cO \Big( \frac{\aa^2}{N^{2}\ell^2} \Big) \, .
\ee
%\be \label{eq:Vell-zero}
%\Big| N\l_\ell \int_\L \chi_\ell (x) f_\ell^2 (x) dx - 4 \pi \aa \Big| \leq \frac{C\aa^2}{N\ell}\,.
%\ee
With the notation $\o_\ell (x) = 1 - f_\ell (x)$, we have $\o_\ell (x) = 0$ for $|x| \geq \ell$ and, for $|x| \leq \ell$, the pointwise bounds  
\begin{equation}\label{eq:w-bds} 0 \leq \omega_\ell (x) \leq  \frac{C \aa}{N|x|}  , \qquad |\nabla \omega_\ell (x)| \leq  \frac{C \aa}{N|x|^2} 
\end{equation} 
for a constant $C>0$. Furthermore, there exists a constant $C>0$ so that
\be \label{eq:normsomega}
\Big| \|\o_\ell\|_1   - \frac {2} 5 \,\pi \mathfrak{a} \, \frac{\ell^2}{N} \Big| \leq C \frac{\aa^2 \ell}{N^2} \ee
and, for all $p \in [1, 3)$ and  $q \in [1, 3/2)$, 
	\begin{equation}  \label{eq:Lp-norms}
	\| \o_\ell \|_{p} \le C \ell^{\frac 3 p-1}N^{-1}\,, \qquad \| \nabla \o_\ell\|_{q} \le C \ell^{\frac 3 q-2}N^{-1}\, .
	\end{equation}
Finally, for $p \in \L^*$, let $\widehat{\o}_p$ denote the Fourier coefficients of $\o_\ell$. Then 
\be \label{eq:omegap}
|\widehat{\o}_\ell(p)| \leq C \min \left \{\frac{\ell^2 } N \,;\, \frac{1}{N |p|^2}\,;\, \frac{1}{|p|^3}\right\}\,.
\ee
\end{lemma}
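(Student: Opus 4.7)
The plan is to first derive an explicit formula for $f_\ell$ by reducing (\ref{eq:ev}) to a radial ODE, and then to extract all the stated bounds either directly from the formula or through a few short integration-by-parts identities.

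For the first step I would substitute $u(r) := r f_\ell(r)$ into (\ref{eq:ev}), which turns the PDE into $-u''(r) = \lambda_\ell u(r)$ on $(\aa/N,\ell)$. The hard-core condition $u(\aa/N) = 0$ forces $u(r) = A \sin(\sqrt{\lambda_\ell}(r-\aa/N))$, while $f_\ell(\ell)=1$ fixes $A$, giving (\ref{eq:fell-x}). Rewriting the Neumann condition $\partial_r f_\ell(\ell) = 0$ as $u'(\ell) = 1$ produces (\ref{eq:lambdaell}) at once. After verifying a priori that $\sqrt{\lambda_\ell}\,\ell \to 0$ in the regime $N\ell\to\infty$ (e.g.\ by testing the variational characterization of $\lambda_\ell$ with a simple radial trial profile), the expansion (\ref{eq:lambdaell-exp}) follows from iteratively inverting $\tan x = x + x^3/3 + O(x^5)$ in powers of $\aa/(N\ell)$.

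The pointwise bounds (\ref{eq:w-bds}) are then read off directly from (\ref{eq:fell-x}): Taylor-expanding the sines yields $f_\ell(r) = (1 - \aa/(Nr))(1 + O(\aa/(N\ell)))$ uniformly for $\aa/N \le r \le \ell$, whence $0 \le \omega_\ell(x) \le C\aa/(N|x|)$; differentiating and arguing similarly gives $|\nabla \omega_\ell(x)| \le C\aa/(N|x|^2)$. The $L^p$ estimates (\ref{eq:Lp-norms}) follow by integrating these bounds in spherical coordinates, with the restrictions $p<3$ and $q<3/2$ being exactly what is needed for local integrability at the origin. The sharper identity (\ref{eq:normsomega}) I would obtain by computing $\int_0^\ell 4\pi r^2 \omega_\ell(r)\,dr$ to leading order in $\aa/(N\ell)$ directly from (\ref{eq:fell-x}), being careful to retain the next-to-leading correction in the expansion of $f_\ell$, since the naive $\aa/(Nr)$ term and a compensating term of order $\aa/(N\ell)$ both contribute to the final coefficient $2\pi/5$. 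For (\ref{eq:Vell-zero}) I would integrate $-\Delta f_\ell = \lambda_\ell \chi_\ell f_\ell$ over the annulus: the boundary term at $|x|=\ell$ vanishes by the Neumann condition, while the one at $|x|=\aa/N$ is explicitly computable from (\ref{eq:fell-x}) and simplifies via (\ref{eq:lambdaell}) to $\lambda_\ell \int \chi_\ell f_\ell = 4\pi(\aa/N)/\cos(\sqrt{\lambda_\ell}(\ell-\aa/N))$. Expanding the cosine and inserting (\ref{eq:lambdaell-exp}) yields the claim with $f_\ell$ in place of $f_\ell^2$; the replacement error $\int \chi_\ell f_\ell(1-f_\ell)$ is controlled using the pointwise bound on $\omega_\ell$ together with $\lambda_\ell \le C\aa/(N\ell^3)$.

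The step I expect to be the main obstacle is the Fourier decay (\ref{eq:omegap}). The first bound $|\widehat{\omega}_\ell(p)| \le C\ell^2/N$ is immediate from $|\widehat{\omega}_\ell(p)|\le \|\omega_\ell\|_1$ and (\ref{eq:normsomega}). For the other two, the idea is to write $|p|^2 \widehat{\omega}_\ell(p) = -\int e^{ipx}\Delta f_\ell(x)\,dx$ and integrate by parts on the torus. Since $f_\ell$ is continuous across $\{|x|=\aa/N\}$ but its radial derivative jumps there (the Neumann condition prevents any analogous jump on $\{|x|=\ell\}$), the distributional Laplacian of $f_\ell$ equals $-\lambda_\ell \chi_\ell f_\ell$ plus a surface measure on $\{|x|=\aa/N\}$ whose density is the jump $\partial_r f_\ell(\aa/N^+)$, which is of order $N/\aa$ by (\ref{eq:fell-x}). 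The bulk contribution, bounded by $\lambda_\ell |B_\ell| \le C/N$, produces the $(N|p|^2)^{-1}$ estimate, while the Fourier transform of the surface measure, of the form $4\pi(\aa/N)\sin(|p|\aa/N)/|p|$, combined with the explicit value of the jump, provides the $|p|^{-3}$ bound. Taking the minimum of the three contributions yields (\ref{eq:omegap}).
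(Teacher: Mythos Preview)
Your plan matches the paper's proof almost exactly for (\ref{eq:lambdaell})--(\ref{eq:Lp-norms}): the paper uses the same radial substitution $f_\ell(x)=m(|x|)/|x|$, the same tangent-equation/Taylor route to (\ref{eq:lambdaell-exp}), and obtains (\ref{eq:w-bds}), (\ref{eq:Vell-zero}), (\ref{eq:normsomega}), (\ref{eq:Lp-norms}) by expanding (\ref{eq:fell-x}) to second order in $\aa/(N\ell)$. Your divergence-theorem argument for (\ref{eq:Vell-zero}) and your remark about the compensating $\aa/(N\ell)$ correction in (\ref{eq:normsomega}) are exactly right.

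For (\ref{eq:omegap}) your decomposition of $|p|^2\widehat\omega_\ell(p)$ into a bulk piece $\lambda_\ell\widehat{\chi_\ell f_\ell}(p)$ and a surface piece coming from the jump of $\partial_r f_\ell$ at $|x|=\aa/N$ is the distributional reformulation of the paper's explicit radial integration (the paper computes $\widehat\omega_p$ in closed form as a combination of $\lambda_\ell/(\lambda_\ell-p^2)$--type terms and reads off all three bounds from that). However, your last paragraph has a genuine gap. The bulk and surface pieces are \emph{summed}, not combined by a minimum, so to get $|\widehat\omega_\ell(p)|\le C|p|^{-3}$ you must show that \emph{both} are $O(|p|^{-1})$. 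Your estimate $|\lambda_\ell\widehat{\chi_\ell f_\ell}(p)|\le\lambda_\ell|B_\ell|\le C/N$ on the bulk is too crude: it only yields $|\widehat\omega_\ell(p)|\le C/(N|p|^2)$, which for $|p|\gg N$ is \emph{larger} than $C/|p|^3$ and hence does not recover the third branch of the minimum. The fix is one line: since $\chi_\ell f_\ell$ is radial and bounded by $1$,
\[
\big|\widehat{\chi_\ell f_\ell}(p)\big|
=\frac{4\pi}{|p|}\Big|\int_{\aa/N}^\ell r\,f_\ell(r)\sin(|p|r)\,dr\Big|
\le \frac{2\pi\ell^2}{|p|}\,,
\]
so the bulk contributes at most $C\lambda_\ell\ell^2/|p|\le C/(N\ell|p|)\le C/|p|$ to $|p|^2\widehat\omega_\ell(p)$. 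With this refinement both pieces satisfy all three bounds in (\ref{eq:omegap}) and the argument closes.
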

We defer the proof of Lemma \ref{lm:hardcorescatt} to Appendix \ref{App:scatt}.

With the solution $f_\ell$ of the Neumann problem (\ref{eq:ev}), we consider trial states of the form  
\begin{equation}\label{eq:trial}
\Psi_N (x_1, \dots , x_N) = \Phi_N (x_1, \dots , x_N) \prod_{i<j}^N f_\ell (x_i - x_j) 
\end{equation} 
for $\Phi_N \in L^2_s (\Lambda^N)$ to be specified later on. Again, $x_i -x_j$ should be interpreted as difference on the torus (or $f_\ell$ should be replaced with its periodic extension). Note that a similar 
trial state has been used in \cite{LSY}. However, for us the wave function $\Phi_N$ serves a completely different purpose (in our analysis, $\Phi_N$ carries correlations on length scales larger than $\ell$; in \cite{LSY}, on the other hand, it was a product state, describing the condensate trapped in an external potential).

We compute
\[ \begin{split} 
\frac{-\Delta_{x_j} \Psi_N (x_1, \dots , x_N) }{\prod_{i<j}^N f_\ell (x_i - x_j) } = \; &\Big[ -\Delta_{x_j}  - 2 \sum_{i \not = j}^N \frac{\nabla f_\ell (x_j - x_i)}{f_\ell (x_j - x_i)} \cdot \nabla_{x_j} \Big] \Phi_N (x_1, \dots , x_N) \\ &+  \sum_{i \not = j}^N \frac{-\Delta f_\ell (x_j - x_i)}{f_\ell (x_j - x_i)}  \Phi_N (x_1, \dots , x_N) \\ &- \sum^N_{i,m,j} \frac{\nabla f_\ell (x_j - x_i)}{f_\ell (x_j. -x_i)} \cdot \frac{\nabla f_\ell (x_j - x_m)}{f_\ell (x_j - x_m)}  \Phi_N (x_1, \dots , x_N)
\end{split} \]
where the sum in the last term runs over $i,j, m \in \{1, \dots , N \}$ all different. Noticing that the operator on the first line is the Laplacian with respect to the measure defined by (the square of) the Jastrow factor, and using (\ref{eq:fell}) in the second line, we conclude that
\begin{equation}\label{eq:en-psi} \begin{split}  \langle \Psi_N , &\sum_{j=1}^N -\Delta_{x_j} \Psi_N \rangle \\ = \; &\sum_{j=1}^N \int |\nabla_{x_j} \Phi_N ({\bf x})|^2    \prod_{n<m}^N f^2_\ell (x_n - x_m)  d{\bf x} \\ &+ \sum_{i<j}^N 2\lambda_\ell \int \chi_\ell (x_i - x_j) |\Phi_N ({\bf x})|^2 \prod_{n<m}^N f^2_\ell (x_n - x_n	) d{\bf x} \\ &- \sum_{i,j,k} \int \frac{\nabla f_\ell (x_j - x_i)}{f_\ell (x_j - x_i)} \cdot \frac{\nabla f_\ell (x_j - x_k)}{f_\ell (x_j - x_k)} |\Phi_N ({\bf x})|^2 \prod_{n<m}^N f^2_\ell (x_n - x_m) d{\bf x} 
\end{split} \end{equation} 
where we introduced the notation ${\bf x} = (x_1, \dots ,x_N) \in \L^N$. 

\section{Estimating the three-body term} \label{sec:3body}

In the next proposition, we control the last term on the r.h.s. of (\ref{eq:en-psi}). To this end, we need to assume some regularity on the $N$-particle wave function $\Phi_N$, appearing in (\ref{eq:trial}) (we will later make sure that our choice of $\Phi_N$ satisfies these estimates).
\begin{prop} \label{lm:3body} 
Let $N^{-1+\nu} \leq \ell \leq N^{-1/2-\nu}$, for some $\nu > 0$. Suppose $\Phi_N \in L^2_s (\Lambda^N)$ is such that 
\begin{equation}\label{eq:3D-assum} \langle \Phi_N, (1-\Delta_{x_1}) (1- \Delta_{x_2})(1-\Delta_{x_3}) \Phi_N \rangle \leq C \left(1 + \frac{1}{N^2 \ell^3} \right) \end{equation} 
and define $\Psi_N$ as in (\ref{eq:trial}). Then, for every $\delta > 0$, there exists $C > 0$ such that 
\begin{equation}\label{eq:3b-claim} \begin{split}  \Big|  & \frac{1}{\| \Psi_N \|^2} \sum_{i,j,k} \int \frac{\nabla f_\ell (x_j - x_i)}{f_\ell (x_j - x_i)} \cdot \frac{\nabla f_\ell (x_j - x_k)}{f_\ell (x_j - x_k)} |\Phi_N ({\bf x})|^2  \prod_{n<m}^N f^2_\ell (x_n - x_m) d{\bf x} \Big| \\ &\hspace{9cm} \leq C N \ell^{2-\delta} \left(1 + \frac{1}{N^2 \ell^3} \right)\,. \end{split}  \end{equation} 
\end{prop}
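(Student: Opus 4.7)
The plan is to reduce the sum by the permutation symmetry of $\Phi_N$ and the Jastrow product to a single representative integral, multiplied by $N(N-1)(N-2)\le N^3$. After this reduction, the task is to control
\[
\int \frac{|\nabla f_\ell(x_2-x_1)|}{f_\ell(x_2-x_1)}\cdot\frac{|\nabla f_\ell(x_2-x_3)|}{f_\ell(x_2-x_3)}\, |\Phi_N|^2 \prod_{i<j} f_\ell^2(x_i-x_j)\, d{\bf x}.
\]
The first key step is to absorb two factors of $f_\ell^2$ from the Jastrow measure into the two singular denominators, exploiting the identity $f_\ell(y)\nabla f_\ell(y) = -(1-\omega_\ell(y))\nabla\omega_\ell(y)$ together with $0\le f_\ell\le 1$, which yields the pointwise bound $|f_\ell\nabla f_\ell|\le|\nabla\omega_\ell|$. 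This removes the apparent singularity of $\nabla f_\ell/f_\ell$ at the hard-core boundary $|y|=\mathfrak{a}/N$. Bounding the remaining factors $\prod_{(i,j)\neq(1,2),(2,3)} f_\ell^2\le 1$, the problem reduces to estimating
\[
I := \int |\nabla\omega_\ell(x_2-x_1)|\,|\nabla\omega_\ell(x_2-x_3)|\,|\Phi_N|^2\,d{\bf x}
\]
together with a lower bound on $\|\Psi_N\|^2$.

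The second step is to show $\|\Psi_N\|^2\ge c\|\Phi_N\|^2$ for some fixed $c>0$. Using the pointwise inequality $\prod_{i<j}f_\ell^2(x_i-x_j)\ge 1-2\sum_{i<j}\omega_\ell(x_i-x_j)$ valid when the right-hand side is positive, this reduces to estimating $\sum_{i<j}\int \omega_\ell(x_i-x_j)|\Phi_N|^2 d{\bf x}$. Each summand is controlled by H\"older in $x_i$, combining $\|\omega_\ell\|_{3/2}\lesssim \ell/N$ from \eqref{eq:Lp-norms} with the Sobolev embedding $H^1_{x_i}\hookrightarrow L^6_{x_i}$ applied to $\Phi_N$; summation over the $\sim N^2$ pairs then gives a quantity which, by the upper constraint $\ell\le N^{-1/2-\nu}$, is small and so $\|\Psi_N\|^2\ge (1-o(1))\|\Phi_N\|^2$.

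The third and most delicate step is the bound on $I$. The target scaling $N\ell^{2-\delta}(1+1/(N^2\ell^3))$ after multiplication by $N^3$ corresponds to $I\lesssim \ell^{2-\delta}(1+1/(N^2\ell^3))/N^2$, which formally matches $\|\nabla\omega_\ell\|_1^2\sim((\mathfrak{a}/N)\ell)^2$, i.e.\ the value one would guess from a factorized (mean-field) computation. To realize this, I would apply H\"older's inequality in the pair $(x_1,x_3)$ at fixed $(x_2,x_4,\dots,x_N)$, using $\|\nabla\omega_\ell\|_p\lesssim \ell^{3/p-2}/N$ with $p=1+\delta'$ for small $\delta'>0$; this gives $\|\nabla\omega_\ell\|_p^2\sim \ell^{2-O(\delta')}/N^2$. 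The dual exponent $p'$ is large, so in each of $x_1,x_3$ one needs $\Phi_N$ in a Lebesgue space close to $L^\infty$. This is where the full mixed-regularity assumption \eqref{eq:3D-assum} enters: iterating $H^1\hookrightarrow L^6$ in $x_1$ and $x_3$ separately and controlling the mixed derivative $\nabla_{x_1}\nabla_{x_3}\Phi_N$ delivers an anisotropic Sobolev bound $\|\Phi_N\|_{L^6_{x_1,x_3}}^2\lesssim 1+1/(N^2\ell^3)$ after integrating out the remaining variables, which produces precisely the factor $(1+1/(N^2\ell^3))$ on the right-hand side of \eqref{eq:3b-claim}.

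The main obstacle is this last step. The $L^p$-estimates of Lemma~\ref{lm:hardcorescatt} deteriorate sharply as $p\to 3/2$, while the anisotropic Sobolev embedding into $L^6$ in six dimensions lives exactly at the borderline conjugate exponent. To recover the scaling $\ell^{2-\delta}$ (instead of the much weaker $\ell^{\delta}$ that a naive Hölder with $p$ just below $3/2$ would yield) one has to choose $p$ close to $1$ and compensate the loss in the Sobolev embedding by giving up an arbitrarily small power in $\ell$; the parameter $\delta>0$ in the statement precisely tracks this arbitrary small loss, and taking $\delta\to 0$ would force a logarithmic divergence from the borderline case of \eqref{eq:Lp-norms}.
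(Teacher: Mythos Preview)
Your overall structure---symmetrize, bound numerator and denominator separately---matches the paper's, and your observation that $|f_\ell\nabla f_\ell|\le|\nabla\omega_\ell|$ is correct (the paper simply uses $0\le f_\ell\le 1$ and $\nabla f_\ell=-\nabla\omega_\ell$). But the analytic core of your step~4 does not close.

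The difficulty is this: with the mixed regularity assumption \eqref{eq:3D-assum} you have only $H^1$ in each of the three variables, and the embedding $H^1(\Lambda)\hookrightarrow L^6(\Lambda)$ is sharp on the unit torus. Your H\"older scheme in $(x_1,x_3)$ therefore forces $p'\le 6$, i.e.\ $p\ge 6/5$, and at $p=6/5$ one has $\|\nabla\omega_\ell\|_{6/5}^2\lesssim \ell/N^2$ by \eqref{eq:Lp-norms}, giving after multiplication by $N^3$ a bound of order $N\ell\,(1+1/(N^2\ell^3))$. Since $N\ell\ge N^{\nu}$ in the admissible range, this diverges and is nowhere near $N\ell^{2-\delta}$. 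There is no way to ``compensate the loss in the Sobolev embedding by giving up a small power of $\ell$'': taking $p$ close to $1$ requires $\Phi_N\in L^{p'}_{x_1,x_3}$ with $p'\to\infty$, which \eqref{eq:3D-assum} simply does not provide. The paper sidesteps position-space H\"older altogether and proves, in Fourier space, the operator inequality \eqref{eq:Delta3},
\[
\pm\, W(x-y)\,W(x-z)\ \le\ C\,\|W\|_r^2\,(1-\Delta_x)(1-\Delta_y)(1-\Delta_z)\qquad\text{for every }r>1.
\]
The key point is that $W(x-y)W(x-z)$ depends only on differences, which imposes a momentum constraint $q_1+q_2+q_3$ fixed; under this constraint the sum $\sum_{q_2,q_3}(1+|p-q_2-q_3|^2)^{-r}(1+|q_2|^2)^{-r}(1+|q_3|^2)^{-r}$ converges uniformly in $p$ already for $r>1$, which is strictly stronger than what tensorised Sobolev embeddings give. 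One can then take $r\downarrow 1$ to extract $\|\nabla\omega_\ell\|_r^2\lesssim \ell^{2-\delta}/N^2$ while still paying only the three full Laplacians available in \eqref{eq:3D-assum}.

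A smaller issue, easily repaired: your denominator estimate via $\|\omega_\ell\|_{3/2}\lesssim \ell/N$ and a single Laplacian also yields $N\ell\,(1+1/(N^2\ell^3))$, which is large. One should instead use $\|u_\ell\|_1\lesssim \ell^2/N$ together with a two-variable operator bound such as \eqref{eq:W1-L1}; this gives $N\ell^2+1/(N\ell)=o(1)$ in the stated range, so $\|\Psi_N\|^2\ge 1/2$.
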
 

To prove this proposition, we will use the following lemma.  
\begin{lemma} \label{lemma:sobolev_bounds}
Let $W: \bR^3 \to \bR$, with $\mathrm{ supp }\, W \subset [-1/2 ; 1/2]^3$. Then $W$ can be extended to a periodic function (i.e. a function on the torus $\Lambda$) satisfying, on $L^2 (\Lambda) \otimes L^2 (\Lambda)$, the operator inequalities 
\begin{equation*}\label{eq:WL32} 
\begin{split} 
\pm W(x-y) \le\;& C \|W\|_{3/2} \left( 1-\Delta_x \right) \\
\pm W(x-y) \le\;& C \|W\|_{2} \left( 1-\Delta_x \right)^{3/4} 
\end{split} \end{equation*}
for a constant $C>0$, independent on $W$. Moreover, for every $\delta \in [0,1/2)$ there exists $C > 0$ such that
\be \label{eq:W1-L1}	 
\pm W(x-y) \le\; C \|W\|_{1} \Big\{ 1 + \left(-\Delta_x\right)^{3/4+\delta/2} \left(-\Delta_y\right)^{3/4+\delta/2} \Big\} \,.
\ee
Additionally, for any $r > 1$, there exists $C > 0$ such that 
\be \begin{split}
\pm W (x-y) W (x-z) \le\;& C \|W\|^2_r \, \left( 1-\Delta_x \right) \left( 1-\Delta_y\right) \left( 1-\Delta_z \right)
\end{split}\label{eq:Delta3}\,.
\ee
\end{lemma}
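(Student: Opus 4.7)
The plan is to prove each of the four inequalities by combining H\"older's inequality with three-dimensional Sobolev embeddings on $\Lambda$, exploiting that $W(x-y)$ and $W(x-y) W(x-z)$ are translation-invariant multiplication operators.

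For the first two estimates, I would apply H\"older's inequality in $x$ at fixed $y$, obtaining $\int W(x-y) |f(x,y)|^2\, dx \le \|W\|_p \|f(\cdot, y)\|^2_{L^{2p'}_x}$, and then invoke the 3D Sobolev embedding $H^{3/(2p)}(\Lambda) \hookrightarrow L^{2p'}(\Lambda)$, which gives $\|f(\cdot, y)\|^2_{L^{2p'}_x} \le C\langle f(\cdot, y), (1-\Delta_x)^{3/(2p)} f(\cdot, y)\rangle_{L^2_x}$. The choice $p = 3/2$ produces the first bound, with exponent $1$ on $(1-\Delta_x)$, while $p = 2$ produces the second, with exponent $3/4$; integrating in $y$ yields the claimed operator inequalities.

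For the third estimate \eqref{eq:W1-L1}, the plan is to perform the change of variables $u = x - y$, $v = (x+y)/2$, rewriting $\int W(x-y) |f(x,y)|^2\, dxdy = \int W(u) h(u)\, du$ with $h(u) = \int |f(v+u/2, v-u/2)|^2\, dv$. The estimate $\int W h \le \|W\|_1 \|h\|_\infty$ reduces matters to an $L^\infty$ bound on $h$. A Fourier expansion on $\Lambda \times \Lambda$ yields $h(u) = \sum_m \bigl|\sum_{p+q=m} \widehat f(p,q) e^{i(p-q)u/2}\bigr|^2$, and a weighted Cauchy--Schwarz with weight $b(p,q) = 1 + |p|^{2s} |q|^{2s}$ and $s = 3/4 + \delta/2$ produces $\|h\|_\infty \le C\langle f, (1 + (-\Delta_x)^s (-\Delta_y)^s) f\rangle$, provided the lattice sum $\sum_{p+q=m} 1/b(p,q)$ is bounded uniformly in $m$. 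A dyadic splitting in $\bZ^3$ shows this uniform bound holds precisely when $s > 3/4$, which is the origin of the constraint $\delta \in [0, 1/2)$; verifying this uniform sum bound is the technical heart of the argument.

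For the fourth estimate \eqref{eq:Delta3}, the plan is to use the pointwise AM--GM bound $|W(x-y) W(x-z)| \le \tfrac{1}{2}(W^2(x-y) + W^2(x-z))$ to reduce to a single-factor estimate. Applying the argument for \eqref{eq:W1-L1} to $W^2 \in L^1$, with the choice $s = 1$ that still satisfies the condition $s > 3/4$, gives $\pm W^2(x-y) \le C \|W\|_2^2 (1-\Delta_x)(1-\Delta_y)$, and symmetrically with $y$ replaced by $z$. Combining the two bounds and using the elementary operator inequality $(1-\Delta_y) + (1-\Delta_z) \le 2(1-\Delta_y)(1-\Delta_z)$ (both factors being bounded below by $1$) yields $\pm W(x-y)W(x-z) \le C\|W\|_2^2(1-\Delta_x)(1-\Delta_y)(1-\Delta_z)$, which establishes the claim for $r \ge 2$ via the inclusion $\|W\|_2 \le \|W\|_r$ on the unit torus. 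To cover the full range $r > 1$, one replaces AM--GM by the Young-type inequality $|ab| \le |a|^p/p + |b|^q/q$ with $1/p + 1/q = 1$ and $p, q$ chosen depending on $r$, and applies \eqref{eq:W1-L1} to $|W|^p$ and $|W|^q$; the main subtlety is tracking how the resulting constant depends on $r$ and deteriorates as $r \to 1^+$, consistent with the statement that $C$ is allowed to depend on $r$.
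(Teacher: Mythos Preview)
Your treatment of the first two inequalities and of \eqref{eq:W1-L1} is correct and, modulo packaging, identical to the paper's. The paper works directly in momentum space: writing $\langle\varphi,W(x-y)\varphi\rangle$ as a sum over Fourier modes with the constraint $p_1+p_2=q_1+q_2$, applying Cauchy--Schwarz with weight $1+|p|^{3/2+\delta}|q|^{3/2+\delta}$ on each fibre, and observing that $\sup_m\sum_{p}\bigl(1+|p|^{3/2+\delta}|m-p|^{3/2+\delta}\bigr)^{-1}<\infty$ for $\delta>0$. Your position-space change of variables followed by Fourier expansion reproduces exactly this computation, with the same lattice sum as the crux.

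The gap is in \eqref{eq:Delta3} for $1<r<2$, which is precisely the range the paper needs in Prop.~\ref{lm:3body} (where $r$ is taken close to $1$). Your AM--GM step gives $\|W\|_2^2$, which dominates $\|W\|_r^2$ only when $r\ge 2$. The Young-inequality extension you propose cannot close this: writing $|W(x-y)W(x-z)|\le |W(x-y)|^p/p+|W(x-z)|^{p'}/p'$ and then applying \eqref{eq:W1-L1} to each piece requires $|W|^p,|W|^{p'}\in L^1$, i.e.\ $p,p'\le r$, whereas $1/p+1/p'=1$ forces $\max(p,p')\ge 2$. For $W\in L^r\setminus L^2$ no choice of exponents yields a finite bound. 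More conceptually, any pointwise splitting of the product decouples the $y$- and $z$-variables and discards the three-body momentum constraint that makes the estimate work for small $r$.

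The paper handles \eqref{eq:Delta3} by a direct three-body momentum-space argument parallel to its proof of \eqref{eq:W1-L1}: after Cauchy--Schwarz with weight $(1+|q_1|^2)(1+|q_2|^2)(1+|q_3|^2)$ on the fibre $q_1+q_2+q_3=m$, one applies H\"older in $(q_2,q_3)$ with exponents $(r',r',r)$, extracting $\|\widehat W\|_{r'}^2\le C\|W\|_r^2$ (Hausdorff--Young, valid for $1\le r\le 2$) and leaving the uniform bound
\[
\sup_{m}\sum_{q_2,q_3\in\Lambda^*}\frac{1}{(1+|m-q_2-q_3|^2)^r(1+|q_2|^2)^r(1+|q_3|^2)^r}\le C,
\]
which holds for every $r>1$ by iterated Young on the lattice. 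Your AM--GM route is a legitimate shortcut for $r\ge 2$, but for the full statement you need this genuinely three-body estimate.
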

\begin{proof} 
The proof is an adaptation to the torus of arguments that are, by now, standard on $\bR^3$. For example, (\ref{eq:W1-L1}) follows by writing, in momentum space
\[ \begin{split} 
\big| \langle \ph, W(x-y) \ph \rangle \big|  &= \Big| \sum_{p_1, p_2, q_1 , q_2 \in \L^*} \widehat{W} (p_1 - q_1) \widehat{\ph} (p_1, p_2) \overline{\widehat{\ph} (q_1, q_2)} \, \delta_{p_1 + p_2, q_1 + q_2} \Big|  \\ &\leq C \| \widehat{W} \|_\infty \sup_{p \in \L^*} \sum_{q \in \L^*_+} \frac{1}{1+|q|^{3/2+ \delta} |p-q|^{3/2+\delta}} \\ &\hspace{3cm} 
\times \big\langle \ph, \big[ 1 + (-\Delta_x)^{3/4+\delta/2} (-\Delta_y)^{3/4+\delta/2}\big]  \ph \big\rangle \\ &\leq C \| W \|_1 \big\langle \ph, \big[ 1 + (-\Delta_x)^{3/4+\delta/2} (-\Delta_y)^{3/4+\delta/2} \big] \ph \big\rangle\,. \end{split} \]
To show (\ref{eq:Delta3}), we proceed similarly, writing 
\[ \begin{split} 
\big| \langle \ph, &W(x-y) W(x-z) \ph \rangle \big| \\ = \; & \Big|\sum \widehat{W} (p_2 - q_2) \widehat{W} (p_3 - q_3) \widehat{\ph} (p_1, p_2,p_3) \overline{\widehat{\ph} (q_1, q_2,q_3)} \, \delta_{p_1 + p_2+p_3 , q_1 + q_2+q_3} \Big| \\
\leq \; &C \sup_p \sum_{q_2, q_3 \in \L^*} \frac{|\widehat{W} (p_2 - q_2)| | \widehat{W} (p_3 - q_3)|}{(1+|p-q_2 - q_3|^2)(1+|q_2|^2) (1+|q_3|^2)}  \\ &\hspace{5cm} \times \langle \ph, (1-\Delta_x) ( 1-\Delta_y) ( 1- \Delta_z) \ph \rangle \\ 
\leq \; &C \| \widehat{W} \|_{r'}^2   \langle \ph, (1-\Delta_x) ( 1-\Delta_y) ( 1- \Delta_z) \ph \rangle \end{split} \]
where $1/r+ 1/r' = 1$ and where we used the bound
\[    \sum_{q_2, q_3 \in \L^*} \frac{1}{(1+|p-q_2 - q_3|^2)^r (1+|q_2|^2)^r (1+|q_3|^2)^r}  \leq C  \]
uniformly in $p$, for any $r > 1$. 
\end{proof}

We are now ready to show Proposition \ref{lm:3body}.
\begin{proof}[Proof of Prop. \ref{lm:3body}] Using the permutation symmetry, $0 \leq f_\ell \leq 1$ and then Lemma \ref{lemma:sobolev_bounds} (in particular, (\ref{eq:Delta3})), the bound (\ref{eq:Lp-norms}) and the assumption (\ref{eq:3D-assum}), we can estimate the numerator in (\ref{eq:3b-claim}) by 
\begin{equation}\label{eq:num-bd} \begin{split} \Big| \sum_{i,j,k} \int \frac{\nabla f_\ell (x_j - x_i)}{f_\ell (x_j - x_i)} &\cdot \frac{\nabla f_\ell (x_j - x_k)}{f_\ell (x_j - x_k)} |\Phi_N ({\bf x})|^2  \prod_{n<m}^N f^2_\ell (x_n - x_m) d{\bf x} \Big| \\ &\leq C N^3 \int |\nabla f_\ell (x_1 - x_2)||\nabla f_\ell (x_1 - x_3)| |\Phi_N  ({\bf x})|^2 d{\bf x} \\ &\leq C N^3 \| \nabla f_\ell \|_r^2 \langle \Phi_N , (1- \Delta_{x_1}) (1-\Delta_{x_2)} (1-\Delta_{x_3}) \Phi_N \rangle \\ &\leq C N \ell^{\frac{6}{r} -4} \Big( 1 + \frac{1}{N^2 \ell^3} \Big) \end{split} \end{equation} 
for any $r > 1$. As for the denominator in (\ref{eq:3b-claim}), we write $u_\ell = 1- f_\ell^2 = 2\o_\ell - \o_\ell^2$, with $\o_\ell$ defined after (\ref{eq:fell}), and we bound (see (\ref{eq:up-jas}) below for a justification of this inequality) 
\[ \prod_{n<m}^N f_\ell^2 (x_n - x_m) \geq 1 - \sum_{n<m}^N u_\ell (x_n - x_m)\,. \]
Using $\| \Phi_N \| = 1$, Lemma \ref{lemma:sobolev_bounds} (in particular, (\ref{eq:W1-L1})), the bound (\ref{eq:Lp-norms}) and again the assumption (\ref{eq:3D-assum}), we arrive at 
\[ \begin{split} \int |\Phi_N ({\bf x})|^2 \prod_{n<m}^N f_\ell^2 (x_n - x_m) d{\bf x} &\geq 1 - \sum_{n<m}^N \int |\Phi_N ({\bf x})|^2 u_\ell (x_n - x_m) d{\bf x} \\ &\geq 1 - C N^2 \| u_\ell \|_1 \langle \Phi_N, (1- \Delta_{x_1}) (1-\Delta_{x_2}) \Phi_N \rangle \\ &\geq 1 - C N \ell^2 \Big( 1 + \frac{C}{N^2 \ell^3} \Big) \geq 1 - C N \ell^2 - \frac{C}{N\ell} \geq 1/2 \end{split} \]
for $N^{-1} \ll \ell \ll N^{-1/2}$.  Combining this estimate with (\ref{eq:num-bd}) and choosing $r > 1$ so that $6/r - 4 > 2 - \delta$, we obtain the desired bound. 
\end{proof}

\section{Reduction to an effective Hamiltonian} \label{sec:reduction}

Let us introduce the notation 
\begin{equation} \label{eq:Ekinpot} \begin{split} E_\text{kin} (\Phi_N) &= \sum_{j=1}^N \int |\nabla_{x_j} \Phi_N (x_1 , \dots , x_N)|^2 \prod_{n<m}^N f_\ell^2 (x_n - x_m) dx_1 \dots dx_N \\ 
E_\text{pot} (\Phi_N) &= \sum_{i<j}^N 2\lambda_\ell \int \chi_\ell (x_i - x_j) |\Phi_N (x_1, \dots , x_N)|^2 \prod_{n<m}^N f_\ell^2 (x_n -x_m) dx_1 \dots dx_N\,. \end{split} \end{equation} 
It follows from (\ref{eq:en-psi}) and Prop. \ref{lm:3body} that 
\begin{equation}\label{eq:exp-Phi} \frac{1}{\| \Psi_N \|^2}  \langle \Psi_N , \sum_{j=1}^N -\Delta_{x_j} \Psi_N \rangle = \frac{1}{\| \Psi_N \|^2} \big[ E_\text{kin} (\Phi_N) + E_\text{pot} (\Phi_N) \big] + \cE \end{equation} 
where $\pm \cE \leq C N \ell^{2-\delta} (1+ 1/ (N^2\ell^3))$, provided $\Phi_N$ satisfies (\ref{eq:3D-assum}). 

The goal of this subsection is to rewrite the main term on the r.h.s. of (\ref{eq:exp-Phi}) as the expectation, in the state $\Phi_N \in L^2_s (\Lambda^N)$, of an effective $N$-particle Hamiltonian having the form 
\begin{equation}\label{eq:Heff}
H_N^\text{eff} = \sum_{j=1}^N - \Delta_{x_j} + 2 \sum_{i<j}^N \nabla_{x_j} \cdot u_\ell (x_i - x_j) \nabla_{x_j} + 2 \sum_{i<j}^N \lambda_\ell \chi_\ell (x_i - x_j) f_\ell^2 (x_i - x_j) 
\end{equation} 
where $u_\ell = 1 - f_\ell^2$. To achieve this goal, we will make use of the following regularity bounds on the wave function $\Phi_N$ (when we will define $\Phi_N$ in the next sections, we will prove that it satisfies these estimates):  
\begin{equation}
\label{eq:Delta-phi} 
\begin{split} 
\langle \Phi_N, (-\Delta_{x_1}) \Phi_N \rangle &\leq \frac{C}{N\ell}, \\ \langle \Phi_N, (-\Delta_{x_1})(-\Delta_{x_2}) \Phi_N \rangle &\leq \frac{C}{N^2 \ell^3} , \\ \langle \Phi_N, (-\Delta_{x_1})(- \Delta_{x_2})(- \Delta_{x_3})  \Phi_N \rangle &\leq \frac{C}{N^3 \ell^4} \\
\langle \Phi_N, (-\Delta_{x_1})(- \Delta_{x_2})(- \Delta_{x_3}) (-\Delta_{x_4})  \Phi_N \rangle &\leq \frac{C}{N^4 \ell^6} \\
\langle \Phi_N, (-\Delta_{x_1})^{3/4+\delta}  (-\Delta_{x_2})^{3/4+\delta} \dots (-\Delta_{x_n})^{3/4+ \delta} \Phi_N \rangle &\leq \frac{C}{N^n \ell^{\alpha_n}} \\
\langle \Phi_N, (-\Delta_{x_1}) (- \Delta_{x_2})^{3/4+\delta}  (-\Delta_{x_3})^{3/4+\delta}  \dots (-\Delta_{x_n})^{3/4+\delta} \Phi_N \rangle &\leq \frac{C}{N^n \ell^{\beta_n}} \\
\end{split} \end{equation} 
for all $n \leq 6$ and $\delta > 0$ small enough and for sequences $\alpha_n, \beta_n$ defined by $\alpha_n = (7/6+\delta) n -(4/9)(1- (-1/2)^n)$ and $\beta_n = \alpha_n + 1/2 -\delta$. In applications (in particular, in Prop.~\ref{prop:eff} below) we will only need the last two bounds in (\ref{eq:Delta-phi}) for $n = 2,4,6$ and, respectively, for $n = 3,4,5$. The relevant values of $\alpha_n, \beta_n$ are given by: $\alpha_2 = 2+2\delta$, $\alpha_4 = 17/4+4\delta$, $\alpha_6 = 105/16+6\delta$, $\beta_3 = 7/2+2\delta$, $\beta_4= 19/4+3\delta$, $\beta_5 =47/8+4\delta$.  
 
\begin{prop} \label{prop:eff} 
Consider a sequence $\Phi_N \in L^2_s (\Lambda^N)$ of normalized wave functions, satisfying  the bounds (\ref{eq:Delta-phi}) and such that $\langle \Phi_N, H_N^\text{eff} \Phi_N \rangle \leq 4 \pi \frak{a} N + C$, for a constant $C > 0$ (independent of $N$), and for all $N$ large enough. Suppose $N^{-1 + \nu} \leq \ell \leq N^{-3/4-\nu}$, for some $\nu > 0$. Then, there exists $C, \eps> 0$ such that 
\begin{equation} \label{eq:red-eff} 
\begin{split} &\frac{1}{\| \Psi_N \|^2} \big[ E_{\rm{kin}} (\Phi_N) + E_{\rm{pot}} (\Phi_N) \big] \leq \langle \Phi_N , H_N^\text{eff} \Phi_N \rangle  - \frac{N(N-1)}{2}\\
	&\qquad\times \Big\langle \Phi_N , \Big\{ \big[ H_{N-2}^\text{eff} - 4 \pi \frak{a} N \big] \otimes u_{\ell} (x_{N-1} - x_N) \Big\} \Phi_N \Big\rangle + C N^{-\eps} \,.\end{split}  \end{equation}
\end{prop}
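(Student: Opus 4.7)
\emph{Proof plan.} The strategy is to substitute the identity $F(\mathbf{x}) := \prod_{i<j} f_\ell^2(x_i-x_j) = \prod_{i<j}(1-u_\ell(x_i-x_j))$ into the definitions \eqref{eq:Ekinpot} and to expand $1/\|\Psi_N\|^2$, keeping track of the terms that reproduce $\langle\Phi_N, H_N^{\rm eff}\Phi_N\rangle$ and $-\tfrac{N(N-1)}{2}\langle\Phi_N, (H_{N-2}^{\rm eff}-4\pi\mathfrak{a}N)\otimes u_\ell(x_{N-1}-x_N)\Phi_N\rangle$, showing that every remainder is $O(N^{-\varepsilon})$. The two driving pointwise inequalities (valid for $a_k\in[0,1]$) are the Weierstrass bound $\prod_k(1-a_k)\geq 1-\sum_k a_k$, applied to $\|\Psi_N\|^2$ to produce an upper bound on $1/\|\Psi_N\|^2$, and the refinement $\prod_k(1-a_k)\leq e^{-\sum_k a_k}\leq 1-\sum_k a_k+\tfrac{1}{2}(\sum_k a_k)^2$, applied to $E_{\rm kin}(\Phi_N)$ and $E_{\rm pot}(\Phi_N)$ (whose integrands are non-negative) to produce upper bounds.

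Writing $F=1-U+R$ with $U:=\sum_{i<j}u_\ell(x_i-x_j)$, the constant $1$ reproduces the kinetic and potential parts of $\langle\Phi_N,H_N^{\rm eff}\Phi_N\rangle$. For the linear-in-$U$ correction to $E_{\rm kin}$, I split the double sum over the derivative index $j$ and the pair $\{k,l\}$ of $u_\ell(x_k-x_l)$ according to whether $j\in\{k,l\}$. The case $j\in\{k,l\}$, after using permutation symmetry of $\Phi_N$ to equate $\int u_\ell(x_k-x_l)|\nabla_{x_k}\Phi_N|^2=\int u_\ell(x_k-x_l)|\nabla_{x_l}\Phi_N|^2$ and then integrating by parts, rebuilds exactly the middle term $\langle\Phi_N,2\sum_{i<j}\nabla_{x_j}\cdot u_\ell(x_i-x_j)\nabla_{x_j}\Phi_N\rangle$ of $H_N^{\rm eff}$; the case $j\notin\{k,l\}$ collapses, by relabeling particles to place $(k,l)=(N-1,N)$, into $-\tfrac{N(N-1)}{2}\langle\Phi_N,(\sum_{j=1}^{N-2}(-\Delta_{x_j}))\otimes u_\ell(x_{N-1}-x_N)\Phi_N\rangle$, i.e.\ the kinetic part of the announced correction. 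For the linear-in-$U$ correction to $E_{\rm pot}$, I factor $F=f_\ell^2(x_i-x_j)F_{(ij)}$ and expand $F_{(ij)}$: the "disjoint pairs" contribution yields, by the same symmetry trick, $-\tfrac{N(N-1)}{2}\langle\Phi_N,H_{N-2}^{\rm eff,pot}\otimes u_\ell(x_{N-1}-x_N)\Phi_N\rangle$, while the "share one index" contribution is a genuine three-body error controlled via \eqref{eq:W1-L1}, \eqref{eq:Vell-zero}--\eqref{eq:normsomega} and the regularity \eqref{eq:Delta-phi}.

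For the norm factor, $F\geq 1-U$ gives $\|\Psi_N\|^2\geq 1-\alpha_1$ with $\alpha_1:=\tfrac{N(N-1)}{2}\langle\Phi_N,u_\ell(x_{N-1}-x_N)\Phi_N\rangle=O(N\|u_\ell\|_1)=O(N\ell^2)=o(1)$ by \eqref{eq:normsomega}, so $1/\|\Psi_N\|^2\leq 1+\alpha_1+2\alpha_1^2$. Multiplying the upper bound for $E_{\rm kin}+E_{\rm pot}$ by this expansion, the product $\alpha_1\cdot(E_{\rm kin}+E_{\rm pot})$ is bounded by $\alpha_1\cdot\langle\Phi_N,H_N^{\rm eff}\Phi_N\rangle+O(N^{-\varepsilon})\leq \alpha_1(4\pi\mathfrak{a}N+C)+O(N^{-\varepsilon})$ using the hypothesis; the leading term $\alpha_1\cdot 4\pi\mathfrak{a}N=\tfrac{N(N-1)}{2}\cdot 4\pi\mathfrak{a}N\cdot\langle\Phi_N,u_\ell(x_{N-1}-x_N)\Phi_N\rangle$ supplies precisely the "$-4\pi\mathfrak{a}N$" subtraction in the announced correction. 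The only piece missing to complete $H_{N-2}^{\rm eff}$ on the right-hand side is $\tfrac{N(N-1)}{2}\langle\Phi_N,H_{N-2}^{\rm eff,mid}\otimes u_\ell(x_{N-1}-x_N)\Phi_N\rangle$; this is a 4-body integral with two disjoint $u_\ell$ factors and one additional gradient, which I estimate by pushing both $u_\ell$'s onto fractional Laplacians via Lemma~\ref{lemma:sobolev_bounds} and invoking the mixed $n=3$ bound in \eqref{eq:Delta-phi}, yielding an $O(N^{-\varepsilon})$ contribution absorbable into the error.

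The central technical difficulty is the systematic control of every remaining higher-order remainder: the quadratic-and-higher terms from $R$ inserted into $E_{\rm kin}$ and $E_{\rm pot}$ (producing 4- and 5-body integrals with combinatorial prefactors up to $O(N^5)$), the $\alpha_1^2$ remainder in the expansion of $1/\|\Psi_N\|^2$, and the non-factorizable three-body pieces identified above. Each such term is handled by pushing every $u_\ell$ factor onto fractional Laplacians via the operator inequalities of Lemma~\ref{lemma:sobolev_bounds} -- choosing the $L^p$-norm of $u_\ell$ optimally, exploiting $\|u_\ell\|_1\sim\ell^2/N$, $\|u_\ell\|_{3/2}\sim\ell/N$, $\|u_\ell\|_2\sim\ell^{3/2}/N$ from Lemma~\ref{lm:hardcorescatt} -- and then applying the corresponding multi-derivative bound from \eqref{eq:Delta-phi}. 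Closing all these estimates simultaneously is exactly what selects the window $N^{-1+\nu}\leq\ell\leq N^{-3/4-\nu}$: the upper bound on $\ell$ keeps $\alpha_1$ and the quadratic $F$-correction small, while the lower bound prevents the exponents $\alpha_n,\beta_n$ in \eqref{eq:Delta-phi} from overwhelming the algebraic gains from the $L^p$-norms of $u_\ell$.
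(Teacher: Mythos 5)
Your high-level strategy mirrors the paper's: expand the Jastrow product $\prod_{i<j} f_\ell^2(x_i-x_j)$ via pointwise inequalities, expand $\|\Psi_N\|^{-2}$, reconstruct $\langle\Phi_N, H_N^\text{eff}\Phi_N\rangle$ and the $H_{N-2}^\text{eff}\otimes u_\ell$ correction from the linear-in-$U$ terms, and control remainders with Lemma~\ref{lemma:sobolev_bounds} and the regularity hypotheses \eqref{eq:Delta-phi}. However, there are two concrete gaps.

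\textbf{The drift piece is not a remainder.} You propose to complete $H_{N-2}^\text{eff}$ on the right-hand side by estimating $-\tfrac{N(N-1)}{2}\langle\Phi_N, H_{N-2}^{\text{eff,mid}}\otimes u_\ell\Phi_N\rangle$ as $O(N^{-\eps})$, and likewise to treat every quadratic term from the Jastrow expansion of $E_{\rm kin}$ as an $O(N^{-\eps})$ error. Neither is true. The quadratic expansion of $E_{\rm kin}$ produces, with a positive sign, the four-body term $\tfrac{N(N-1)(N-2)(N-3)}{2}\int|\nabla_{x_1}\Phi_N|^2\, u_\ell(x_1-x_2)u_\ell(x_3-x_4)\,d\mathbf{x}$ (one $u_\ell$ sharing the gradient variable, the other disjoint). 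Pushing the disjoint $u_\ell$ through $\|u_\ell\|_1\sim \ell^2/N$, splitting $x_2$ with $\frak{p}_2,\frak{q}_2$, and using $\|u_\ell\|_{3/2}\sim\ell/N$ together with $\langle(-\Delta_{x_1})(-\Delta_{x_2})\rangle\lesssim 1/(N^2\ell^3)$ from \eqref{eq:Delta-phi}, the size of this term is
\[
\frac{N^4}{2}\cdot\frac{\ell^2}{N}\cdot\frac{\ell}{N}\cdot\frac{1}{N^2\ell^3}=O(1)\,,
\]
not $O(N^{-\eps})$. The paper keeps this term in the main decomposition, see \eqref{eq:Ekin1}, and matches it \emph{exactly} with the drift part of $-\tfrac{N(N-1)}{2}\langle[H_{N-2}^\text{eff}-4\pi\aa N]\otimes u_\ell\rangle$. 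Trying to bound it as a remainder, as your plan proposes, cannot close: you end up with an order-one discrepancy on the left-hand side. The correct move is identification, not estimation.

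\textbf{The exponential bound introduces uncontrollable diagonal terms.} You use $\prod_k(1-a_k)\leq e^{-\sum a_k}\leq 1-\sum a_k+\tfrac12(\sum a_k)^2$, whose quadratic term contains the diagonal contribution $\tfrac12\sum_{i<j}u_\ell(x_i-x_j)^2$ absent from the paper's inequality \eqref{eq:up-jas}, which is $\prod f_\ell^2\leq 1-\sum u_\ell + \tfrac12\sum_{(i,j)\neq(m,n)}u_\ell(x_i-x_j)u_\ell(x_m-x_n)$ (\emph{off-diagonal only}). The diagonal pieces, weighted against $\sum_j|\nabla_{x_j}\Phi_N|^2$, are not negligible: already the ``disjoint from the gradient'' contribution satisfies
\[
\frac{N(N-1)(N-2)}{4}\int|\nabla_{x_1}\Phi_N|^2\,u_\ell^2(x_2-x_3)\,d\mathbf{x}\;\gtrsim\;\frac{N^3}{4}\,\|u_\ell\|_2^2\,\langle(-\Delta_{x_1})\rangle\sim\frac{N^3}{4}\cdot\frac{\ell}{N^2}\cdot\frac{1}{N\ell}=O(1),
\]
and the piece sharing the gradient variable (bounded by $u_\ell^2\leq u_\ell$) reproduces the magnitude of the drift term, up to $O(N^{1-2\nu})$. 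Since these terms are positive, they enlarge your upper bound on $E_{\rm kin}$ by a quantity that is not $O(N^{-\eps})$, and the claimed \eqref{eq:red-eff} does not follow. You need the sharper off-diagonal inequality \eqref{eq:up-jas}, proved in the paper via the convexity estimate on $h(s)=\prod_{i<j}(1-s\,u_\ell(x_i-x_j))$, rather than the $e^{-x}$ chain.
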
 

\textit{Remark}: we will later prove a lower bound for $H_{N-2}^\text{eff} - 4 \pi \frak{a} N$ which will allow us to show that the second term on the r.h.s. of  (\ref{eq:red-eff}) is negligible, in the limit $N \to \infty$.

\begin{proof}
Writing again $u_\ell = 1- f_\ell^2$, we can estimate 
\begin{equation}\label{eq:up-jas} 
\begin{split} 
\prod_{i<j}^N f_\ell^2 (x_i - x_j) &\geq 1- \sum_{i<j} u_\ell (x_i - x_j)  \\
\prod_{i<j}^N f_\ell^2 (x_i - x_j) &\leq 1 - \sum_{i<j} u_\ell (x_i - x_j) + \frac{1}{2} \sum_{\substack{i<j; \; m<n :\\ (i,j) \not = (m,n)}} u_\ell (x_i - x_j) u_\ell (x_m - x_n) \,.
\end{split} 
\end{equation} 
These bounds follow by setting $h(s) = \prod_{i<j}^N (1 - s u_\ell (x_i -x_j))$, for $s \in [0;1]$, and by proving that \[ h' (s) \geq - \sum_{i<j}^N u_\ell (x_i -x_j), \qquad h'' (s) \leq \sum_{\substack{i<j; m<n :\\ (i,j) \not = (m,n)}}  u_\ell (x_i - x_j) u_\ell (x_m - x_n) \] 
for all $s \in (0;1)$. Thus, we obtain the upper bound  
\begin{equation} \label{eq:Ekin1} \begin{split} E_\text{kin} &(\Phi_N) \\ \leq \; &N \int |\nabla_{x_1} \Phi_N (\textbf{x})|^2 d{\bf x} - N \int |\nabla_{x_1} \Phi_N ({\bf x})|^2 \sum_{i<j} u_\ell (x_i - x_j)  \, d {\bf x}  \\ &+ \frac{N}{2} \int |\nabla_{x_1} \Phi_N ({\bf x})|^2 \sum_{\substack{i<j ;\; m<n :\\ (i,j) \not = (m,n)}} u_\ell (x_i - x_j) u_\ell (x_m - x_n) \, d{\bf x}  \\ = \; &N \int |\nabla_{x_1} \Phi_N ({\bf x}) |^2 (1- (N-1) u_\ell (x_1 - x_2)) d{\bf x} \\ &- \frac{N (N-1)(N-2)}{2} \int |\nabla_{x_1} \Phi_N ({\bf x})|^2 (1 - (N-3) u_\ell (x_1 - x_2)) u_\ell (x_3 - x_4) \, d{\bf x} \\ &+ \cE_\text{kin} \end{split} \end{equation} 
where 
\begin{equation}\label{eq:cEkin} \begin{split} \cE_\text{kin} \leq  \; &C N^3 \int |\nabla_{x_1} \Phi_N ({\bf x})|^2  u_\ell (x_1 - x_2) u_\ell (x_1-x_3) d{\bf x}\\ &+C N^3 \int |\nabla_{x_1} \Phi_N ({\bf x})|^2  u_\ell (x_1 - x_2) u_\ell (x_2 - x_3) d{\bf x} \\ &+ C N^4 \int |\nabla_{x_1} \Phi_N ({\bf x})|^2 u_\ell (x_2 - x_3) u_\ell (x_2 - x_4) d{\bf x}  \\ &+ C N^5 \int |\nabla_{x_1} \Phi_N ({\bf x})|^2 u_\ell (x_2 - x_3) u_\ell (x_4 - x_5) d{\bf x} \\ = \; &\cE_\text{kin}^{(1)} + \cE_\text{kin}^{(2)} + \cE_\text{kin}^{(3)}+ \cE_\text{kin}^{(4)} \,.\end{split} \end{equation} 
Consider the first error term on the r.h.s. of (\ref{eq:cEkin}). Writing $\frak{p} = |\ph_0 \rangle \langle \ph_0|$ for the orthogonal projection onto the condensate wave function $\ph_0 (x) \equiv 1$, $\frak{p}_j$ for $\frak{p}$ acting on the $j$-th particle and $\frak{q}_j = 1 - \frak{p}_j$, we find  
\begin{equation}\label{eq:Ekin2} \begin{split} \cE_\text{kin}^{(1)} = \; &C N^3 \langle \nabla_{x_1} \Phi_N , u_\ell (x_1 - x_2) u_\ell (x_1 - x_3) \nabla_{x_1} \Phi_N \rangle \\ \leq \; &C N^3 \langle \nabla_{x_1} \frak{q}_3 \Phi_N , u_\ell (x_1 - x_2) u_\ell (x_1 - x_3) \nabla_{x_1} \frak{q}_3 \Phi_N \rangle \\ &+ C N^3 \| u_\ell \|_1  \langle \nabla_{x_1} \frak{p}_3 \Phi_N , u_\ell (x_1 - x_2)  \nabla_{x_1} \frak{p}_3 \Phi_N \rangle
\\ 
\leq \; &C N^3 \langle \nabla_{x_1} \frak{q}_2 \frak{q}_3 \Phi_N , u_\ell (x_1 - x_2) u_\ell (x_1 - x_3) \nabla_{x_1} \frak{q}_2 \frak{q}_3 \Phi_N \rangle \\ &+ C N^3 \| u_\ell \|_1  \langle \nabla_{x_1} \frak{q}_2 \frak{p}_3 \Phi_N , u_\ell (x_1 - x_2)  \nabla_{x_1} \frak{q}_2 \frak{p}_3 \Phi_N \rangle \\ &+ C N^3 \| u_\ell \|^2_1  \| \nabla_{x_1} \frak{p}_2 \frak{p}_3 \Phi_N \|^2 \, . \end{split} \end{equation} 
With Lemma \ref{lemma:sobolev_bounds}, and observing that, on the range of $\frak{q}$, $(1-\Delta) \leq - C \Delta$, we obtain 
\[  \begin{split} \cE_\text{kin}^{(1)} \leq \; &C N^3 \| u_\ell \|_{3/2}^2 \langle \Phi_N, (-\Delta_{x_1})(-\Delta_{x_2}) ( -\Delta_{x_3}) \Phi_N \rangle \\ &+ C N^3 \| u_\ell \|_1 \| u_\ell \|_{3/2} \langle \Phi_N, (-\Delta_{x_1})(-\Delta_{x_2}) \Phi_N \rangle + C N^3 \| u_\ell \|_1^2 \langle \Phi_N, (-\Delta_{x_1}) \Phi_N \rangle\, . \end{split} \]
The term $\cE^{(2)}_\text{kin}$ can be treated like $\cE^{(1)}_\text{kin}$. 
Proceeding analogously, we also find, with (\ref{eq:Delta3}), 
\[ \begin{split}  
\cE_\text{kin}^{(3)} \leq \; &C N^4 \| u_\ell \|_r^2 \langle \Phi_N , (-\Delta_{x_1}) (-\Delta_{x_2}) (-\Delta_{x_3}) (-\Delta_{x_4})  \Phi_N \rangle \\ &+ C N^4 \| u_\ell \|_1^2 \langle \Phi_N , (-\Delta_{x_1}) \big[ 1 + (\Delta_{x_2} \Delta_{x_3})^{3/4+\delta} \big] \Phi_N \rangle \end{split} \]
for any $r > 1$, and 
\[ \begin{split} 
\cE_\text{kin}^{(4)} \leq \; &CN^5 \| u_\ell \|_1^2  \big\langle \Phi_N, (-\Delta_{x_1}) \big[ 1 + (\Delta_{x_2} \Delta_{x_3})^{3/4+\delta} + (\Delta_{x_2} \Delta_{x_3} \Delta_{x_4} 
\Delta_{x_5})^{3/4+\delta} \big] \Phi_N \big\rangle \,.
\end{split} \]
From $u_\ell = 1-f_\ell^2 = 2\o_\ell - \o_\ell^2$, we obtain $0 \leq u_\ell \leq 2 \o_\ell$ and thus, with (\ref{eq:Lp-norms}), 
\begin{equation}\label{eq:u-est} \| u_\ell \|_r \leq C \ell^{\frac{3}{p}-1} / N
\end{equation} 
for any $p \geq 1$. From the assumption (\ref{eq:Delta-phi}), we find   
\[ \begin{split} \cE_\text{kin}^{(1)} , \cE_\text{kin}^{(2)} \leq \frac{C}{N^2 \ell^2}, \quad \cE_\text{kin}^{(3)} \leq \frac{C}{N^2 \ell^{2 +6 (1-1/r)}} , \quad \cE^{(4)}_\text{kin} \leq C N^2 \ell^3 + C \ell^{1/2-2\delta} + \frac{C}{N^2 \ell^{15/8+4\delta}}\,. \end{split} \] 
Choosing $\delta > 0$ sufficiently small and $r > 1$ sufficiently close to $1$, we conclude that there exist $C, \eps > 0$ such that $\cE_\text{kin} \leq C N^{-\eps}$, if $N^{-1+\nu} \leq \ell \leq N^{-2/3- \nu}$ for a $\nu > 0$, and $N \in \bN$ is large enough. 

Let us now consider the potential energy. From (\ref{eq:Ekinpot}), we can estimate 
\[  E_\text{pot} (\Phi_N) \leq N (N-1) \lambda_\ell \int \chi_\ell (x_1 - x_2) f_\ell^2 (x_1 - x_2) |\Phi_N ({\bf x})|^2 \prod_{3 \leq i<j}^N f_\ell^2 (x_i - x_j) d {\bf x} \, . \]
With (\ref{eq:up-jas}) (applied now to the product over $3 \leq i < j$), we obtain  
\begin{equation} \label{eq:Epot1} \begin{split} E_\text{pot} &(\Phi_N) \\ \leq \; &N (N-1) \lambda_\ell \int \chi_\ell (x_1 - x_2) f_\ell^2 (x_1 - x_2) |\Phi_N ({\bf x})|^2 d{\bf x} \\ &-  \frac{N (N-1)(N-2)(N-3)}{2} \lambda_\ell \int \chi_\ell (x_1 - x_2) f_\ell^2 (x_1- x_2) |\Phi_N ({\bf x})|^2 u_\ell (x_3 - x_4) d {\bf x} \\ &+ \cE_\text{pot} \end{split} \end{equation} 
where
\[ \begin{split} \cE_\text{pot} \leq \;&CN^6 \lambda_\ell \int \chi_\ell (x_1 - x_2) f_\ell^2 (x_1 - x_2) |\Phi_N ({\bf x})|^2 u_\ell (x_3 - x_4) u_\ell (x_5 - x_6) d{\bf x} \\
&+CN^5 \lambda_\ell \int \chi_\ell (x_1 - x_2) f_\ell^2 (x_1 - x_2) |\Phi_N ({\bf x})|^2 u_\ell (x_3 - x_4) u_\ell (x_4 - x_5) d{\bf x}  \\ = \;& \cE_\text{pot}^{(1)} + \cE_\text{pot}^{(2)} \,.\end{split} \]
Proceeding similarly to (\ref{eq:Ekin2}) (introducing the projections $\frak{p}_j, \frak{q}_j$), we can bound 
\[ \begin{split} 
\cE_\text{pot}^{(1)} \leq \; &CN^6 \lambda_\ell \| \chi_\ell \|_1 \| u_\ell \|_1^2 \Big[ 1 + \langle \Phi_N , (\Delta_{x_1} \Delta_{x_2})^{3/4+\delta} \Phi_N \rangle + \langle \Phi_N , (\Delta_{x_1} \dots \Delta_{x_4})^{3/4+\delta} \Phi_N \rangle \\ & \hspace{8cm} + \langle \Phi_N , (\Delta_{x_1} \dots \Delta_{x_6})^{3/4+\delta} \Phi_N \rangle \Big] \\
\cE_\text{pot}^{(2)} \leq \; &CN^5 \lambda_\ell  \| \chi_\ell \|_1 \| u_\ell \|_1 \| u_\ell \|_{3/2}  \\ &\hspace{.5cm} \times \Big[ \langle \Phi_N, (-\Delta_{x_1}) (\Delta_{x_2} \dots \Delta_{x_5})^{3/4+\delta} \Phi_N \rangle + \langle  \Phi_N, (-\Delta_{x_1}) (\Delta_{x_2} \Delta_{x_3})^{3/4+\delta} \Phi_N \rangle \Big] \\ &+ CN^5 \lambda_\ell \| \chi_\ell \|_1 \| u_\ell \|_1^2 \Big[ 1 + \langle \Phi_N , (\Delta_{x_1} \Delta_{x_2})^{3/4+\delta} \Phi_N \rangle + \langle \Phi_N , (\Delta_{x_1} \dots \Delta_{x_4})^{3/4+\delta} \Phi_N \rangle \Big] \,.
\end{split} 
\] 
From Lemma \ref{lm:hardcorescatt}, we have $\lambda_\ell \leq C/(N\ell^3)$. From the assumption  (\ref{eq:Delta-phi}) and from (\ref{eq:u-est}), we obtain
\[ \begin{split} \cE_\text{pot}^{(1)} &\leq C \Big[ N^3 \ell^4 + N \ell^{2-\delta} + \frac{1}{N \ell^{1/4 + \delta}} + \frac{1}{N^3 \ell^{41/16+ 6\delta}} \Big]  \\ \cE_\text{pot}^{(2)} &\leq C \Big[ N^2 \ell^4 + \ell^{2-\delta} + \frac{1}{N^2 \ell^{1/4 + 4\delta}} + \frac{1}{N \ell^{1/2 +\delta}} + \frac{1}{N^3 \ell^{23/8+4\delta}} \Big]\,. \end{split}  
 \] 
Thus, choosing $\delta > 0$ small enough, we can find $C, \eps >0$ such that $\cE_\text{pot} \leq C N^{-\eps}$, if $N^{-1+\nu} \leq \ell \leq N^{-3/4-\nu}$ for a $\nu > 0$, and $N \in \bN$ is large enough. 

Finally, we consider the denominator on the r.h.s. of (\ref{eq:exp-Phi}). With the lower bound in (\ref{eq:up-jas}) (and the assumption $\| \Phi_N \|_2 = 1$), we find 
\[ \int |\Phi_N ({\bf x})|^2 \prod_{i<j}^N f_\ell^2 (x_i - x_j) d{\bf x} \geq 1 - \frac{N(N-1)}{2} \int u_\ell (x_1 - x_2) |\Phi_N ({\bf x})|^2 d{\bf x} \,.\]
Observing that, by (\ref{eq:W1-L1}), (\ref{eq:u-est}) and by the assumption (\ref{eq:Delta-phi}),  
\[ \begin{split} \frac{N(N-1)}{2} \int |\Phi_N & ({\bf x})|^2 u_\ell (x_1 - x_2) d{\bf x} \\ &\leq C N^2 \| u_\ell \|_1 \big[ 1 + \langle \Phi_N, (\Delta_{x_1} \Delta_{x_2})^{3/4+\delta} \Phi_N \rangle \big] \leq C \Big[ N \ell^2 + \frac{1}{N \ell^{2\delta}} \Big] \end{split}  \]
we conclude, choosing $\delta > 0$ sufficiently small and recalling that $\ell \leq N^{-3/4-\nu}$, that 
\[ \frac{1}{ \int |\Phi_N ({\bf x})|^2 \prod_{i<j}^N f_\ell^2 (x_i - x_j) d{\bf x}} \leq 1 + \frac{N(N-1)}{2} \int u_\ell (x_1 - x_2) |\Phi_N ({\bf x})|^2 d{\bf x} + C N^{-1-\eps} \]
for $\eps > 0$ small enough. Combining the last equation with (\ref{eq:Ekin1}), (\ref{eq:Epot1}) we arrive at (recall the assumption $\langle \Phi_N, H_N^\text{eff} \Phi_N \rangle \leq 4 \pi \frak{a} N + C$) 
\[  \begin{split} &\frac{1}{\| \Psi_N \|^2} \big[ E_\text{kin} (\Phi_N) + E_\text{pot} (\Phi_N) \big] \\ &\hspace{.3cm} \leq \Big[  1 + \frac{N(N-1)}{2} \int u_\ell (x_1 - x_2) |\Phi_N ({\bf x})|^2 d{\bf x} + C N^{-3/2}  \Big] \\ &\hspace{2cm} \times  \Big[ \langle \Phi_N, H_N^\text{eff} \Phi_N \rangle - \frac{N(N-1)}{2} \langle \Phi_N , \big[ H_{N-2}^\text{eff} \otimes u_\ell (x_{N-1} - x_N) \big]  \Phi_N \rangle + C N^{-\eps}  \Big] \\
&\hspace{.3cm} \leq \langle \Phi_N , H_N^\text{eff} \Phi_N \rangle + \frac{N(N-1)}{2}  \langle \Phi_N , H_N^\text{eff} \Phi_N \rangle \int |\Phi_N ({\bf x})|^2 u_\ell (x_{N-1} - x_N) d{\bf x}\\ &\hspace{2cm}  - \frac{N(N-1)}{2} \langle \Phi_N, \big[ H_{N-2}^\text{eff} \otimes u_\ell (x_{N-1} - x_N) \big] \Phi_N \rangle + C N^{-\eps} \\
&\hspace{.3cm} \leq \langle \Phi_N , H_N^\text{eff} \Phi_N \rangle -  \frac{N(N-1)}{2} \Big\langle \Phi_N, \Big\{ \big[ H_{N-2}^\text{eff} - 4 \pi \frak{a} N \big] \otimes u_\ell (x_{N-1} - x_N) \Big\} \Phi_N \Big\rangle + C N^{-\eps} \,. \end{split} \]
\end{proof}

\section{Properties of the effective Hamiltonian} 
\label{sec:GN}

Motivated by the results of the last sections, in particular by (\ref{eq:en-psi}), by Prop.  \ref{lm:3body} and by Prop. \ref{prop:eff}, we would like to choose $\Phi_N \in L^2_s (\Lambda^N)$ as a good trial state for the effective Hamiltonian $H_N^\text{eff}$ defined in (\ref{eq:Heff}) (i.e. $\Phi_N$ should lead to a small expectation of $H_N^\text{eff}$ and, at the same time, it should satisfy the bounds (\ref{eq:Delta-phi})). Since $u_\ell = 1 - f_\ell^2$ is small, unless particles are very close, we can think of $H_N^\text{eff}$ as a perturbation of 
\begin{equation}\label{eq:Hbeta} H_{\ell, N} = \sum_{j=1}^N -\Delta_{x_j} + 2\lambda_\ell \sum_{i<j}^N \chi_\ell (x_i - x_j) \,.\end{equation} 
Keeping in mind that, by (\ref{eq:lambdaell-exp}), $\lambda_\ell \simeq 3\frak{a} / N \ell^3$ and that $1/N \ll \ell \ll 1$, (\ref{eq:Hbeta}) looks like the Hamilton operator of a Bose gas in an intermediate scaling regime, interpolating between mean-field and Gross-Pitaevskii limits. The validity of Bogoliubov theory in such regimes has been recently established in \cite{BBCS2}. The goal of this section is to apply the strategy of \cite{BBCS2} to the Hamilton 
operator (\ref{eq:Heff}). This will lead to bounds for the operator $H_N^\text{eff}$ and, eventually, to an ansatz for $\Phi_N$. While part of our analysis in this section can be taken over from \cite{BBCS2}, we need additional work to control the effect of the difference $u_\ell = 1- f_\ell^2$, appearing in the kinetic and the potential energy in the effective Hamiltonian (\ref{eq:Heff}).  

To determine the spectrum of (\ref{eq:Heff}), it is useful to factor out the condensate and to focus instead on its orthogonal excitations. To this end, following \cite{LNSS}, we define a unitary map $U_N : L^2_s (\Lambda^N) \to \cF^{\leq N}_+ = \bigoplus_{n=0}^N L^2_\perp (\Lambda)^{\otimes_s n}$, requiring that 
\begin{equation}\label{eq:UNdef} U_N \psi = \{ \alpha_0, \alpha_1, \dots , \alpha_N \} \in  \cF^{\leq N}_+ \end{equation} 
if 
\[ \psi = \alpha_0 \ph_0^{\otimes N} + \alpha_1 \otimes_s \ph_0^{\otimes (N-1)} + \dots + \alpha_N \,. \]
Here $\ph_0 (x) \equiv 1$ for all $x \in \Lambda$ denotes the condensate wave function, and $L^2_\perp (\Lambda)$ is the orthogonal complement of $\ph_0$ in $L^2 (\Lambda)$. The action of the unitary operator $U_N$ is determined by the rules 
\begin{equation}\label{eq:U-rules}
\begin{split} 
U_N \, a^*_0 a_0 \, U_N^* &= N- \cN_+  \\
U_N \, a^*_p a_0 \, U_N^* &= a^*_p \sqrt{N-\cN_+ } = \sqrt{N} b_p^* \\ 
U_N \, a^*_0 a_p \, U_N^* &= \sqrt{N-\cN_+ } \, a_p = \sqrt{N} b_p \\
U_N \, a^*_p a_q \, U_N^* &= a^*_p a_q \, . 
\end{split} \end{equation}
where $\cN_+$ denotes the number of particles operator on $\cF^{\leq N}_+$ (it measures therefore the number of excitations of the condensate) and where we introduced modified creation and annihilation operators $b^*_p , b_p$ satisfying the commutation relations  
\begin{equation}\label{eq:comm-bp} 
[ b_p, b_q^* ] = \left( 1 - \frac{\cN_+}{N} \right) \delta_{p,q} - \frac{1}{N} a_q^* a_p \, , \qquad [ b_p, b_q ] = [b_p^* , b_q^*] = 0 \end{equation} 
and 
\begin{equation} \label{eq:commaa}
[ a_r^* a_s , b^*_p ] = \delta_{p,s} b_r^* , \qquad [ a_r^* a_s , b_p ] = - \delta_{r,p} b_s   \, . 
\end{equation}
On the truncated Fock space $\cF_+^{\leq N}$, we can define the excitation Hamiltonian $\cL^\text{eff}_N = U_N H_N^\text{eff} U_N^*$. To compute $\cL^\text{eff}_N$, we first rewrite (\ref{eq:Heff}) in momentum space, using the formalism of second quantization, as  
\begin{equation} \label{eq:HNeff-F} \begin{split}
	H_N^\text{eff} =\;& \sum_{p \in \Lambda^*} p^2 a_p^* a_p - \hskip -0.1cm \sum_{p,q ,r \in \L^*} p \cdot (p+r) \widehat{u}_\ell (r) a_{p+r}^* a_{q-r}^* a_p a_q \\
	&+ \lambda_\ell \sum_{p,q,r \in \L^*} \widehat{\chi_\ell f_\ell^2} (r) a_{p+r}^* a_q^* a_{q+r} a_p\,.  \end{split} \end{equation}
Then, we apply (\ref{eq:U-rules}). This will produce a constant term, as well as contributions that are quadratic, cubic and quartic in (modified) creation and annihilation operators. Following Bogoliubov's method, we would like to eliminate cubic and quartic terms. This would reduce $\cL_N^\text{eff}$ to a quadratic expression, whose spectrum could be computed through diagonalization with a (generalized) Bogoliubov transformation. As explained in \cite{BBCS2}, though, cubic and quartic terms in $\cL^\text{eff}_N$ are not negligible (they contribute to the energy to order $\ell^{-1}$). Before proceeding with the diagonalization, we need to extract relevant contributions to the energy from cubic and quartic terms. As in \cite{BBCS2}, we do so by conjugating $\cL_N^\text{eff}$ with a (generalized) Bogoliubov transformation removing short-distance correlations characterising low-energy states. To reach this goal, we fix $\ell_0 \gg \ell$, small, but of order one, independent of $N$. Similarly as in (\ref{eq:ev}), we define $f_{\ell_0}$ to be the ground state solution of the Neumann problem for the hard sphere potential in the ball $B_{\ell_0}$. Extending $f_{\ell_0}$ to the box $\Lambda$, we find 
\[ -\Delta f_{\ell_0} (x) = \lambda_{\ell_0} \chi_{\ell_0} (x) f_{\ell_0} (x) \]
with $f_{\ell_0} (x) = 0$ for $|x| = \frak{a}/N$ (the eigenvalue $\lambda_{\ell_0}$ is approximately given by (\ref{eq:lambdaell-exp}), of course with $\ell$ replaced by $\ell_0$). For $\frak{a}/N \leq |x| \leq \ell_0$, we can then define $g_{\ell_0} (x) = f_{\ell_0} (x) / f_\ell (x)$. We can also extend $g_{\ell_0}$ to $\Lambda$, setting $g_{\ell_0} (x) = \lim_{|y| \downarrow \aa / N} g_{\ell_0} (y)$ for all $|x| \leq \aa /N$ and $g_{\ell_0} (x) = 1$ for all $x \in \L \backslash B_{\ell_0}$. A  simple computation shows that $g_{\ell_0}$ solves the equation 
\begin{equation}\label{eq:g0} 
-\nabla \big[ f_\ell^2 \nabla g_{\ell_0} \big] + \lambda_\ell \chi_\ell f_\ell^2 g_{\ell_0} = \lambda_{\ell_0} \chi_{\ell_0} f_\ell^2 g_{\ell_0} 
\end{equation} 
with the Neumann boundary condition $\partial_r g_{\ell_0} (x) = 0$ for $|x| = \ell_0$ (this follows easily from the observation that, for $\ell \leq |x| \leq \ell_0$, $g_{\ell_0} (x) = f_{\ell_0} (x)$). Conversely, it is interesting to observe that, integrating (\ref{eq:g0}) against $g_{\ell_0}$, we find 
\begin{equation}\label{eq:en-g} \int f_\ell^2 |\nabla g_{\ell_0}|^2 dx + \lambda_\ell \int \chi_\ell f_\ell^2 g_{\ell_0}^2 dx = \lambda_{\ell_0} \int \chi_{\ell_0} f_\ell^2 g_{\ell_0}^2 dx \,.
\end{equation} 
With (\ref{eq:ev}), we find 
\begin{equation}\label{eq:en-g2}  \int |\nabla (f_\ell g_{\ell_0})|^2  dx = \lambda_{\ell_0} \int \chi_{\ell_0} |(f_\ell g_{\ell_0})|^2 dx \end{equation} 
which implies that (\ref{eq:g0}) is solved by $g_{\ell_0} = f_{\ell_0}/f_\ell$. 

With $g_{\ell_0}$, we define $\check{\eta} (x) : = - N (1 - g_{\ell_0} (x))$. Some properties of $g_{\ell_0}, \check{\eta}$ and of their Fourier coefficients are collected in the next lemma, whose proof is deferred to Appendix~\ref{App:scatt}. We introduce here the notation 
\begin{equation} \label{eq:Vell}
 V_\ell (x) = 2 N \lambda_\ell  \chi_\ell (x) f_\ell^2 (x)\,.
\end{equation} 
\begin{lemma}\label{lm:driftscatt}
We have $\check{\eta} (x) = 0$ for $|x| \geq \ell_0$. For $|x| \leq \ell_0$, we have the bounds 
\be \label{eq:checketa}
|\check \eta(x)| \leq \frac{ C \aa}{|x| +\ell} \,, \qquad |\nabla \check{\eta}(x)| \leq \frac{C \aa}{(|x| + \ell)^2 }\, . 
\ee
Furthermore  
\be \label{eq:intVellg0}
\begin{split} 
\left| \int V_\ell (x) g_{\ell_0} (x) dx - 8 \pi \aa \right| &= \left | 2N \lambda_\ell \int \chi_{\ell} (x) f_\ell^2 (x) g_{\ell_0} (x) dx  - 8 \pi \aa \right|  \leq C N^{-1} \\
\left| \int V_\ell (x) g_{\ell_0} (x) e^{-ip \cdot x} dx \right| &= \left | 2N \lambda_\ell \int \chi_{\ell} (x) f_\ell^2 (x) g_{\ell_0} (x) e^{-ip \cdot x} dx \right|   \leq \frac{C  }{\ell^2 p^2} 
\end{split} \ee
and, analogously,  
\be \begin{split}  \label{eq:intchiell0}
\Big | 2N \l_{\ell_0} \int \chi_{\ell_0} (x)  f_{\ell}^2 (x) g_{\ell_0} (x) dx - 8 \pi \aa \Big|  &  \leq C N^{-1}\\
\Big| 2N \lambda_{\ell_0} \int \chi_{\ell_0} (x) f_\ell^2 (x) g_{\ell_0} (x) e^{-ip \cdot x} dx \Big| &\leq \frac{C }{ p^2}\,.
\end{split}\ee

Recall the definition $u_\ell = 1 - f_\ell^2$. For $p \in \L^*_+$, let 
\begin{equation}\label{eq:Dp-def} D_p = - \sum_{r \in \L^*} p \cdot (p+r) \hat{u}_\ell (r) \eta_{p+r} \end{equation}
and denote by $\eta_p$ the Fourier coefficients of $\check{\eta}$. Then (\ref{eq:g0}) takes the form 
\be  \label{eq:eta-scat0}
p^2 \eta_p +  D_p + N \l_\ell \big( \widehat{\chi_\ell f_\ell^2} * \widehat{g}_{\ell_0} \big) (p) =  N \l_{\ell_0} \big(\widehat{\chi_{\ell_0} f_{\ell}^2} \ast \widehat{g}_{\ell_0} \big)(p)\,.
\ee 
or, equivalently, with the definition (\ref{eq:Vell}), 
\be  \label{eq:eta-scat}
p^2 \eta_p +  D_p + \frac{1}{2} \widehat{V}_\ell (p) + \frac{1}{2N} \big( \widehat{V}_\ell * \eta \big) (p) = N \l_{\ell_0} \big(\widehat{\chi_{\ell_0} f_{\ell}^2} \ast \widehat{g}_{\ell_0} \big)(p)\,.
\ee 
We have
\be\label{eq:eta0}
   \eta_0  = -\frac{2}{5} \pi \aa \ell_0^2 + \cO \Big( \frac{\aa^2\ell_0}{N}\Big)+\cO\big(\aa\ell^2\big) 
\ee
and, for $p \in \L^*_+$, 
\be \label{eq:etap}
\begin{split} 
|\eta_p| &\leq C \min \left \{\frac{1}{|p|^2}\, ;  \frac{1}{\ell^2 |p|^4} \right\}\,,  \\
|D_p| &\leq   C \min \left \{\,\frac{1}{N \ell}\, ;  \frac{1}{\ell^2 |p|^2} \right\} \,.
\end{split} 
\ee
In particular, this implies 
\be  \label{eq:etaHr}
\sum_{q\in \Lambda^*_+} |q|^{r} |\eta_q|^2 \leq C \ell^{1-r} 
\ee
for all $1 < r < 5$.
\end{lemma}
Using the coefficients $\eta_p$, for $p\in \L^*_+$, we define now 
\begin{equation}\label{eq:defB} 
B(\eta) = \frac{1}{2} \sum_{p\in \Lambda^*_+} \eta_p \big( b_p^* b_{-p}^* -  b_p b_{-p} \big)  \, \end{equation}
and we introduce the renormalized excitation Hamiltonian 
\begin{equation}\label{eq:cGNeff}  \cG^\text{eff}_{N,\ell} = e^{-B(\eta)} U_N H_N^\text{eff} U_N^* e^{B(\eta)}\,. \end{equation} 
As explained in \cite{BBCS2}, conjugation with the generalized Bogoliubov transformation 
$e^{B(\eta)}$ models correlations up to scales of order one (determined by the radius $\ell_0$ of the ball used to define $g_{\ell_0}$). It  extracts important contributions to the energy from terms in $\cL_N^\text{eff}$ that are quartic in creation and annihilation operators. This will allow us to approximate $\cG_N^{\text{eff}}$ by the sum of a constant and of a quadratic expression in creation and annihilation operators, whose ground state energy will be computed by simple diagonalization (through a second Bogoliubov transformation). Unfortunately, conjugation with $e^{B(\eta)}$ also produces several error terms, which need to be bounded. For $1 < r <5$, we consider the positive operator 
\begin{equation}\label{eq:cPr} \cP^{(r)} = \sum_{p \in \Lambda^*_+} |p|^r a_p^* a_p \end{equation} 
acting on $\cF^{\leq N}_+$. The growth of $\cP^{(r)}$ (and of products of $\cP^{(r)}$ with moments of the number fo particles operator) under the action of $B(\eta)$ is controlled by the next lemma. 

\begin{lemma}\label{lm:N-Pgrow}
Let $B(\eta)$ be defined as in \eqref{eq:defB}. Then, for every $n \in \bN$ and  $r \in (1; 5)$ there is $C > 0$ such that, for all $t\in [0;1]$, 
\begin{equation}\begin{split} \label{eq:P_r}
e^{- tB(\eta)} (\cN_+ +1)^{n} e^{t B(\eta)} &\leq C (\cN_+ +1)^{n}   \\
e^{-t B(\eta)} \cP^{(r)}(\cN_+ +1)^{n} e^{tB(\eta)} &\leq C  \big( \cP^{(r)} + \ell^{1-r} \big) (\cN_+ +1)^n .
\end{split}\end{equation}
\end{lemma}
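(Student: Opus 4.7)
Both bounds will be established by Grönwall's inequality. For $\xi \in \cF^{\leq N}_+$, define $\phi_F(t) := \langle \xi, e^{-tB(\eta)} F e^{tB(\eta)}\xi\rangle$ with $F = (\cN_+ + 1)^n$ or $F = \cP^{(r)}(\cN_+ + 1)^n$. Since $B(\eta)^* = -B(\eta)$, the operator $e^{tB(\eta)}$ is unitary, and differentiation yields $\phi_F'(t) = \langle \xi_t, [F, B(\eta)]\xi_t\rangle$ with $\xi_t := e^{tB(\eta)}\xi$. The plan is to bound $|\phi_F'(t)|$ by $\phi_F(t)$ (plus a controlled remainder in the second case) and conclude by Grönwall.

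For the first bound, the commutation $[\cN_+, b_p^*] = b_p^*$ gives
\[
[(\cN_+ + 1)^n, b_p^* b_{-p}^*] = b_p^* b_{-p}^* g_n(\cN_+), \qquad g_n(\cN_+) := (\cN_+ + 3)^n - (\cN_+ + 1)^n,
\]
with $0 \leq g_n(\cN_+) \leq C_n (\cN_+ + 1)^{n-1}$, so that $[(\cN_+ + 1)^n, B(\eta)]$ is a sum of terms of the form $\eta_p b_p^* b_{-p}^* g_n(\cN_+)$ and their adjoints. Applying Cauchy--Schwarz in the inner product (after inserting balancing powers of $(\cN_+ + 1)^{(n-1)/2}$), then in the index $p$, and using the identity $\sum_p b_p^* b_p = \cN_+(N - \cN_+ + 1)/N \leq \cN_+$ (which follows from $b_p^* = a_p^* \sqrt{(N-\cN_+)/N}$) together with the uniform bound $\|\eta\|_2 \leq C$ (from $|\eta_p| \leq C/|p|^2$ in \eqref{eq:etap} and three-dimensional lattice summability of $1/|p|^4$ on $\L^*_+$), one obtains $|\phi_F'(t)| \leq C\,\phi_F(t)$, and Grönwall concludes.

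For the second bound, the new ingredient is $[\cP^{(r)}, b_p^* b_{-p}^*] = 2|p|^r b_p^* b_{-p}^*$, which follows from $[a_q^* a_q, b_p^*] = \delta_{p,q}\, b_p^*$. By Leibniz,
\[
[\cP^{(r)}(\cN_+ + 1)^n, B(\eta)] = [\cP^{(r)}, B(\eta)](\cN_+ + 1)^n + \cP^{(r)}[(\cN_+ + 1)^n, B(\eta)],
\]
and the second summand is controlled as in the first part. The first summand, after Cauchy--Schwarz with the weight $|p|^r$, splits into (i) a contribution bounded by $\langle \xi_t, \cP^{(r)} (\cN_+ + 1)^{n+1}\xi_t\rangle$, using $\sum_p |p|^r \|b_{-p} b_p \chi\|^2 \leq \langle \chi, \cP^{(r)} \cN_+ \chi\rangle$ (itself a consequence of $n_p n_{-p} \leq n_p \cN_+$), and (ii) a contribution proportional to $(\sum_p |p|^r |\eta_p|^2) \cdot \langle \xi_t, (\cN_+ + 1)^{n+1}\xi_t\rangle$. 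The crucial summability
\[
\sum_{p \in \L^*_+} |p|^r |\eta_p|^2 \leq C\,\ell^{1-r}, \qquad 1 < r < 5,
\]
is obtained by splitting the sum at $|p| \sim 1/\ell$ and using the two regimes in \eqref{eq:etap}: both the region $|p| \leq 1/\ell$ (with integrand $\sim p^{r-2}$) and $|p| \geq 1/\ell$ (with integrand $\sim \ell^{-4} p^{r-6}$) yield the same scaling $\ell^{1-r}$, convergence at both endpoints being precisely what forces $r \in (1,5)$. Grönwall, with the first bound of the lemma used to control the $(\cN_+ + 1)^{n+1}$-terms, then closes the argument.

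The main technical obstacle is the careful bookkeeping of $(\cN_+ + 1)$-shifts under commutation with products $b_p^* b_{-p}^*$ and $b_p b_{-p}$, and the matching choice of Cauchy--Schwarz splittings that reproduce exactly the operators on the right-hand sides of the target inequalities. A subsidiary point is that the modified operators $b_p$ differ from $a_p$ by the bounded factor $\sqrt{(N-\cN_+)/N}$; although this factor is $\leq 1$, the full identity $\sum_p b_p^* b_p = \cN_+(N-\cN_+ + 1)/N$ is needed to close the momentum sums.
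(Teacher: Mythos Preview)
Your proposal is correct and follows essentially the same route as the paper: differentiate in $t$, compute the commutator with $B(\eta)$, estimate by Cauchy--Schwarz using the key bound $\sum_{p}|p|^r|\eta_p|^2 \leq C\ell^{1-r}$ (which is exactly \eqref{eq:etaHr}), and close with Gr\"onwall. The paper's own proof is in fact terser than yours---it cites \cite{BS} for the first bound and only spells out the $n=0$ case of the second---so your write-up supplies details the paper omits; the one place where you are slightly glib is the claim that the Leibniz piece $\cP^{(r)}[(\cN_++1)^n,B(\eta)]$ is ``controlled as in the first part'', since the extra $\cP^{(r)}$ on the left requires a Cauchy--Schwarz splitting that balances $\cP^{(r)}$ on both factors rather than just $(\cN_++1)$, but this is routine to fill in.
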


\begin{proof} The proof of the first bound in \eqref{eq:P_r} is standard and can be found for example in \cite[Lemma 6.1]{BS}. As for the second inequality, let us consider the case $n=0$. For any $\xi \in \cF_+^{\leq N}$ and $t \in [0;1]$ we write 
	\begin{equation} \label{eq:first_commut_D_r}
		\langle \xi, e^{-tB(\eta)} \mathcal{P}^{(r)} e^{tB(\eta)} \xi \rangle = \langle \xi, \mathcal{P}^{(r)} \xi \rangle +\int_0^t ds\, \langle \xi, e^{-sB (\eta)}\big[\mathcal{P}^{(r)},B(\eta)\big]e^{sB(\eta)} \xi \rangle  
	\end{equation}
where
	\begin{equation*}
			[\mathcal{P}^{(r)},B(\eta)]=\;\sum_{q\in \Lambda^*_+} |q|^{r} \eta_q b^*_{q} b^*_{-q}+\mathrm{h.c.} \, .
	\end{equation*}
By Cauchy-Schwarz's inequality and \eqref{eq:etaHr} we get
	\begin{equation*} 
	\begin{split} 
		\big|\langle \xi, [\mathcal{P}^{(r)},B(\eta)] \xi \rangle \big| &\le C \sum_{q \in \L^*_+} |q|^r |\eta_q| \| a_q \xi \| \| a_q^* \xi \| \\ &\leq C \sum_{q \in \L^*_+} |q|^r |\eta_q| \| a_q \xi \| \big [ \| a_q \xi \| + \| \xi \| \big] \leq C \langle \xi, \mathcal{P}^{(r)} \xi \rangle + \ell^{1-r} \| \xi \|^2  \end{split} 
	\end{equation*} 
Inserting this into \eqref{eq:first_commut_D_r} and using Gronwall's Lemma, we obtain the desired bound. The proof for $n\geq1$ is similar, we omit further details. \end{proof}

With Lemma \ref{lm:N-Pgrow} we are ready to establish the form of $\cG_{N,\ell}^\text{eff}$, up to errors which are negligible on our trial state. We use the notation (recall the definition (\ref{eq:Vell}) of $V_\ell$) 
\be \label{eq:cK-cVN}
\cK= \sum_{p\in \L^*_+} p^2 a^*_p a_p\,, \qquad \text{and}\qquad
\cV_\ell = \frac 1 {2N} \sum_{\substack{p,q \in \L^*_+,\, r \in \L^*:\\ r\neq -p, -q}} \widehat V_\ell(r) a^*_{p+r} a^*_q a_p a_{q+r}\,.
\ee
\begin{prop}  \label{prop:cGcJ} 
 Let $\cG^\text{eff}_{N,\ell}$ be defined as in \eqref{eq:cGNeff}, with $B(\eta)$ as in \eqref{eq:defB}, with $\ell \geq N^{-1+\nu}$ for some $\nu > 0$ and $\ell_0 > 0$ small enough (but fixed, independent of $N$). Let $\cP^{(r)}$ be defined as in (\ref{eq:cPr}). Then, for any $0 < \k < \nu /2$ we have
\be \label{eq:GN-JN-LB}
\cG^\text{eff}_{N,\ell} \geq  4 \pi \aa N - C(\cN_+ +1)  
- \frac{C}{N^\k}\,  \cP^{(2+\k)} (\cN_++1)\,.
\ee 
On the other hand, using the notation $\g_p=\cosh(\eta_p)$ and $\s_p=\sinh(\eta_p)$, let  
\begin{equation} \label{eq:def-CN}
	\begin{split}
	C_{N,\ell}= &\, \frac{(N-1)}{2}\widehat{V}_\ell (0) +\sum_{p\in \L^*_+}\Big[ p^2\sigma_p^2+ \widehat V_\ell(p)(\sigma_p^2+\sigma_p\gamma_p)+\frac 1 {2N}\sum_{q \ne0}\widehat V_\ell(p-q)\eta_p\eta_q +  D_p\eta_p  \Big]
	\end{split}
	\end{equation}
with $D_p$ defined in \eqref{eq:Dp-def}. Denote also
\begin{equation}\label{eq:cQNell} 
\cQ_{N,\ell} =  \sum_{p \in \L^*_+}\left[ F_p a^*_p a_p+\dfrac{1}{2}G_p\left( b^*_p b^*_{-p}+b_pb_{-p}\right) \right]
\end{equation} 
with
	\begin{equation} \label{eq:FpGp}
	\begin{split}
	F_p=\;&p^2(\s_p^2+\g_p^2)+ (\widehat{V_\ell} \ast \widehat g_{\ell_0})(p) (\g_p +\s_p)^2\\
	G_p=\;&2p^2 \g_p \sigma_p +(\widehat{V_\ell} \ast \widehat g_{\ell_0})(p) (\g_p+\s_p)^2+ 2D_p\,.
\end{split}
\end{equation}
Then
	\begin{equation} \label{eq:cG-cJ}
\cG^\text{eff}_{N, \ell} = C_{N,\ell} +\cQ_{N,\ell}+\cE_{N,\ell}
	\end{equation}
where, 
\[  \label{eq:Eps-err}
\pm \cE_{N,\ell} \leq  \frac{C}{\sqrt{N \ell}} \,(\cK + \cV_{\ell} +\cP^{(5/2)})(\cN_++1) \,,
\]
and $\cK$ and $\cV_\ell$ are defined in \eqref{eq:cK-cVN}.
\end{prop}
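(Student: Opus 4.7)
The approach follows the strategy of \cite{BBCS2}, adapted to the presence of the extra quartic operator $2\sum_{i<j}\nabla_{x_j}\cdot u_\ell(x_i-x_j)\nabla_{x_j}$ in $H_N^\text{eff}$. The starting point is to rewrite $H_N^\text{eff}$ in second quantization, as in \eqref{eq:HNeff-F}, and apply the rules \eqref{eq:U-rules} to pass to the excitation Hamiltonian $\cL_N^\text{eff}=U_N H_N^\text{eff} U_N^*$. This produces a constant term proportional to $\widehat V_\ell(0)$, a quadratic expression in $a^*, a$ (and modified $b^*,b$) involving $p^2 a_p^* a_p$, $\widehat V_\ell(p)(b_p^* b_{-p}^* + b_p b_{-p})$ and similar combinations coming from both the modified kinetic part (with coefficient $-p\cdot(p+r)\widehat u_\ell(r)$) and the potential $\lambda_\ell \widehat{\chi_\ell f_\ell^2}$, together with a cubic term $N^{-1/2}\widehat V_\ell(r) b_{p+r}^*a_{-r}^* a_p$ plus the genuine quartic term $\cV_\ell$. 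The extra $u_\ell$ contributions are controlled using \eqref{eq:u-est} and Lemma \ref{lm:hardcorescatt}.

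Next, I would conjugate by $e^{B(\eta)}$. Using the almost-explicit action of $e^{B(\eta)}$ on $b_p, b_p^*$ (writing $\gamma_p=\cosh\eta_p$, $\sigma_p=\sinh\eta_p$ and decomposing into a main part $\gamma_p b_p+\sigma_p b_{-p}^*$ plus a remainder $d_p$ that is small in $\cN_+/N$), I would compute the action of conjugation on each piece. The constant term becomes $C_{N,\ell}$ once one uses the scattering identities \eqref{eq:eta-scat0}--\eqref{eq:eta-scat} of Lemma \ref{lm:driftscatt} to replace $p^2\eta_p+D_p+\tfrac12\widehat V_\ell(p)+\tfrac1{2N}(\widehat V_\ell*\eta)(p)$ by the harmless quantity $N\lambda_{\ell_0}(\widehat{\chi_{\ell_0}f_\ell^2}*\widehat g_{\ell_0})(p)$. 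This same identity is what produces the coefficients $F_p, G_p$ of the quadratic part $\cQ_{N,\ell}$: the renormalization absorbs the quartic and cubic singular contributions into effective Bogoliubov coefficients involving $(\widehat V_\ell*\widehat g_{\ell_0})(p)$ plus $2D_p$, exactly as in \eqref{eq:FpGp}. All other terms, including the commutators $[\cdot, B(\eta)]$ generated along the interpolation $e^{-tB(\eta)} (\cdot) e^{tB(\eta)}$ between $t=0$ and $t=1$, must be shown to be controlled by $(C/\sqrt{N\ell})(\cK+\cV_\ell+\cP^{(5/2)})(\cN_++1)$; for this I would use Cauchy--Schwarz combined with \eqref{eq:etap}--\eqref{eq:etaHr}, Lemma \ref{lemma:sobolev_bounds}, and the growth bounds of Lemma \ref{lm:N-Pgrow} to commute $\cN_+$ and $\cP^{(r)}$ across $e^{B(\eta)}$.

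The lower bound \eqref{eq:GN-JN-LB} is obtained as a corollary of the identity \eqref{eq:cG-cJ}, once one observes: (i) $C_{N,\ell}\geq 4\pi\aa N-C$ because the leading contribution $\tfrac{N-1}{2}\widehat V_\ell(0)=4\pi\aa(N-1)+\cO(1)$ follows from \eqref{eq:Vell-zero}, while the remaining sum over $p$ is bounded using $|\eta_p|\leq C/p^2$ and $|D_p|\leq C/(\ell^2 p^2)$ from \eqref{eq:etap}; (ii) the quadratic part $\cQ_{N,\ell}$ can be bounded from below by $-C(\cN_++1)$ using a standard completion-of-the-square argument controlling the off-diagonal coefficients $G_p$ by the diagonal $F_p\geq p^2$ up to summable remainders; (iii) the error $\cE_{N,\ell}$ can be absorbed: the $\cK$ and $\cV_\ell$ terms are nonnegative and bounded by $H_N^\text{eff}$ up to constants, and the $\cP^{(5/2)}(\cN_++1)/\sqrt{N\ell}$ term is dominated by $N^{-\kappa}\cP^{(2+\kappa)}(\cN_++1)$ since $\ell\geq N^{-1+\nu}$ and $\kappa<\nu/2$ (via Hölder in $p$).

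The main obstacle I anticipate is the careful bookkeeping of the commutators coming from the $u_\ell$-dependent quartic kinetic term $-\sum p\cdot(p+r)\widehat u_\ell(r) a_{p+r}^* a_{q-r}^* a_p a_q$ under conjugation by $e^{B(\eta)}$: its cubic descendants (after one contraction with $\ph_0$) interact with $B(\eta)$ to produce terms of type $p\cdot(p+r)\widehat u_\ell(r)\eta_{p+r}$, which must be reshaped into the defining expression \eqref{eq:Dp-def} of $D_p$ and then absorbed via the scattering equation. Keeping all such remainders below the threshold $(N\ell)^{-1/2}(\cK+\cV_\ell+\cP^{(5/2)})(\cN_++1)$ requires systematic use of \eqref{eq:etap} together with $\ell_0$ of order one, so that factors of $\ell_0$ are absorbed in constants and the only small parameter in the error is $1/\sqrt{N\ell}$.
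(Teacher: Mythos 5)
Your general plan for the decomposition \eqref{eq:cG-cJ} follows the paper's spirit: second quantize, split off the $u_\ell$-kinetic term, conjugate by $e^{B(\eta)}$, and use the scattering identity \eqref{eq:eta-scat} to reshape the cubic/quartic descendants into the coefficients $F_p, G_p$. What you do not make explicit, however, is the decomposition $\cG^\text{eff}_{N,\ell}=\cG_{N,\ell}+\cJ_{N,\ell}$ into a ``pure'' part (treatable by quoting Propositions 7.4--7.7 of \cite{BBCS2}) and the $u_\ell$-kinetic correction $\cJ_{N,\ell}$. The latter contains a term $Z_3$ (the normal-ordered quartic piece) which does \emph{not} get absorbed into $D_p$ or into the scattering equation; the paper bounds it directly using the momentum-space smallness of $\widehat\omega_\ell$, namely the estimate $\sup_{p}\sum_r |\widehat\omega_\ell(r)|/|p-r|^\kappa\le CN^{-\kappa}$ in \eqref{eq:sum-o}. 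This is the central new ingredient of the proposition (it is what produces the $N^{-\kappa}\cP^{(2+\kappa)}\cN_+$ error) and is absent from your outline.

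The genuine gap is your derivation of the lower bound \eqref{eq:GN-JN-LB} as a ``corollary'' of \eqref{eq:cG-cJ}. This does not work, for two reasons. First, the error $\cE_{N,\ell}$ in \eqref{eq:cG-cJ} contains $\cV_\ell(\cN_++1)/\sqrt{N\ell}$ with an unfavorable sign, and $\cV_\ell$ cannot be absorbed into $-C(\cN_++1)-CN^{-\kappa}\cP^{(2+\kappa)}(\cN_++1)$: the best available operator bound is $\cV_\ell\leq C(N\ell)^{-1}\cK\cN_+$, which once multiplied by $(\cN_++1)/\sqrt{N\ell}$ produces $\cK\cN_+(\cN_++1)/(N\ell)^{3/2}$, a term growing cubically in $\cN_+$ that is not controlled by the claimed error in \eqref{eq:GN-JN-LB}. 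Second, $\cP^{(5/2)}(\cN_++1)/\sqrt{N\ell}$ is not dominated by $N^{-\kappa}\cP^{(2+\kappa)}(\cN_++1)$ for $\kappa<1/2$: since $p$ ranges over all of $\Lambda^*_+$, the ratio $|p|^{5/2}/|p|^{2+\kappa}=|p|^{1/2-\kappa}$ is unbounded and no H\"older interpolation closes this. The paper's route is different: it invokes a cruder \emph{lower} bound for $\cG_{N,\ell}$ (namely \eqref{eq:GN0-fin}, with error only $-C(\cN_++1)$, not involving $\cK+\cV_\ell$), adds the $\cJ_{N,\ell}$ contribution with the free parameter $\kappa$ kept below $\nu/2$ so that $(N\ell)^{-1/2}\le N^{-\kappa}$ absorbs the remaining errors, and only afterwards applies the scattering equation to massage the constant into $4\pi\aa N$. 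In contrast, for the decomposition \eqref{eq:cG-cJ} the choice $\kappa=1/2$ is made to merge the $\cP^{(2+\kappa)}$ error with the $(N\ell)^{-1/2}$ factor. The two assertions in the proposition are thus proved by two parallel arguments sharing an intermediate step, not one derived from the other.

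Your remaining observations are essentially correct: the completion-of-square argument for $\cQ_{N,\ell}\ge -C(\cN_++1)$ works once $|G_p|<F_p$ and $-F_p+\sqrt{F_p^2-G_p^2}\sim -C/p^6$ are established (Lemma \ref{lm:FpGp}), and $C_{N,\ell}=4\pi\aa(N-1)+\cO(1)$ does follow from \eqref{eq:Vell-zero} and the decay of $\eta_p, D_p$. But these pieces do not rescue the lower bound route; you would need to reorganize the argument to avoid $\cV_\ell$ appearing in the error at all.
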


\begin{proof}  
According to (\ref{eq:HNeff-F}) we can decompose 
\[ \cG^\text{eff}_{N,\ell} = \cG_{N,\ell} + \cJ_{N,\ell} \]
with 
\begin{equation}\label{eq:cGNell} \cG_{N,\ell} = e^{-B(\eta)} U_N \left[ \sum_{p \in \Lambda^*} p^2 a_p^* a_p + \frac{1}{2N} \sum_{p,q,r \in \L^*} \widehat{V}_\ell (r) a^*_{p+r} a_q^* a_{q+r} a_p \right] U_N^* e^{B(\eta)}  \end{equation} 
and 
\begin{equation}\label{eq:cJNell}  \cJ_{N,\ell} = - e^{-B(\eta)} U_N \left[ \sum_{p,q,r \in \L^*} p \cdot (p+r) \widehat{u}_\ell (r) a_{p+r}^* a_{q-r}^* a_p a_q \right] U_N^* e^{B(\eta)}\,. \end{equation} 
We can compute $\cG_{N,\ell}^\text{eff}$ with tools developed in \cite{BBCS2}. From Propositions 7.4 -- 7.7 of \cite{BBCS2}, we obtain, on the one hand, the lower bound
\begin{equation}\label{eq:GN0-fin} \begin{split} \cG_{N,\ell} \geq \; &\frac{(N-1)}{2} \widehat V_\ell(0)   + \sum_{p \in \L^*_+} \big[ p^2 \eta_p  + \widehat V_\ell(p) + \frac{1}{2N} (\widehat{V}_\ell * \eta) (p) \big] \eta_p  \\
& + \sum_{p \in \L^*_+} \big[ p^2 \eta_p + \frac 12 \widehat V_\ell (p) +\frac 1 {2N} (\widehat{V}_\ell * \eta) (p) \big] \big( b_p b_{-p} + b^*_p b^*_{-p} \big) - C (\cN_+ + 1) \end{split} \end{equation} 
and, on the other hand, the approximation 
\be \begin{split} \label{eq:GNl-final}
\cG_{N,\ell} =\;& \frac{(N-1)}{2} \widehat V_\ell(0)   + \sum_{p \in \L^*_+} \big[ p^2 \s_p^2  + \widehat V_\ell(p) \s_p^2 + \widehat V_\ell(p) \g_p \s_p +\frac 1 {2N} \sum_{q \in \L^*_+}\widehat V_\ell(p-q) \eta_p \eta_q \big] \\
&  + \sum_{p \in \L^*_+}   \big[2 p^2 \s_p^2 + \widehat V_\ell(p) (\g_p+\s_p)^2 \big]  b^*_p b_p + \cK + \cV_\ell \\
& + \sum_{p \in \L^*_+} \big[ p^2 \s_p \g_p +\frac 12 \widehat V_\ell(p) (\g_p + \s_p)^2  +\frac 1 {2N} \sum_{q \in \L^*_+}\widehat V_\ell(p-q) \eta_q \big]\big( b_p b_{-p} + b^*_p b^*_{-p} \big) + \cE_\cG
\end{split}\ee
where
\[
\pm \cE_\cG \leq  \frac{C}{\sqrt{N \ell}} \,(\cK + \cV_{\ell} +1)(\cN_++1) \,.
\]
Some care is required  here when we apply results from \cite{BBCS2}. First of all, the interaction potential considered in \cite{BBCS2} has the form $N^{3\beta} W (N^\beta x)$, for some $0 < \beta < 1$. The potential $V_\ell (x) = 2 N \lambda_\ell \chi_\ell (x) f_\ell^2 (x)$ appearing in (\ref{eq:cGNell}) has this form only if we approximate $f_\ell \simeq 1$ and $\lambda_\ell \simeq 3\frak{a} / (N \ell^3)$. A closer inspection to \cite{BBCS2} shows, however, that (\ref{eq:GNl-final}) does not rely on the precise form of the interaction potential but instead only on the bounds 
\[ \label{eq:Vell-eta}
\sup_{q \in \L^*_+}\sum_{\substack{r \in \L^*\\ r \neq -q}} \frac{|\widehat V_\ell(r)|}{|q+r|^2} \leq C\ell^{-1}\,, \qquad  \sum_{\substack{r \in \L^*,\, q \in \L^*_+\\ r \neq -q}} \frac{|\widehat V_\ell(r)|}{|q+r|^2 |q|^2} \leq C\ell^{-2}\,,
\]
which are the analog of \cite[Eq. (7.5) and (7.75)]{BBCS2} and follow from  $\|\widehat V_\ell\|_\io \leq C$ and $\| \widehat V_\ell \|_2 \leq C \ell^{-3/2}$. Moreover, the estimate (\ref{eq:GNl-final}) was proven in \cite{BBCS2} under the assumption that $W = \lambda V$, for a sufficiently small $\lambda > 0$. 
This assumption was used in \cite{BBCS2} to make sure that the $\ell^2$-norm of $\eta$
is sufficiently small. As later shown in \cite{BBCS4}, smallness of $\| \eta \|$ can also be achieved by choosing the parameter $\ell_0$ small enough, with no restriction on the size of the interaction potential \footnote{In \cite{BBCS2}, smallness of the potential was more importantly used to establish Bose-Einstein condensation for low-energy states; here, we do not need to show Bose-Einstein condensation, because we are only interested in an upper bound on the energy.}. Finally, in \cite{BBCS2}, the choice of $\eta$ was slightly different from the definition given after (\ref{eq:g0}) (the presence of the second term on the r.h.s. of (\ref{eq:HNeff-F}) affects the choice of $\eta$, as we will see shortly). However, the derivation of (\ref{eq:GNl-final}) does not depend on the exact form of $\eta$, but rather on bounds, proven in Lemma \ref{lm:driftscatt}, that holds for both choices of $\eta$. This explains why (\ref{eq:GNl-final}) holds true, for sufficiently small values of $\ell_0$. 

Let us now consider (\ref{eq:cJNell}). With (\ref{eq:U-rules}) we find 
\[ U_N \Big[ - \sum_{p,q,r \in \L^*} p \cdot (p+r) \widehat{u}_\ell (r) a_{p+r}^* a_{q-r}^* a_p a_q \Big] U_N^* = Z_1 + Z_2 + Z_3 \]
with 
\begin{equation} \label{eq:Z1-3} \begin{split}
Z_1 =\; &- (N-\cN_+)\, \widehat u_\ell(0) \sum_{p \in \L^*_+} p^2 a^*_p a_p \\
Z_2 = \; &- \sqrt{N}   \sum_{\substack{p, r \in \L^*_+: \\
p+r \neq 0}}   p \cdot (p+r)\, \widehat u_\ell(r) \big(  b^*_{p+r} a^*_{-r} a_{p}  + \text{h.c.} \big) \\
Z_3 =\;  &- \sum_{\substack{r \in \L^*,\, p,q \in \L^*_+: \\
r \neq -p,q}}   p \cdot (p+r)\, \widehat u_\ell(r)  a^*_{p+r} a^*_{q-r} a_{p}a_q \,.
\end{split}\end{equation} 
Using Lemma \ref{lm:hardcorescatt} to bound $\| \widehat{u}_\ell \|_\infty \leq \| u_\ell\|_1 \leq C \| \o_\ell\|_1 \leq C \ell^2 N^{-1}$ and Lemma \ref{lm:N-Pgrow} (in particular, the second inequality in (\ref{eq:P_r}), with $r = 2$), we find 
\[ 
|\langle \xi, e^{-B(\eta)} Z_1 e^{B(\eta)}\xi \rangle| \leq  C  N \| u_\ell\|_1  \| (\cK + 1)^{1/2} e^{B(\eta)}\xi \|^2 \leq C \ell  \| (\cK+1)^{1/2} \xi \|^2 \,,
\]
because $\cN_+ \leq C \cK$. As for the term $Z_{2}$, we have, from $\| u_\ell \|_2 \leq C \| \o_\ell \|_2 \leq C \ell^{1/2} / N$ and by Lemma \ref{lm:N-Pgrow},  
\[  \begin{split}
&|\langle e^{B(\eta)}\xi, Z_{2} e^{B(\eta)}\xi \rangle|\\
& \;\leq \sqrt{N} \Big( \sum_{\substack{p, r \in \L^*_+: \\
p+r \neq 0}}  | p+r|^2 \| a_{p+r}a_{-r} e^{B(\eta)}\xi \|^2  \Big)^{1/2}  \Big( \sum_{\substack{p, r \in \L^*_+: \\
p+r \neq 0}}   |\widehat u_\ell(r)|^2  |p|^2 \,\| a_p e^{B(\eta)}\xi \|^2 \Big)^{1/2} \\
& \;\leq C  \sqrt N \, \| u_\ell\|_2  \| \cK^{1/2} \cN_+^{1/2} e^{B(\eta)}\xi \| \| \cK^{1/2} e^{B(\eta)} \xi \|    \\ & \;\leq C (\ell N)^{-1/2} \| (\cK+1)^{1/2} (\cN_++1)^{1/2}\xi \|^2\,.
\end{split}\]
Hence, we obtain 
\[
\cJ_{N,\ell} = Z_3 + \int_0^1 e^{-tB(\eta)} [Z_3, B(\eta)] e^{t B(\eta)}   dt + \cE_1
\]
with 
\[
\pm \cE_{1} \leq C (N\ell)^{-1/2}  (\cK+1) (\cN_++1)\,.
\]

Using \eqref{eq:commaa} we find 
\[
[Z_3, B(\eta)]  = \sum_{i=1}^3 W_i
\]
with
\[ \begin{split}
W_1 =\; & \sum_{p \in \L^*_+} D_p  \big(b^*_p b^*_{-p} + b_p b_{-p} \big)\\
W_2 =\;   &- \sum_{\substack{r \in \L^*,\, p,q \in \L^*_+: \\
p+r,\, q-r \neq 0}}   p \cdot (p+r)\, \widehat u_\ell(r) \eta_{q-r} \big( b^*_{p} b^*_{r-q} a^*_{q}a_{p+r} +\hc \big) \\
W_3 =\;   &- \sum_{\substack{r \in \L^*,\, p,q \in \L^*_+: \\
p+r,\, q-r \neq 0}}   p \cdot (p+r)\, \widehat u_\ell(r) \eta_{p+r} \big( b^*_{-p-r} b^*_{q} a^*_{p}a_{q-r} +\hc \big) \,.
\end{split}\]
For any $t \in [0;1]$, we have (using again $\| u_\ell \|_2 \leq C \ell^{1/2}/N$ and $\| \eta \|_2 \leq C$, from (\ref{eq:etap})) 
\[ \begin{split}
|\langle & e^{tB(\eta)}\xi, W_2 e^{tB(\eta)}\xi \rangle| \\
& \leq  \Big[ \sum_{\substack{r \in \L^*,\, p,q \in \L^*_+: \\
p+r,\, q-r \neq 0}}  | p|^2 \| a_q b_{r-q}  b_{p} (\cN_+ + 1)^{-1/2} e^{tB(\eta)}\xi\|  \Big]^{1/2}   \\
 & \hskip 0.5cm \times \Big[ \sum_{\substack{r \in \L^*,\, p,q \in \L^*_+: \\
p+r,\, q-r \neq 0}} | \widehat u_\ell(r)|^2 |\eta_{q-r}|^2  |p+r|^2 \|a_{p+r} (\cN_++1)^{1/2}e^{tB(\eta)}\xi\|\Big]^{1/2} \\
& \leq  \| u_\ell\|_2 \| \eta\|_2 \|  \cK^{1/2} (\cN_++1)^{1/2} e^{tB(\eta)}\xi\|^2   \leq C \ell^{1/2}  (N\ell)^{-1}  \|  \cK^{1/2} (\cN_++1)^{1/2}\xi\|^2\,.
\end{split}\]
The contribution of $W_3$ can be bounded similarly. Hence, 
\[ \label{eq:Z4-comm1}
\cJ_{N,\ell}  = Z_3 + W_1 + \int_0^1 dt \int_0^t ds \, e^{-s B(\eta)} \big[ W_1, B(\eta) \big]  e^{s B(\eta)}   + \cE_2
\]
with 
\[
\pm \cE_2  \leq \frac{C}{\sqrt{N\ell}} \, (\cK + 1)  (\cN_++1)\,.
\]

With  \eqref{eq:comm-bp}, we compute 
\[ \big[ W_1 , B(\eta ) \big] = \sum_{i=1}^4 X_i + \text{h.c.} \]
where 
\[ \begin{split}
X_1 =\;&  \sum_{p\in \L^*_+} D_p \eta_p \\
X_2 =\;&  \sum_{p\in \L^*_+} D_p \eta_p  \, \bigg[ \left(1- \frac{\cN_+}N \right)\left(1- \frac{\cN_++1}N   \right) -1 \bigg] \\
X_3 =\;&   2\sum_{p\in \L^*_+} D_p \eta_p\, a^*_p   \left(1 -  \frac{\cN_++2}N \right) \left(1 -  \frac{\cN_++1}N \right)   a_p\\
X_4 =\; & - \frac 1 {N} \sum_{p,q \in \L^*_+} D_p \eta_q  \, a^*_p a^*_{-p} \bigg[ 2 \left(1- \frac{\cN_+}{N} \right) - \frac 3 N \bigg] a_{q} a_{-q} \,.
\end{split}\]
With \eqref{eq:etap}, we find, for any $t \in [0;1]$, 
\[ \begin{split}
|\langle e^{tB(\eta)}\xi, X_2 e^{tB(\eta)}\xi \rangle| 
& \leq \frac C N  \Big[ \sum_{\substack{p \in \L^*_+ \\ |p|\leq \ell^{-1}}} \frac{1}{|p|^2} + \sum_{\substack{p \in \L^*_+ \\ |p|\geq \ell^{-1}}} \frac{1}{\ell^2 |p|^4}  \Big] \| (\cN_+ +1)^{1/2}e^{tB(\eta)}\xi\|^2 \\
&  \leq C (N \ell)^{-1}  \| (\cN_+ + 1)^{1/2}\xi\|^2\,. 
\end{split}\] 
Again from (\ref{eq:etap}), we have $|D_p \eta_p| \leq C (N\ell)^{-1}$ for all $p \in \L^*_+$. Thus 
\[ \begin{split}
|\langle e^{tB(\eta)}\xi, X_3 e^{tB(\eta)}\xi \rangle|  & \leq C (N\ell)^{-1} \| \cN_+^{1/2} e^{t B (\eta)} \xi \|^2 \leq C (N\ell)^{-1} \| (\cN_+ + 1)^{1/2} \xi \|^2 \,.\end{split} \]
As for the expectation of $X_4$, using  \eqref{eq:etap} we obtain  
\[\begin{split}  \label{eq:X4-cNcK}
&|\langle e^{tB(\eta)}\xi, X_4 e^{tB(\eta)}\xi \rangle|  \\
& \leq \frac C N \Big( \sum_{p,q \in \L^*_+} \frac{|D_p|^2}{|p|^2} \| a_{-q} a_q\, e^{tB(\eta)} \xi\|^2 \Big)^{1/2} \\
& \hskip 3cm \times\Big( \sum_{p,q \in \L^*_+} |\eta_q |^2  |p|^2\| a_p a_{-p} \,e^{tB(\eta)}\xi\|^2 \Big)^{1/2} \\
& \leq \frac{C}{N \ell^{1/2}} \| \cK^{1/2}\cN_+^{1/2}e^{tB(\eta)}\xi\| \| \cN_+ e^{tB(\eta)}\xi\|  \leq \frac{C}{N \ell}\, \| (\cK + 1)^{1/2} (\cN_+ + 1) \xi\|^2 \, . 
\end{split}\]
We conclude that 
\[ \cJ_{N,\ell} = Z_3 + W_1 + X_1  + \cE_3 \]
with 
\[
\pm \cE_3  \leq \frac{C}{\sqrt{N\ell}} \, (\cK + 1)  (\cN_++1)\,.
\]

Let us now go back to control the term $Z_3$, as defined in (\ref{eq:Z1-3}). We can estimate, for any $\kappa > 0$,  
\[ \begin{split} \label{eq:Z4final}
 | \langle \xi, Z_3  \xi \rangle | &\leq  C \Big( \sum_{\substack{r \in \L^*,\, p,q \in \L^*_+: \\
p+r,\, q-r \neq 0}}   |p|^{2+\k} \frac{|\widehat \o_\ell(r)|}{|p+r|^\k} \, \| a_{p}a_q \xi\|^2 \Big)^{1/2}\\
& \hskip 1.5cm \times \Big( \sum_{\substack{r \in \L^*,\, p',q' \in \L^*_+: \\
p'-r,\, q'+r \neq 0}}   |p'|^{2+\k} \frac{|\widehat \o_\ell(r)|}{|p'-r|^\k} \, \| a_{p'}a_{q'}\xi\|^2 \Big)^{1/2} \\
& \leq \frac{C}{N^{\k}}\,  \langle \xi, \cP^{(2+\k)} \cN_+ \xi \rangle   \,,
\end{split}\]
where we used the change of variables $p'=p+r, q' = q-r$ and the bound 
\begin{equation}\label{eq:sum-o}
\sup_{p \in \L^*_+} \sum_{r \in \L^*: \,r\neq -p} \frac{|\widehat \o_\ell(r)|}{|p-r|^\k}  \leq C N^{-\k} ,
\end{equation}
valid for any $\kappa > 0$. To prove (\ref{eq:sum-o}), we use the bound (\ref{eq:omegap}) for $|\widehat{\o}_\ell (r)|$. More precisely, we consider separately the sets where i) $|p-r| < N$ and $|r| < N$ (here we use $|\widehat{\o}_\ell (r)| \leq C/(N |r|^2)$ and we estimate $|r|^{-2} |p-r|^{-\kappa} \lesssim |r|^{-2-\kappa} + |p-r|^{-2-\kappa}$), ii) $|p-r| \geq N$ and $|r| \geq N$ (here we apply $|\widehat{\o}_\ell (r)| \leq C / |r|^3$ and we use $|r|^{-3} |p-r|^{-\kappa} \lesssim |r|^{-3-\kappa} + |p-r|^{-3-\kappa}$), iii) $|p-r| < N$ and $|r| \geq N$ (here we estimate $|\widehat{\o}_\ell (r)| \leq CN^{-3}$), iv) $|p-r| \geq N$ and $|r| < N$ (here we use $|\widehat{\o}_\ell (r)| \leq C / (N |r|^2)$ and we estimate $|p-r|^{-\kappa} \leq C N^{-\kappa}$). 

Thus, for any $\kappa >0$, we arrive at 
\[ \cJ_{N,\ell} = \sum_{p\in \L^*_+} D_p \eta_p  +  \sum_{p \in \L^*_+} D_p  \big(b^*_p b^*_{-p} + b_p b_{-p} \big) + \cE_\cJ  \] 
where
\begin{equation} \label{eq:JNl-final} \pm \cE_\cJ \leq \frac{C}{\sqrt{N\ell}} \, (\cK + 1)  (\cN_++1) + \frac{C}{N^\kappa} \cP^{(2 + \kappa)} (\cN_+ +1) . \end{equation}

Combining the last estimate with (\ref{eq:GN0-fin}), we obtain 
\[ \begin{split} \cG_{N,\ell}^\text{eff} \geq  \; &\frac{(N-1)}{2} \widehat V_\ell(0)   + \sum_{p \in \L^*_+} \big[ p^2 \eta_p  + D_p + \widehat V_\ell(p) + \frac{1}{2N} (\widehat{V}_\ell * \eta) (p) \big] \eta_p  \\
& + \sum_{p \in \L^*_+} \big[ p^2 \eta_p + D_p + \frac 12 \widehat V_\ell (p) +\frac 1 {2N} (\widehat{V}_\ell * \eta) (p) \big] \big( b_p b_{-p} + b^*_p b^*_{-p} \big) \\ &- C (\cN_+ + 1) - \frac{C}{N^\kappa} \cP^{(2+\kappa)} (\cN_+ + 1)\,, \end{split} \]
now with the restriction $0 < \kappa < \nu /2$ (from $\ell \geq N^{-1 + \nu}$, it then follows that $N\ell \geq N^{\nu} \geq N^{2\kappa}$; thus, the first term on the r.h.s. of (\ref{eq:JNl-final}) can be controlled by the second). With the scattering equation (\ref{eq:eta-scat}) and using the bound on the second line of \eqref{eq:intchiell0},  
we obtain 
\[ \cG_{N,\ell}^\text{eff} \geq \frac{N}{2} \int V_\ell (x) g_{\ell_0} (x) dx - C (\cN_+ + 1) - \frac{C}{N^\kappa} \cP^{(2+\kappa)} (\cN_+ + 1) \]
for any $0< \kappa < \nu /2$. With \eqref{eq:intVellg0}, we find (\ref{eq:GN-JN-LB}). 

On the other hand, combining (\ref{eq:JNl-final}) with (\ref{eq:GNl-final}), we arrive at 
\[ \begin{split} \label{eq:GNleff-final}
&\cG^\text{eff}_{N,\ell} \\ & = \frac{(N-1)}{2} \widehat V_\ell(0)   + \sum_{p \in \L^*_+} \big[ p^2 \s_p^2  + \widehat V_\ell(p) \s_p^2 + \widehat V_\ell(p) \g_p \s_p +\frac 1 {2N} \sum_{q \in \L^*_+}\widehat V_\ell(p-q) \eta_p \eta_q + D_p \eta_p  \big] \\
&\qquad + \sum_{p \in \L^*_+} \big[ p^2 \s_p \g_p +\frac 12 \widehat V_\ell(p) (\g_p + \s_p)^2  +\frac 1 {2N} \sum_{q \in \L^*_+}\widehat V_\ell(p-q) \eta_q + D_p  \big]\big( b_p b_{-p} + b^*_p b^*_{-p} \big) \\
&\qquad + \sum_{p \in \L^*_+}   \big[2 p^2 \s_p^2 + \widehat V_\ell(p) (\g_p+\s_p)^2 \big]  b^*_p b_p + \cK + \cV_\ell + \wt{\cE}_{N,\ell}
\end{split}\]
where 
\[
\pm\, {\widetilde \cE}_{N,\ell} \leq  \frac{C}{\sqrt{N \ell}} \,(\cK + \cV_{\ell} +1) (\cN_+ + 1) + \frac{C}{N^\kappa} \cP^{(2+\kappa)} (\cN_++1) \,,
\] 
for any $0 < \kappa < 1$. Observing that 
\[ \begin{split}
\Big|\frac 1 N \sum_{p \in \L^*_+,\, q \in \L^* } \widehat V_\ell (p-q) \eta_q (\s_p + \g_p)^2  \langle \xi, b^*_p b_{p}\xi\rangle \Big| &\leq \frac {C}{N\ell} \| (\cN_++1)^{1/2}\xi\|^2 \\
\Big|\frac 1 N \sum_{p \in \L^*_+,\, q \in \L^* } \widehat V_\ell (p-q) \eta_q \big((\s_p + \g_p)^2 -1  \big) \langle \xi, b_p b_{-p}\xi\rangle \Big| &\leq \frac {C}{N\ell} \| (\cN_++1)^{1/2}\xi\|^2 \\
\Big| \sum_{p\in\L^*_+} p^2 \langle \xi, (b^*_p b_p - a^*_p a_p)\xi \rangle \Big| &\leq \frac C N \|\cK^{1/2} \cN_+^{1/2}\xi\|^2 
\end{split}\]
and that 
\[ \begin{split}  \langle \xi, \cV_\ell  \, \xi \rangle &= \frac{1}{2N} \sum_{\substack{p,q \in \L^*_+, r \in \L^* \\ r \not = -p , -q}} \widehat{V}_\ell (r)  \langle \xi, a_{p+r}^* a_q^* a_{q+r} a_p \xi \rangle \\ &\leq \frac{1}{2N} \sum_{\substack{p,q \in \L^*_+, r \in \L^* \\ r \not = -p , -q}} \frac{|\widehat{V}_\ell (r)}{|q+r|^2} |p+r|^2 \| a_{p+r} a_q \xi \|^2 \leq \frac{C}{N\ell} \| \cK^{1/2} \cN_+^{1/2} \xi \|^2 \end{split} \]
we arrive at (\ref{eq:cG-cJ}), choosing $\kappa = 1/2$.

\end{proof}

\section{Diagonalization of the effective Hamiltonian} 
\label{sec:diag}

According to Prop. \ref{prop:cGcJ}, we need to find a good ansatz for the ground state of the quadratic Hamiltonian $\cQ_{N,\ell}$, defined in (\ref{eq:cQNell}). To this end, we are going to conjugate $\cG_{N,\ell}^\text{eff}$ with a second generalized Bogoliubov transformation, diagonalizing $\cQ_{N,\ell}$. In order to define the appropriate Bogoliubov transformation, we first need to establish some properties of the coefficients $F_p, G_p$, defined in (\ref{eq:FpGp}).
\begin{lemma}\label{lm:FpGp}
Suppose $\ell \geq N^{-1+\nu}$, for some $\nu > 0$. Then there exists a constant $C > 0$ such that  
\begin{equation*}
p^2 /2 \leq F_p \leq C (1+p^2) , \, \qquad |G_p| \leq \frac{C }{p^2}\,, \qquad |G_p| < F_p
\end{equation*}  
for all $N \in \bN$ large enough. 
\end{lemma}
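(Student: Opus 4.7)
\textbf{Proof proposal for Lemma \ref{lm:FpGp}.}

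My strategy is to avoid analyzing $F_p$ and $G_p$ in isolation and instead to work with the two combinations $F_p \pm G_p$, which telescope neatly. Using $\g_p = \cosh \eta_p$, $\s_p = \sinh\eta_p$, so that $\g_p \pm \s_p = e^{\pm \eta_p}$, a direct computation gives
\begin{equation*}
F_p - G_p = p^2 (\g_p - \s_p)^2 - 2 D_p = p^2 e^{-2\eta_p} - 2D_p,
\end{equation*}
\begin{equation*}
F_p + G_p = (\g_p + \s_p)^2 (p^2 + 2 c_p) + 2D_p = e^{2\eta_p}(p^2 + 2 c_p) + 2 D_p,
\end{equation*}
where $c_p := (\widehat V_\ell \ast \widehat g_{\ell_0})(p)$, which is the Fourier coefficient of the non-negative, radial, $B_\ell$-supported function $V_\ell g_{\ell_0}$. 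From Lemma \ref{lm:driftscatt} I will freely use: $|\eta_p|\le C/p^2$, $|D_p|\le C\min\{(N\ell)^{-1}, (\ell^2 p^2)^{-1}\}$, and $|c_p|\le C$ together with $|c_p|\le C/(\ell^2 p^2)$ (from (\ref{eq:intVellg0})). I will additionally use the sign information that $c_p\ge 0$ whenever $|p|\ell < \pi$: this follows from the radial Fourier formula $\widehat{V_\ell g_{\ell_0}}(p) = 4\pi \int_0^\ell V_\ell(r) g_{\ell_0}(r) \tfrac{\sin(|p|r)}{|p|r}\, r^2 dr$ and the positivity of $\sin(|p|r)/(|p|r)$ for $|p|r<\pi$.

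For the two-sided bound on $F_p$: the upper bound $F_p\le C(1+p^2)$ is immediate from $\g_p^2+\s_p^2\le C$, $(\g_p+\s_p)^2\le C$ and $|c_p|\le C$. For the lower bound $F_p\ge p^2/2$, I would split according to $|p|\ell\lessgtr \pi$. In the regime $|p|\ell\le \pi$, positivity $c_p\ge 0$ gives $F_p\ge p^2(\g_p^2+\s_p^2)\ge p^2$. In the regime $|p|\ell>\pi$, the decay bound yields $|c_p(\g_p+\s_p)^2|\le C/(\ell^2 p^2)$, and since $p^2 > \pi^2/\ell^2\to\infty$, this correction is dominated by $p^2/2$ for $N$ large.

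For $|G_p|\le C/p^2$: this requires the scattering equation (\ref{eq:eta-scat}) which, written in the form $p^2\eta_p + D_p + c_p/2 = N\lambda_{\ell_0}(\widehat{\chi_{\ell_0}f_\ell^2}\ast \widehat g_{\ell_0})(p)$, gives the key cancellation $|p^2\eta_p + D_p + c_p/2|\le C/p^2$ (the right-hand side being controlled by (\ref{eq:intchiell0})). Taylor-expanding $2\g_p\s_p = \sinh(2\eta_p) = 2\eta_p + O(\eta_p^3)$ and $(\g_p+\s_p)^2 = e^{2\eta_p} = 1 + 2\eta_p + O(\eta_p^2)$ and collecting, I get
\[ G_p = 2\big[p^2 \eta_p + D_p + c_p/2\big] + 2 c_p \eta_p + O(p^2\eta_p^3) + O(|c_p|\eta_p^2). \]
The first bracket is $O(1/p^2)$ by the scattering equation; the remaining terms are estimated by $|c_p \eta_p|\le C/p^2$, $|p^2 \eta_p^3|\le C/p^4$ and $|c_p|\eta_p^2\le C/p^4$, so $|G_p|\le C/p^2$.

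Finally, $|G_p|<F_p$ is equivalent to $F_p\pm G_p > 0$, and I would read this off the closed forms above. Since $p^2 e^{-2\eta_p}\ge c p^2$ for some $c>0$, and $|D_p|$ is controlled by $C/(N\ell)$ for moderate $|p|$ and by $C/(\ell^2 p^2)$ for large $|p|$, the quantity $F_p - G_p = p^2 e^{-2\eta_p}-2D_p$ is strictly positive for all $p\in\L^*_+$ once $N\ell$ is large and $\ell$ is small. Similarly, for $|p|\ell\le \pi$ one has $p^2 + 2c_p\ge p^2 >0$, and for $|p|\ell>\pi$ both $|c_p|$ and $|D_p|$ are $O(1/(\ell^2 p^2))\ll p^2$, so $F_p + G_p > 0$. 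The main (modest) obstacle is the careful bookkeeping across the $|p|\ell\lessgtr\pi$ split, which forces one to combine the uniform bound $|c_p|\le C$ (and sign information) with the decay bound $|c_p|\le C/(\ell^2 p^2)$; everything else is a direct application of Lemma \ref{lm:driftscatt} and the scattering equation.
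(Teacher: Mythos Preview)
Your proof is correct and follows essentially the same route as the paper: the upper bound on $F_p$ is identical, the estimate $|G_p|\le C/p^2$ is obtained (as in the paper) by invoking the scattering equation (\ref{eq:eta-scat}) to expose the cancellation in $p^2\eta_p + D_p + \tfrac12 (\widehat{V_\ell}\ast\widehat g_{\ell_0})(p)$, and the strict inequality $|G_p|<F_p$ is shown (again as in the paper) by computing the closed forms $F_p\pm G_p$ and checking positivity.

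The one noteworthy variation is your treatment of the lower bound $F_p\ge p^2/2$. The paper splits at $|p|\lessgtr \ell^{-1/2}$ and, in the small-$|p|$ regime, uses a Lipschitz-in-$p$ comparison of $(\widehat{\chi_\ell f_\ell^2}\ast\widehat g_{\ell_0})(p)$ to its value at $p=0$ (which is positive). You instead split at $|p|\ell\lessgtr \pi$ and, for $|p|\ell\le \pi$, use directly that $c_p\ge 0$ via the radial Fourier formula and positivity of the sinc kernel on $[0,\pi]$. This is a clean and slightly more direct alternative; it exploits the nonnegativity and $B_\ell$-support of $V_\ell g_{\ell_0}$, which the paper's argument does not need. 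Both approaches rely on the same inputs from Lemma~\ref{lm:driftscatt} and yield the same conclusion.
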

\begin{proof}  Recall the notations $\g_p=\cosh(\eta_p)$ and $\s_p=\sinh(\eta_p)$. 
With $(\s^2_p + \g^2_p) \leq C$ (from the boundedness of $\eta_p$) and \eqref{eq:intchiell0} in Lemma \ref{lm:driftscatt}, 
we immediately obtain $F_p \leq C (1+ p^2)$. To prove the lower bound for $F_p$, let us first consider $|p| > \ell^{-1/2}$. With $(\s^2_p + \g^2_p)=\cosh(2\eta_p) \geq 1$, we find $F_p \geq p^2 - C \geq p^2/2$, if $N$ is large enough (so that $\ell$ is small enough). For $|p| \leq \ell^{-1/2}$, we use $(\widehat{\chi_\ell f_\ell^2} * \widehat{g}_{\ell_0} ) (0) > 0$ to estimate  
\[
\left( \widehat{\chi_\ell f_\ell^2} * \widehat{g}_{\ell_0}  \right)\left( p\right)
> \left( \widehat{\chi_\ell f_\ell^2} * \widehat{g}_{\ell_0}  \right)\left( p\right) - \left( \widehat{\chi_\ell f_\ell^2} * \widehat{g}_{\ell_0}  \right)\left( 0\right) \,.
\]
With 
\[
\left| \left( \widehat{\chi_\ell f_\ell^2} * \widehat{g}_{\ell_0}  \right)\left( p\right) - \left( \widehat{\chi_\ell f_\ell^2} * \widehat{g}_{\ell_0}  \right)\left( 0\right) \right| \leq C |p|  \int |x| \chi_\ell (x) f_\ell^2 ( x) g_{\ell_0} (x) dx \leq C \ell^{\frac{7}{2}}  
\]
we conclude that 
\[
F_p \geq p^2 - C\ell^{1/2} \geq \frac{p^2}{2} .
\]

Next, we show $|G_p| \leq C/p^2$. With the scattering equation \eqref{eq:eta-scat}, we obtain
\[ G_p = 2N \lambda_{\ell_0} (\widehat{\chi_{\ell_0} f_\ell^2} * \widehat{g}_{\ell_0}) (p) + 2p^2 (\g_p\s_p - \eta_p) + (\widehat{V}_\ell \ast \widehat{g}_{\ell_0} )(p) \big[ (\gamma_p+\sigma_p)^2-1\big]\,.  \]
Since 
\be\label{eq:gamma_sigma}\begin{split}
&\left|\gamma_p \sigma_p - \eta_p \right|=\Big|\frac12\sinh(2\eta_p)-\eta_p\Big| \leq C \left| \eta_p\right|^3\leq \frac{C}{|p|^6} \: , \\
&\big| (\gamma_p + \sigma_p)^2-1 \big|=\Big|\sinh(2\eta_p)+\cosh(2\eta_p)-1\Big| \leq C |\eta_p| \leq \frac{C}{p^2} 
\end{split}\ee
and using \eqref{eq:intchiell0} we obtain $|G_p| \leq C / p^2$, as claimed. 

It remains to show $| G_p | \leq F_p $. To this end, we write 
\begin{align*}
F_p-G_p&=p^2\left(\gamma_p-\sigma_p\right)^2 -2D_p , \\
F_p+G_p&= \big[ p^2 + 2 (\widehat{V}_\ell * \widehat{g}_{\ell_0})  (p) \big] (\gamma_p+\sigma_p )^2 + 2 D_p .
\end{align*}
By Lemma \ref{lm:driftscatt} we have $\left| D_p\right|\leq C/ (N\ell)$. Hence, we find, for $N$ large enough, $F_p - G_p \geq p^2 - C /(N\ell) \geq 0$ and, similarly as in the proof of $F_p \geq p^2/2$ (distinguishing small and large $|p|$), $F_p + G_p \geq C p^2 - C/ (N\ell) > 0$. This shows that $F_p > |G_p|$ and concludes the proof of the lemma.
\end{proof}

With Lemma \ref{lm:FpGp}, using in particular the bound $|G_p| < F_p$, we can define, for every $p \in \L^*_+$, $\tau_p \in \bR$ through the identity 
\[ \tanh (2\tau_p) = - \frac{G_p}{F_p}\,. \]
Equivalently,
\begin{equation}\label{eq:taup} \tau_p = \frac{1}{4} \log \frac{1- G_p/F_p}{1+G_p/F_p}. \end{equation} 
From  Lemma \ref{lm:FpGp} we obtain
\begin{equation}\label{eq:tau-dec} |\tau_p| \leq C \frac{|G_p|}{F_p} \leq \frac{C}{|p|^{4}} \end{equation}
for all $p \in \Lambda^*_+$. With the coefficients $\tau_p$, we define the antisymmetric operator
\begin{equation}\label{eq:Btau} B(\tau) = \frac{1}{2} \sum_{p \in \L^*_+} \tau_p \big( b_p^* b_{-p}^* - b_p b_{-p} \big)  \end{equation} 
and we consider the generalized Bogoliubov transformation $e^{B(\tau)}$. 
\begin{lemma} \label{lm:action-tau}
Let $\t_p$ be defined as  in \eqref{eq:taup}. Then, for every $n \in \bN$ any $r\in (0 ; 5)$ there exists a constant $C>0$ such that 
\be  
e^{-B(\tau)}   (\cK+\cV_\ell+\cP^{(r)}+1)(\cN_+ +1)^n e^{B(\tau)}  \leq   C (\cK+\cV_\ell + \cP^{(r)} + 1)(\cN_+ +1)^n\;.
\ee
\end{lemma}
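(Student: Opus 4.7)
I would follow the Gronwall strategy used in Lemma~\ref{lm:N-Pgrow}. Fix $\xi \in \cF^{\leq N}_+$, set $A := (\cK + \cV_\ell + \cP^{(r)} + 1)(\cN_++1)^n$ and define $\Phi_\xi(t) := \langle \xi, e^{-tB(\tau)} A\, e^{tB(\tau)}\xi\rangle$ for $t\in[0,1]$. Since $\Phi'_\xi(t) = \langle \xi, e^{-tB(\tau)}[A,B(\tau)] e^{tB(\tau)}\xi\rangle$, it suffices to establish an operator estimate of the form $\pm [A,B(\tau)] \leq C A$ (modulo harmless lower-order terms that will be absorbed by the $(\cN_++1)^n$-factor) and then invoke Gronwall's inequality.

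The factor $(\cN_++1)^n$ is peeled off first via Leibniz,
\[
[A,B(\tau)] = [\cK + \cV_\ell + \cP^{(r)}, B(\tau)](\cN_++1)^n + (\cK + \cV_\ell + \cP^{(r)} + 1)[(\cN_++1)^n, B(\tau)],
\]
where the second summand is controlled by the elementary bound $\pm [(\cN_++1)^n, B(\tau)] \leq C(\cN_++1)^n$ (since $B(\tau)$ shifts $\cN_+$ by $\pm 2$) combined with the first inequality of Lemma~\ref{lm:N-Pgrow}, whose proof applies verbatim with $\tau$ replacing $\eta$ since it only uses the finiteness of $\|\tau\|_2$, guaranteed by (\ref{eq:tau-dec}). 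For the first summand, the two purely quadratic commutators
\[
[\cK, B(\tau)] = \sum_{p \in \L^*_+} p^2 \tau_p (b_p^* b_{-p}^* + b_p b_{-p}), \qquad [\cP^{(r)}, B(\tau)] = \sum_{p \in \L^*_+} |p|^r \tau_p (b_p^* b_{-p}^* + b_p b_{-p})
\]
are treated by Cauchy-Schwarz with the splitting $|p|^s \tau_p = |p|^{s/2}\cdot(|p|^{s/2}\tau_p)$. The decay $|\tau_p| \leq C|p|^{-4}$ from (\ref{eq:tau-dec}) ensures $\sum_p |p|^s \tau_p^2 \leq C \sum_p |p|^{s-8} < \infty$ for every $s < 5$, so one factor $|p|^{s/2}$ is absorbed into $\cK$ (when $s = 2$) or $\cP^{(r)}$ (when $s = r$), and the other factor produces at most $C(\cN_++1)$.

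The main obstacle is $[\cV_\ell, B(\tau)]$, the only genuinely new contribution with respect to Lemma~\ref{lm:N-Pgrow}. Expanding via (\ref{eq:comm-bp})--(\ref{eq:commaa}), it splits (modulo $1/N$ corrections arising from the modified commutation relations) into a quartic piece of the schematic form $N^{-1}\sum_{p,q,r,k} \widehat V_\ell(r)\tau_k$ times four (modified) creation/annihilation operators, and a quadratic piece of the form $\sum_p \bigl[(\widehat V_\ell\ast\tau)(p) + \text{lower order}\bigr](b_p^* b_{-p}^* + \hc)$. The quadratic remainder is controlled as in the previous paragraph, using the uniform bound $|(\widehat V_\ell\ast\tau)(p)| \leq \|\widehat V_\ell\|_\infty \|\tau\|_1 \leq C$ (both norms being finite, the former since $\|\widehat V_\ell\|_\infty \leq \|V_\ell\|_1 \leq C$ by Lemma~\ref{lm:hardcorescatt}, the latter since $|\tau_p| \leq C|p|^{-4}$ is summable over $\L^*_+$). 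The quartic piece is estimated in the spirit of the bound on $W_2$ in the proof of Proposition~\ref{prop:cGcJ}: Cauchy-Schwarz together with $\|\widehat V_\ell\|_\infty \leq C$ and the pointwise decay of $\tau_p$ permits the reconstruction of a factor of $\cV_\ell$ (or of $\cK$) on one side of the inner product and a factor of $(\cN_++1)$ on the other. Collecting all these bounds one obtains $|\Phi'_\xi(t)| \leq C \Phi_\xi(t)$ uniformly in $t \in [0,1]$, and Gronwall's lemma closes the argument.
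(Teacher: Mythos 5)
Your Gronwall strategy is indeed the route the paper follows: the proof in the text proceeds ``as in [Lemma~5.4, BBCS2]'' for the $(\cK+\cV_\ell+1)(\cN_++1)$ part and handles $\cP^{(r)}$ exactly as in Lemma~\ref{lm:N-Pgrow}, with $\eta$ replaced by $\tau$ and the improved decay $|\tau_p|\leq C|p|^{-4}$ from (\ref{eq:tau-dec}) giving $\sum_q|q|^r|\tau_q|^2\leq C$ uniformly for $r<5$ (so no $\ell^{1-r}$ loss). Your handling of the $(\cN_++1)^n$ factor, of $[\cK,B(\tau)]$, of $[\cP^{(r)},B(\tau)]$ and of the quadratic remainder $(\widehat V_\ell\ast\tau)(p)$ in $[\cV_\ell,B(\tau)]$ is sound, and the key facts $\|\tau\|_1,\|\tau\|_2,\|\tau\|_{H^2}\leq C$ and $\|\widehat V_\ell\|_\infty\leq C$ are the right ingredients.

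The gap is in the treatment of the cubic-in-excitations (i.e.\ three creation, one annihilation, schematically $N^{-1}\sum\widehat V_\ell(r)\tau_k\, a^*_{k+r}a^*_q b^*_{-k}a_{q+r}$, plus its hermitian conjugate) piece of $[\cV_\ell,B(\tau)]$. The analogy you draw to the $W_2$ bound in the proof of Proposition~\ref{prop:cGcJ} is not apt: $W_2$ carries the extra kernel factor $p\cdot(p+r)$, and the estimate there critically sends $|p|^2$ and $|p+r|^2$ to opposite sides of the Cauchy--Schwarz, producing $\cK^{1/2}(\cN_++1)^{1/2}$ on both sides with a small constant $\|u_\ell\|_2\|\eta\|_2$. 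In $[\cV_\ell,B(\tau)]$ there is no such derivative factor; the kernel is just $\widehat V_\ell(r)\tau_k$. A naive Cauchy--Schwarz then forces you to pay either $\|\widehat V_\ell\|_1\sim\ell^{-3}$ (if you sum $r$ directly) or extra factors of $\cN_+$ (if you use $\|\widehat V_\ell\|_\infty$), neither of which closes the Gronwall inequality with the stated right-hand side. Reconstructing $\cV_\ell$ on one side is also not immediate, because the commutator term has the wrong particle-number structure $(3,1)$, not $(2,2)$. This is precisely the technical work that [Lemma~5.4, BBCS2] contains, and the paper's own proof simply imports that result (the decisive observation here being that the decay of $\tau$ makes the hypotheses of that lemma uniform in $\ell$ and $N$), rather than re-deriving the $[\cV_\ell,B(\tau)]$ estimate. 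Your plan would need to either supply that estimate explicitly or invoke the reference as the paper does.
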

\begin{proof} 
Proceeding as in \cite[Lemma 5.4]{BBCS2} and using that, by \eqref{eq:tau-dec}, $\| \t\|_1$, $\| \t\|_2$ and $\| \tau \|_{H^2}$ are all bounded uniformly in $\ell$ and $N$, we find 
\[ e^{-B(\tau)}  (\cK + \cV_\ell +1)(\cN_+ +1) e^{B(\tau)}  \leq C (\cK+\cV_\ell + 1)(\cN_+ +1) \,. \]
The growth of $\cP^{(r)}(\cN_++1)$ can be controlled as in Lemma \ref{lm:N-Pgrow}, with the only difference that now $\sum_{q \in \Lambda^*_+} |q|^{r} |\t_q|^2 \leq C$, for all $0 < r < 5$. For $n\geq 1$, we can proceed similarly. 
\end{proof} 

The reason why we are interested in the Bogoliubov transformation $e^{B(\tau)}$ is that it diagonalizes the quadratic operator $\cQ_{N,\ell}$ defined as in Prop. \ref{prop:cGcJ}. 
\begin{lemma}\label{lm:diago}
Let $\cQ_{N,\ell}$ be defined as in (\ref{eq:cQNell}), and  $\tau_p$ as in (\ref{eq:taup}). Then, we have   
\[ 
e^{-B(\tau)} \cQ_{N,\ell} e^{B(\tau)} = \frac{1}{2} \sum_{p \in \Lambda^*_+} \left[ -F_p + \sqrt{F_p^2 - G_p^2} \right] + \sum_{p \in \Lambda^*_+} \sqrt{F_p^2 - G_p^2} \; a_p^* a_p + \delta_{N, \ell} \]
where  
\begin{equation*} \label{eq:pmdelta} \pm \delta_{N,\ell} \leq  \frac{C}{N}  (\cK+1)(\cN+1)\,. \end{equation*}
\end{lemma}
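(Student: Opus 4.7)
The statement is a standard Bogoliubov diagonalization of the quadratic form $\cQ_{N,\ell}$ by the generalized Bogoliubov transformation $e^{B(\tau)}$, in the spirit of Lemma 5.3 of \cite{BBCS2}. The calculation reduces to showing that the choice $\tanh(2\tau_p) = -G_p/F_p$ eliminates the off-diagonal $b_p^* b_{-p}^*$ coupling and produces the eigenvalue $\sqrt{F_p^2 - G_p^2}$, together with controlling the error terms that arise because our $b_p, b_p^*$ are modified (rather than canonical) creation/annihilation operators.

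\textbf{Step 1: action on modified operators.} I would first establish the modified analogue of the classical Bogoliubov formula: for the transformation $e^{B(\tau)}$ with coefficients $\tau_p$ defined in (\ref{eq:taup}), one has
\[
e^{-B(\tau)} b_p e^{B(\tau)} = \gamma_p b_p + \sigma_p b_{-p}^* + \mathfrak{d}_p, \qquad \gamma_p = \cosh \tau_p, \; \sigma_p = \sinh \tau_p,
\]
with a remainder $\mathfrak{d}_p$ of order $1/N$, together with analogous formulas for $b_p^* b_{-p}^*$, $b_p b_{-p}$ and $a_p^* a_p$. These arise by integrating $\tfrac{d}{dt} e^{-tB(\tau)} b_p e^{tB(\tau)} = e^{-tB(\tau)}[b_p, B(\tau)] e^{tB(\tau)}$, computing the commutator through (\ref{eq:comm-bp}), (\ref{eq:commaa}), and isolating into the remainder the contributions that carry a factor $\cN_+/N$. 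The bounds on the remainders are controlled by Lemma \ref{lm:action-tau}.

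\textbf{Step 2: diagonalization identities.} Substituting these expressions into $\cQ_{N,\ell}$ and using $\gamma_p^2 + \sigma_p^2 = \cosh(2\tau_p)$ and $2\gamma_p \sigma_p = \sinh(2\tau_p)$, the coefficient of $(b_p^* b_{-p}^* + b_p b_{-p})/2$ becomes $F_p \sinh(2\tau_p) + G_p \cosh(2\tau_p)$, which vanishes identically by the defining relation $\tanh(2\tau_p) = -G_p/F_p$. The coefficient of $a_p^* a_p$ becomes $F_p \cosh(2\tau_p) + G_p \sinh(2\tau_p)$, and inserting $\cosh(2\tau_p) = F_p/\sqrt{F_p^2 - G_p^2}$, $\sinh(2\tau_p) = -G_p/\sqrt{F_p^2 - G_p^2}$, this equals $\sqrt{F_p^2 - G_p^2}$. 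The constant $\tfrac{1}{2}[-F_p + \sqrt{F_p^2 - G_p^2}]$ emerges from the normal ordering of $b_p b_p^*$, exactly as in canonical Bogoliubov theory.

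\textbf{Error control and main obstacle.} The real technical task is to verify that all cumulative remainder terms are dominated by $(C/N)(\cK+1)(\cN_++1)$. Each remainder is either linear in $\cN_+/N$ (from the modified commutator rule) or quadratic in $b$-operators multiplied by $1/N$. Using the decay $|\tau_p| \leq C |p|^{-4}$ from (\ref{eq:tau-dec}), the bounds $F_p \leq C(1+p^2)$ and $|G_p| \leq C/p^2$ from Lemma \ref{lm:FpGp}, and Lemma \ref{lm:action-tau} to control the growth of $\cK(\cN_++1)$ under conjugation by $e^{B(\tau)}$, I would estimate each error contribution by $(C/N)\,\|(\cK+1)^{1/2}(\cN_++1)^{1/2}\xi\|^2$. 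The main obstacle is bookkeeping: there are several sources of remainders (the $b$-versus-$a$ replacement, the commutator corrections proportional to $\cN_+/N$, and the Duhamel expansion used in Step 1), and one must check case by case that each can be absorbed into $\delta_{N,\ell}$ with precisely one factor of $1/N$ and at most one factor of $\cK+1$, with no inverse powers of $\ell$ ever appearing.
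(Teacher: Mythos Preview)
Your proposal is correct and follows exactly the approach the paper takes: the paper's proof simply cites \cite[Lemma 5.3]{BBCS4} (you wrote \cite{BBCS2}, but the argument is the same) and points to the three ingredients you identify---Lemma \ref{lm:FpGp} for the coefficient bounds (in particular $\|\tau\|_1 \leq C$ from $|\tau_p| \leq C|p|^{-4}$), Lemma \ref{lm:N-Pgrow}, and Lemma \ref{lm:action-tau}. Your sketch just unpacks what that reference contains, and your observation that no inverse powers of $\ell$ appear (because the $\tau$-decay is uniform in $\ell$, unlike $\eta$) is exactly the point.
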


\begin{proof} The proof of Lemma \ref{lm:diago} follows exactly as in \cite[Lemma 5.3]{BBCS4}, using 
Lemma \ref{lm:FpGp} (which implies $\| \t\|_1 \leq C$), Lemma \ref{lm:N-Pgrow} and Lemma \ref{lm:action-tau}.
\end{proof} 

With the generalized Bogoliubov transformation $e^{B(\eta)}$, we define a new excitation Hamiltonian $\cM^\text{eff}_{N,\ell} : \cF_+^{\leq N} \to \cF_+^{\leq N}$, setting 
\footnote{Instead of considering first (in Sect. \ref{sec:GN}) the action of $B(\eta)$ and then (here in Sect. \ref{sec:diag}) the action of $B(\tau)$, we could have combined both unitary maps into a single Bogoliubov transformation $\exp (B(\rho))$, with $\rho$ interpolating between $\eta$, for large momenta, and $\tau$, for small momenta. We chose to keep the two transformations apart, because this allowed us to apply several results from \cite{BBCS2}.}
\be \label{eq:cM}
\cM^\text{eff}_{N, \ell} =  e^{-B(\tau)} \cG_{N,\ell}^\text{eff}  e^{B(\tau)}\,.
\ee
Since the generalized Bogoliubov transformation $e^{B(\tau)}$ diagonalizes the quadratic part of $\cG_{N,\ell}^\text{eff}$, the vacuum vector $\Omega \in \cF_+^{\leq N}$ is a good trial state for $\cM^\text{eff}_{N,\ell}$. This correspond to the trial state $\Phi_N = U_N^* e^{B(\eta)} e^{B(\t)}\O \in L^2_s (\L^N)$ for the Hamiltonian $H_N^\text{eff}$.
\begin{prop}\label{prop:cM}
Let $\cM^\text{eff}_{N,\ell}$  be as defined in (\ref{eq:cM}), with $B(\tau)$ as in (\ref{eq:Btau}) and $\cG_{N,\ell}^\text{eff}$ as in (\ref{eq:cGNeff}), with $\ell \geq N^{-1+\nu}$ for some $\nu > 0$ and $\ell_0 > 0$ small enough. Then, we have 
\begin{equation*}\label{eq:CN+bd} \begin{split} 
\langle \O, \cM^\text{eff}_{N,\ell}\O \rangle 
= \; &4 \pi \aa(N  -1) \, +\,  e_\L \aa^2  \\
&-\frac 12 \sum_{p \in \L^*_+} \bigg[ p^2 + 8 \pi \aa - \sqrt{|p|^4 + 16 \pi \aa p^2} - \frac{(8 \pi \aa)^2}{2p^2}\bigg]  + \cO(N^{-\nu/2})
 \end{split} 
\end{equation*}
with $e_\Lambda$ defined as in (\ref{eq:eLambda}). 
\end{prop}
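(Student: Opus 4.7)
My approach is to combine Proposition~\ref{prop:cGcJ} with Lemma~\ref{lm:diago}. Since the scalar $C_{N,\ell}$ commutes with $e^{B(\tau)}$, the definition (\ref{eq:cM}) and the decomposition (\ref{eq:cG-cJ}) give
\[ \cM^\text{eff}_{N,\ell} = C_{N,\ell} + e^{-B(\tau)} \cQ_{N,\ell}\, e^{B(\tau)} + e^{-B(\tau)} \cE_{N,\ell}\, e^{B(\tau)}\,.\]
Lemma~\ref{lm:diago} diagonalizes the middle term into a scalar plus $\sum_p \sqrt{F_p^2-G_p^2}\, a_p^* a_p$ plus a remainder $\delta_{N,\ell}$. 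Evaluating on the vacuum $\Omega \in \cF^{\leq N}_+$, on which $a_p$ (and hence also $\cK$, $\cV_\ell$, $\cP^{(r)}$) act as zero, only the constants survive:
\[ \langle \Omega, \cM^\text{eff}_{N,\ell}\Omega\rangle = C_{N,\ell} + \tfrac{1}{2}\sum_{p\in\Lambda^*_+}\bigl[-F_p + \sqrt{F_p^2-G_p^2}\,\bigr] + \mathcal{R}_{N,\ell}\,, \]
with $\mathcal{R}_{N,\ell} = \langle\Omega, e^{-B(\tau)}\cE_{N,\ell}e^{B(\tau)}\Omega\rangle + \langle\Omega,\delta_{N,\ell}\Omega\rangle$.

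To bound $\mathcal{R}_{N,\ell}$ I would use Lemma~\ref{lm:action-tau}: the operator estimate $e^{-B(\tau)}(\cK + \cV_\ell + \cP^{(5/2)} + 1)(\cN_+ + 1)\,e^{B(\tau)} \leq C(\cK + \cV_\ell + \cP^{(5/2)} + 1)(\cN_+ + 1)$, combined with a Duhamel expansion $e^{B(\tau)}\Omega = \Omega + \int_0^1 e^{sB(\tau)} B(\tau)\Omega\, ds$ and the uniform bound $\|\tau\|_{\ell^2} < \infty$ from (\ref{eq:tau-dec}), yields $|\mathcal{R}_{N,\ell}| \leq C/\sqrt{N\ell} + C/N$. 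Under the assumption $\ell \geq N^{-1+\nu}$ this is $O(N^{-\nu/2})$, which matches the claimed error.

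The remaining task is to identify $C_{N,\ell} + \frac{1}{2}\sum_p[-F_p+\sqrt{F_p^2-G_p^2}\,]$ with the explicit right-hand side. The first term in (\ref{eq:def-CN}) gives $\frac{N-1}{2}\widehat V_\ell(0) = 4\pi\aa(N-1) + O(1/(N\ell))$ by (\ref{eq:Vell-zero}). For the remaining $p$-sums, I would use the scattering equation (\ref{eq:eta-scat0}) to substitute $p^2\eta_p + D_p$ by $N[\lambda_{\ell_0}(\widehat{\chi_{\ell_0}f_\ell^2}\ast\widehat g_{\ell_0})(p) - \lambda_\ell(\widehat{\chi_\ell f_\ell^2}\ast\widehat g_{\ell_0})(p)]$; this produces cancellations against the $D_p\eta_p$, $\widehat V_\ell(p)(\sigma_p^2+\sigma_p\gamma_p)$, and $\widehat V_\ell(p-q)\eta_p\eta_q$ terms in $C_{N,\ell}$. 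Using $\sigma_p\gamma_p = \eta_p + O(\eta_p^3)$, $\sigma_p^2 = \eta_p^2 + O(\eta_p^4)$, and (\ref{eq:intchiell0}), the coefficients $F_p,G_p$ simplify asymptotically to $F_p = p^2 + 8\pi\aa + o(1)$ and $G_p = 8\pi\aa + o(1)$, so $\frac{1}{2}\sum_p[-F_p + \sqrt{F_p^2-G_p^2}\,]$ reproduces the Lee--Huang--Yang-type integrand. The convergent subtraction $-(8\pi\aa)^2/(2p^2)$ is generated by the cross-term $\frac{1}{2N}\sum_q\widehat V_\ell(p-q)\eta_p\eta_q$ in $C_{N,\ell}$; the Madelung-type constant $e_\Lambda\aa^2$ emerges from the residual difference between $\widehat V_\ell(p)$ and its zero-momentum value $8\pi\aa$ summed on the torus via Poisson summation, exactly as in the analogous reduction performed in \cite[Sect.~7--8]{BBCS3}.

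The main obstacle is this final identification. Since $\widehat V_\ell$ decays only as $1/p^2$ uniformly in $\ell$, the momentum sums must be split into a low-momentum region $|p|\lesssim \ell^{-1}$, where the asymptotic replacements $F_p\approx p^2+8\pi\aa$, $G_p\approx 8\pi\aa$ are accurate, and a high-momentum tail, where the sharper decay $|\eta_p|\leq C/(\ell^2|p|^4)$ from (\ref{eq:etap}) and $|\widehat\omega_\ell(p)|\leq C/|p|^3$ from (\ref{eq:omegap}) must be used to keep each individual error $O(N^{-\nu/2})$ rather than merely $o(1)$. Maintaining this uniformity while preserving the exact finite constant $e_\Lambda\aa^2$ is the delicate point.
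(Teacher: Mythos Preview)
Your reduction to $C_{N,\ell} + \tfrac12\sum_p[-F_p+\sqrt{F_p^2-G_p^2}\,]$ and the control of $\mathcal R_{N,\ell}$ via Lemma~\ref{lm:action-tau} match the paper's argument and are correct. The genuine gap is in the constant identification.

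Your claim that $\tfrac{N-1}{2}\widehat V_\ell(0) = 4\pi\aa(N-1) + O(1/(N\ell))$ is off by a factor of $N$: equation~(\ref{eq:Vell-zero}) gives $N\lambda_\ell\!\int\chi_\ell f_\ell^2 = 4\pi\aa + O(\aa^2/(N\ell))$, so after multiplying by $(N-1)$ the error is $O(1/\ell)$, which diverges since $\ell\leq N^{-3/4}$. This large $\ell$-dependent piece must be cancelled \emph{exactly} by the other terms in $C_{N,\ell}$, and the mechanism is not the one you describe. The paper switches the bracket $\sum_p[p^2\eta_p^2+\widehat V_\ell(p)\eta_p+\tfrac1{2N}(\widehat V_\ell*\eta)(p)\eta_p+D_p\eta_p]$ to position space, writes $\check\eta=N(g_{\ell_0}-1)$, and uses the integral identity
\[ \lambda_\ell\!\int\chi_\ell f_\ell^2 g_{\ell_0}\,dx \;=\; \lambda_{\ell_0}\!\int\chi_{\ell_0} f_\ell^2 g_{\ell_0}\,dx \]
(which follows from integrating $\nabla\cdot(f_\ell^2\nabla g_{\ell_0})$ over $B_{\ell_0}$ with Neumann data). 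This transfers the leading constant from scale $\ell$ to scale $\ell_0$, giving $N(N-1)\lambda_{\ell_0}\!\int\chi_{\ell_0}f_{\ell_0}^2 = 4\pi\aa(N-1)+\tfrac{24\pi\aa^2}{5\ell_0}+O(N^{-1})$. The $\ell_0$-dependent residues then cancel against $-8\pi\aa\,\eta_0$ and against the sum $-\tfrac{9\aa^2}{\ell_0^6}\sum_p \widehat\chi_{\ell_0}(p)^2/p^2$, the latter arising not from a Poisson-summation comparison but from carefully recombining the $A_p/p^2$ corrections to $\sqrt{F_p^2-G_p^2}$ with leftover pieces of $C_{N,\ell}$ via a second application of the scattering equation. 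The constant $e_\Lambda\aa^2$ is then read off from an explicit evaluation of $\sum_p\widehat\chi_{\ell_0}(p)/p^2$ (as in \cite[Eq.~(5.31)]{BBCS4}), not from the mechanism you sketch. Without the $\ell\to\ell_0$ transfer identity your error budget is $O(1/\ell)$, not $O(N^{-\nu/2})$.
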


\begin{proof} With \eqref{eq:cG-cJ} and  Lemma \ref{lm:action-tau}, we have 
\[
\cM^\text{eff}_{N,\ell} =  C_{N,\ell} + e^{-B(\tau)} \cQ_{N,\ell} e^{B(\tau)} + \cE'_{N,\ell}
\]
with 
\[
\pm \cE'_{N,\ell} \leq \frac{C}{(N\ell)^{1/2}} (\cK+ \cV_\ell + \cP^{(5/2)} + 1) (\cN_++1) \,.
\]
With Lemma \ref{lm:diago} and the assumption $\ell \geq N^{-1+\nu}$, we obtain
\be \label{eq:MN-vacuum}
 \langle \O, \cM^\text{eff}_{N,\ell}\O \rangle = C_{N,\ell} + \frac{1}{2} \sum_{p \in \Lambda^*_+} \left[ - F_p + \sqrt{F_p^2 - G_p^2} \right] + \cO( N^{-\nu/2})
\ee
with $C_{N,\ell}$, $F_p$ and $G_p$ defined as in \eqref{eq:def-CN} and \eqref{eq:FpGp}.
We rewrite 
\begin{equation} \label{eq:CN0}
	\begin{split}
	C_{N,\ell}= &\, \frac{(N-1)}{2}\widehat{V}_\ell (0) +\sum_{p\in \L^*_+} \Big[ p^2 \eta_p^2 + \widehat{V}_\ell (p) \eta_p + \frac{1}{2N} (\widehat{V}_\ell * \eta) (p) \eta_p +D_p \eta_p \Big]  \\
& + \sum_{p \in \L^*_+} \Big[ p^2 (\sigma_p^2 -\eta_p^2)  + \widehat V_\ell(p) \big( \s^2_p +\g_p \s_p - \eta_p \big) - \frac 1 {2N} \widehat V_\ell(p)\eta_p \eta_0  \Big]\,. \end{split} \end{equation} 
With the scattering equation \eqref{eq:eta-scat} we find 
\begin{equation*} \label{eq:const1}
	\begin{split}
	C_{N,\ell}  =\; &  \frac{(N-1)}{2}\widehat{V}_\ell (0)   +  \sum_{p \in \L^*_+}  \Big[\,\frac 12\widehat V_\ell(p)\eta_p + N \l_{\ell_0} \big(\widehat{(\chi_{\ell_0}f_\ell^2)} \ast \widehat g_{\ell_0}\big)(p) \eta_p \,\Big] \\
& + \sum_{p \in \L^*_+} \Big[ p^2 (\sigma_p^2 -\eta_p^2)  + \widehat V_\ell(p) \big( \s^2_p +\g_p \s_p - \eta_p \big) - \frac 1 {2N} \widehat V_\ell(p)\eta_p \eta_0  \Big] \,.\end{split} \end{equation*} 
Recalling that $V_\ell = 2N \lambda_\ell \chi_\ell f_\ell^2$ we obtain, switching to position space,  
\begin{equation*}\begin{split}  \label{slit} 
C_{N,\ell} =\; &N (N-1) \lambda_\ell \int \chi_\ell (x) f_\ell^2 (x) dx + N \lambda_\ell \int \chi_\ell (x) f_\ell^2 (x) \check{\eta} (x) dx \\ &+ N \lambda_{\ell_0} \int \chi_{\ell_0} (x) f_\ell^2 (x) g_{\ell_0} (x) \check{\eta}  (x) dx - N \l_\ell \widehat{(\chi_{\ell} f_\ell^2)}(0)\eta_0  -N \l_{\ell_0} \big(\widehat{\chi_{\ell_0}f_\ell^2} \ast \widehat g_{\ell_0}\big)(0) \eta_0  \\
&  + \sum_{p \in \L^*_+} \Big[ p^2 (\sigma_p^2 -\eta_p^2)  + \widehat V_\ell(p) (\s^2_p+\g_p \s_p - \eta_p)  - \frac 1 {2N} \widehat V_\ell(p)\eta_p \eta_0   \Big] \,.
	\end{split}
	\end{equation*}
With $\check{\eta} = N (g_{\ell_0} - 1)$, we arrive at 
\[ \begin{split} 
C_{N,\ell} =\; &N (N-1) \lambda_\ell \int \chi_\ell (x) f_\ell^2 (x) dx + N^2 \lambda_\ell \int \chi_\ell (x) f_\ell^2 (x) g_{\ell_0} (x) dx \\ &- N^2 \lambda_\ell \int \chi_\ell (x) f_\ell^2 (x) dx+ N^2 \lambda_{\ell_0} \int \chi_{\ell_0} (x) f_\ell^2 (x) g^2_{\ell_0} (x) dx \\ &- N^2 \lambda_{\ell_0} \int \chi_{\ell_0} (x) f_\ell^2 (x) g_{\ell_0} (x) dx -N \l_\ell \widehat{(\chi_{\ell} f_\ell^2)}(0)\eta_0  -N \l_{\ell_0} \big(\widehat{\chi_{\ell_0}f_\ell^2} \ast \widehat g_{\ell_0}\big)(0) \eta_0  \\
&  + \sum_{p \in \L^*_+} \Big[ p^2 (\sigma_p^2 -\eta_p^2)  + \widehat V_\ell(p) (\s^2_p+\g_p \s_p - \eta_p)  - \frac 1 {2N} \widehat V_\ell(p)\eta_p \eta_0   \Big] \,.
	\end{split}\]
With \eqref{eq:g0} and since $g_{\ell_0}$ satisfies Neumann boundary conditions, we notice that
\[
 \l_{\ell} \int_{B_{\ell_0}} \hskip -0.2cm \chi_{\ell}(x) f_\ell^2(x)  g_{\ell_0}(x) dx -\,   \l_{\ell_0} \int_{B_{\ell_0}} \hskip -0.2cm\chi_{\ell_0}(x) f_\ell^2(x) g_{\ell_0}(x)dx = \int_{B_{\ell_0}} \hskip -0.2cm \nabla \big( f_\ell^2(x) \nabla g_{\ell_0}(x) \big) dx = 0\,.
\]
Thus, using $f_{\ell_0} = f_\ell g_{\ell_0}$, we conclude that 
\footnote{Instead of applying the scattering equation on the first line of (\ref{eq:CN0}), we could have switched to position space and argued as in (\ref{eq:en-g}) to reconstruct the term on the r.h.s. of (\ref{eq:en-g2}); this would have given an alternative derivation of (\ref{eq:Delta1}).} 
\begin{equation}\label{eq:Delta1}  \begin{split} 
C_{N,\ell} = \; &N (N-1) \lambda_{\ell_0} \int \chi_{\ell_0} (x) f^2_{\ell_0} (x) dx \\ &+ N \lambda_{\ell_0} 
\int \chi_{\ell_0} (x) f^2_{\ell_0} (x) dx - N \lambda_\ell \int \chi_\ell (x) f_\ell^2 (x) dx \\
& -N \l_\ell \widehat{(\chi_{\ell} f_\ell^2)}(0)\eta_0  -N \l_{\ell_0} \big(\widehat{\chi_{\ell_0}f_\ell^2} \ast \widehat g_{\ell_0}\big)(0) \eta_0  \\
&  + \sum_{p \in \L^*_+} \Big[ p^2 (\sigma_p^2 -\eta_p^2)  + \widehat V_\ell(p) (\s^2_p+\g_p \s_p - \eta_p)  - \frac 1 {2N} \widehat V_\ell(p)\eta_p \eta_0  \Big] \,.
	\end{split}\end{equation} 
To bound the terms on the second line of (\ref{eq:Delta1}), we use Lemma \ref{lm:hardcorescatt} to show that 
\[ \begin{split}
\Big | N \l_{\ell_0} \int \chi_{\ell_0}(x) f_{\ell_0}^2 (x)  dx - 4 \pi \aa \Big| & \leq \frac{C}{N\ell_0}  \\
\Big | N \l_{\ell} \int \chi_{\ell}(x) f_\ell^2(x)  dx - 4 \pi \aa \Big| & \leq \frac{C}{N\ell} \,.
\end{split}\]
Similarly, we find 
\[ \label{eq:const2}
 - N \l_\ell \widehat{(\chi_{\ell} f_\ell^2)}(0)\eta_0  -N \l_{\ell_0} \big(\widehat{\chi_{\ell_0}f_\ell^2} \ast \widehat g_{\ell_0}\big)(0) \eta_0 =  - 8\pi \aa \eta_0 + \cO ((N\ell)^{-1}) \,.
\]
As for the terms on the fourth line, the last contribution can be bounded, using that $|\eta_0 | \leq C$, by 
\[ \Big| \frac{1}{2N} \sum_{p \in \L^*_+} \widehat{V}_\ell (p) \eta_p \eta_0 \Big| \leq \frac{C}{N\ell}\,. \]
To handle the other terms on the fourth line of (\ref{eq:Delta1}), we combine them with the first term in the sum on the r.h.s. of (\ref{eq:MN-vacuum}). Recalling (\ref{eq:FpGp}), we find (using again $\ell \geq N^{-1+\nu}$) 
\be \begin{split} \label{eq:const3}
\sum_{p \in \L^*_+}& \bigg[ p^2 (\sigma_p^2 -\eta_p^2)  +   \widehat V_\ell(p) (\s^2_p +\g_p \s_p - \eta_p) - \frac 12  F_p \bigg]  \\
&= -\sum_{p \in \L^*_+}  \left[\frac{p^2}{2} +\frac 12 \,(\widehat{V_\ell} \ast \widehat g_{\ell_0})(p)  + p^2 \eta_p^2 +(\widehat{V_\ell} \ast \widehat g_{\ell_0})(p)  \eta_p \right]+ \cO (N^{-\nu}) 
\end{split}\ee
where we bounded, using $|\s_p^2 + \g_p \s_p - \eta_p|\leq C|\eta_p|^2 \leq C /|p|^4$ (see \eqref{eq:gamma_sigma}), 
\[ \Big| \sum_{p \in \L^*_+} \big( \widehat V_\ell(p) - (\widehat{V_\ell} \ast \widehat g_{\ell_0})(p) \big)(\s^2_p +\g_p \s_p - \eta_p ) \Big| \leq \frac{C}{N\ell}\,. \]
As for the remaining term on the r.h.s. of (\ref{eq:MN-vacuum}), we can write 
\[
F_p^2 - G_p^2= |p|^4 +2p^2  (\widehat{V_\ell} \ast \widehat g_{\ell_0})(p)  + A_p
\]
with the notation 
\[ \label{eq:Ap}
A_p = -4 D_p \Big( (\widehat{V_\ell} \ast \widehat g_{\ell_0})(p)  (\g_p+\s_p)^2 + D_p +  2 p^2 \g_p \s_p \Big) \,.
\]
From  (\ref{eq:etap}), we have $|D_p|\leq C/ (N\ell)$. Thus,  with $(\g_p + \s_p)^2 \leq C$ and $|\g_p \s_p|\leq C|p|^{-2}$,  we obtain $|A_p| \leq C / (N\ell)$. Using this bound and the observation that $|p|^4 +2p^2 (\widehat{V_\ell} \ast \widehat g_{\ell_0})(p) $ and $|p|^4 +2p^2  (\widehat{V_\ell} \ast \widehat g_{\ell_0})(p)  + A_p$ are positive and bounded away from zero  we write
\[ \begin{split}
\sqrt{F_p^2 - G_p^2} &= \sqrt{|p|^4 +2p^2 (\widehat{V_\ell} \ast \widehat g_{\ell_0})(p)} \\
& \hskip 2cm+ \frac{A_p}{ \sqrt{|p|^4 +2p^2 (\widehat{V_\ell} \ast \widehat g_{\ell_0})(p) + A_p} + \sqrt{|p|^4 +2p^2 (\widehat{V_\ell} \ast \widehat g_{\ell_0})(p)} }\,.
\end{split}\]
Expanding the square roots in the denominator around $p^2$, we easily find (using again $|A_p| \leq C / (N\ell)$),
\[ \sum_{p \in \L^*_+} \frac{A_p}{ \sqrt{|p|^4 +2p^2 (\widehat{V_\ell} \ast \widehat g_{\ell_0})(p) + A_p} + \sqrt{|p|^4 +2p^2 (\widehat{V_\ell} \ast \widehat g_{\ell_0})(p)} } = \sum_{p \in \L^*_+} \frac{A_p}{2p^2} + \cO (N^{-\nu})\,. \]
Combining the last two equations with (\ref{eq:MN-vacuum}), (\ref{eq:Delta1}), (\ref{eq:const3}), we find 
\be \begin{split} \label{eq:const5}
 \langle \O, &\cM^\text{eff}_{N,\ell}\O \rangle \\ =\; & N (N-1) \l_{\ell_0} \int \chi_{\ell_0}  (x) f^2_{\ell_0} (x) dx - 8\pi \aa \eta_0  \\
& + \frac 1 2 \sum_{p\in \L^*_+} \Big[ -p^2 -(\widehat{V_\ell} \ast \widehat g_{\ell_0})(p)  + \sqrt{|p|^4 +2p^2 (\widehat{V_\ell} \ast \widehat g_{\ell_0}) (p) } + \frac{\big((\widehat{V_\ell} \ast \widehat g_{\ell_0})(p) \big)^2}{2p^2} \Big] \\
& -  \sum_{p \in \L^*_+}  \left[ p^2 \eta^2_p +   (\widehat{V_\ell} \ast \widehat g_{\ell_0})(p)\eta_p - \frac{A_p}{ 4p^2} +  \frac{\big( (\widehat{V_\ell} \ast \widehat g_{\ell_0})(p)\big)^2}{4p^2}\right]  +\cO(N^{-\nu/2})\,.
\end{split}\ee
Estimating $|(\g_p + \s_p)^2 -1 | \leq C |\eta_p| \leq C/|p|^2$ and $|\g_p \s_p - \eta_p| \leq C \eta_p^3 \leq C / |p|^6$  (see \eqref{eq:gamma_sigma}), we obtain 
\[ \sum_{p \in \L^*_+} \frac{A_p}{4p^2} = - \sum_{p \in \L^*_+} \frac{D_p}{p^2} \left[ 2p^2 \eta_p + (\widehat{V}_\ell * \widehat{g}_{\ell_0}) (p) + D_p  \right]+\cO(N^{-\nu})\,. \]
Solving the scattering equation \eqref{eq:eta-scat} for $D_p$, we obtain 
\[ \sum_{p \in \L^*_+} \frac{A_p}{4p^2} = - \sum_{p \in \L^*_+} \frac{D_p}{p^2} \left[ p^2 \eta_p + \frac{1}{2} (\widehat{V}_\ell * \widehat{g}_{\ell_0}) (p) + N \lambda_{\ell_0} \big( \widehat{\chi_{\ell_0} f_\ell^2} * \widehat{g}_{\ell_0} \big) (p) \right]+\cO(N^{-\nu}) \,.\]
Inserting this bound in the last line of (\ref{eq:const5}), we get 
\[ \begin{split} 
\sum_{p \in \L^*_+}  &\left[ p^2 \eta^2_p +   (\widehat{V_\ell} \ast \widehat g_{\ell_0})(p)\eta_p - \frac{A_p}{ 4p^2} +  \frac{\big( (\widehat{V_\ell} \ast \widehat g_{\ell_0})(p)\big)^2}{4p^2}\right] \\
= \; &\sum_{p \in \L^*_+} \Big[ p^2 \eta_p  +\frac{1}{2} (\widehat{V}_\ell * \widehat{g}_{\ell_0}) (p) + D_p \Big] \eta_p + \sum_{p \in \L^*_+} \frac{1}{2}  (\widehat{V}_\ell * \widehat{g}_{\ell_0}) (p) \Big[ \eta_p + \frac{(\widehat{V}_\ell * \widehat{g}_{\ell_0}) (p)}{2p^2} + \frac{D_p}{p^2} \Big] \\ &+ \sum_{p \in \L^*_+} \frac{N  \lambda_{\ell_0} \big( \widehat{\chi_{\ell_0} f_\ell^2} * \widehat{g}_{\ell_0} \big) (p)}{p^2} D_p +\cO(N^{-\nu})   \,.
\end{split} \]
With the scattering equation \eqref{eq:eta-scat}, we find 
\[ \begin{split} 
\sum_{p \in \L^*_+}  \Big[ p^2 \eta^2_p +   (\widehat{V_\ell} \ast &\widehat g_{\ell_0})(p)\eta_p - \frac{A_p}{ 4p^2} +  \frac{\big( (\widehat{V_\ell} \ast \widehat g_{\ell_0})(p)\big)^2}{4p^2}\Big] \\
= \; &\sum_{p \in \L^*_+}   \frac{N  \lambda_{\ell_0} \big( \widehat{\chi_{\ell_0} f_\ell^2} * \widehat{g}_{\ell_0} \big) (p)}{p^2} \left[ p^2 \eta_p + \frac{1}{2} (\widehat{V_\ell} \ast \widehat g_{\ell_0})(p) + D_p \right] +\cO(N^{-\nu}) \\ = \; &\sum_{p \in \L^*_+}  \frac{\big[ N  \lambda_{\ell_0} \big( \widehat{\chi_{\ell_0} f_\ell^2} * \widehat{g}_{\ell_0} \big)(p)\big]^2}{p^2} = \frac{9 \frak{a}^2}{\ell_0^6} \sum_{p \in \L^*_+} \frac{\widehat{\chi}_{\ell_0}^2 (p)}{p^2} +\cO(N^{-\nu})
\end{split} \]
where in the last step we used Lemma \ref{lm:hardcorescatt} and Lemma \ref{lm:driftscatt}. From (\ref{eq:const5}), we conclude that 
\be \begin{split} \label{eq:const8}
 \langle \O, \cM^\text{eff}_{N,\ell}\O \rangle =\; & N (N-1) \l_{\ell_0} \int \chi_{\ell_0} (x) f^2_{\ell_0} (x) dx - 8\pi \aa \eta_0  - \frac{9\aa^2}{\ell_0^6} \,\sum_{p \in \L^*_+} \frac{  \widehat{\chi}_{\ell_0}(p)^2}{p^2} \\
&-\frac 1 2 \sum_{p \in \L^*_+} e_{N}(p)  +\cO(N^{-\nu/2})
\end{split}\ee
where we introduced the notation
\[\label{eq:E-Bog-N}
e_{N}(p) = p^2 +(\widehat{V_\ell} \ast \widehat g_{\ell_0})(p)  - \sqrt{|p|^4 +2p^2 (\widehat{V_\ell} \ast \widehat g_{\ell_0}) (p) } - \frac{\big((\widehat{V_\ell} \ast \widehat g_{\ell_0})(p) \big)^2}{2p^2} \, . 
\]
Expanding the square root, we find that $|e_{N}(p)|\leq C |p|^{-4}$, uniformly in $N$ and $\ell$.  This allows us to cut the sum to $|p| \leq \ell^{-1}$, with a negligible error. For $|p| \leq \ell^{-1}$, we can then compare $(\widehat{V}_\ell * \widehat{g}_{\ell_0}) (p)$ with $(\widehat{V}_\ell * \widehat{g}_{\ell_0}) (0)$ and then with $\widehat{V}_\ell (0)$. Proceeding similarly to \cite[Eq. (5.26)-(5.27)]{BBCS4}, we conclude that 
\begin{equation}\label{eq:bog1} \sum_{p \in \L^*_+} e_{N}(p) = \sum_{p \in \L^*_+} \Big[ p^2 + 8\pi \frak{a} - \sqrt{|p|^4 + 16 \pi \frak{a} p^2} - \frac{(8\pi \frak{a})^2}{2p^2} \Big] + \cO (\ell \log \ell)\,. \end{equation} 
Finally, let us compute the last term on the first line on the r.h.s. of (\ref{eq:const8}). Using the expressions (see \cite[Eq. (5.5), (5.29) and (5.33)]{BBCS4}):
\[ \begin{split}
\widehat \chi_{\ell_0}(p) &= \frac{4 \pi \ell_0}{|p|^2}  \left( \frac{\sin(\ell_0 |p|)}{\ell_0|p|} -   \cos(\ell_0 p)\right) \\
%%%
\widehat{(\chi_{\ell_0} |\cdot|^2)}(p) &= \frac{4 \pi \ell_0^3}{|p|^2}  \left( - \frac{6\sin(\ell_0 |p|)}{\ell_0^3 |p|^3} +  \frac{6\cos(\ell_0 p)}{\ell_0^2|p|^2} +\frac{3\sin(\ell_0 |p|)}{\ell_0 |p|} -  \cos(\ell_0 p)\right) \\
%%%
\widehat{(\chi_{\ell_0}|\cdot|^{-1})}(p) &= \frac{4 \pi }{|p|^2}  \Big( 1-  \cos(\ell_0 p)\Big) \\
\end{split}\]
we can rewrite
\be \begin{split} \label{eq:chisquare}
- \frac{9\aa^2}{\ell_0^6} \sum_{p\in \L^*_+} \frac{  \widehat{\chi}_{\ell_0}(p)^2}{p^2}  =\; & - 12 \pi \frac{\aa^2}{\ell_0^3} \sum_{p\in \L^*_+} \frac{\widehat{\chi}_{\ell_0}(p)}{|p|^2}  + \frac{3\aa^2}{2\ell_0^6} \sum_{p\in \L^*_+}\widehat \chi_{\ell_0}(p) \cdot \widehat{(\chi_{\ell_0}|\cdot|^2)}(p) \\
& - \frac 9 2 \frac{\aa^2}{\ell_0^4} \sum_{p\in \L^*_+} \widehat \chi_{\ell_0}(p)^2  + \frac{3\aa^2}{\ell_0^3} \sum_{p\in \L^*_+}\widehat \chi_{\ell_0}(p) \cdot \widehat{(\chi_{\ell_0}|\cdot|^{-1})}(p)\,.
\end{split}\ee
From \cite[Eq. (5.31)]{BBCS4} we have
\be \label{eq:I-def}
 - 12 \pi \frac{\aa^2}{\ell_0^3} \sum_{p\in \L^*_+}\frac{\widehat{\chi}_{\ell_0}(p)}{|p|^2} = 6 \pi \aa^2 \big( I_0- \frac 1 {\ell_0} - \frac{4}{15}\pi \ell_0^2 \big)
\ee
where 
\[
I_0 = \frac{1}{3 \pi} - \frac{2}{3\pi} \lim_{M \to \io} \sum_{\substack{p \in \L^*_+: \\|p_i|\leq M}} \frac{\cos (|p|)}{p^2}\,.
\]
Computing the different terms on the r.h.s. of \eqref{eq:chisquare} and using \eqref{eq:I-def} we obtain
\[
- \frac{9\aa^2}{\ell_0^6} \sum_{p\in \L^*_+} \frac{  \widehat{\chi}_{\ell_0}(p)^2}{p^2} = 6 \pi \aa I_0 - \frac{24}{5} \pi \frac{\aa^2}{\ell_0}  - \frac{16}5 \pi^2 \aa^2 \ell_0^2\,.
\]
Inserting \eqref{eq:bog1}, \eqref{eq:Vell-zero}, \eqref{eq:eta0} and the last equation in \eqref{eq:const8}, we conclude that 
\[ \begin{split} \langle \Omega, &\cM^\text{eff}_{N,\ell} \Omega \rangle \\ &=  4 \pi \aa (N-1) +  e_\L \aa^2 -\frac{1}{2} \sum_{p \in \L^*_+} \big[ p^2 + 8\pi \frak{a} - \sqrt{|p|^4 + 16 \pi \frak{a} p^2} - \frac{(8\pi \frak{a})^2}{2p^2} \Big] + \cO (N^{-\nu/2})\, \end{split} \]
with $e_\L$ defined as in (\ref{eq:eLambda}). 
\end{proof}

\section{Bounds on the trial state} \label{sec:bound}

We introduce some operators to control the regularity of our trial state. First of all, we recall the definition of the operator $\cP^{(r)}$, defined in (\ref{eq:P_r}) for $1 < r < 5$. Furthermore, we need some observables acting of several particles. For $n \in \bN$, we define 
\begin{equation} \label{eq:cTnd} 
\cT_n =\sum_{p_1, \dots , p_n \in \L^*_+} p_1^2 \dots p_n^2 \, a_{p_1}^* \dots a_{p_n}^* a_{p_n} \dots a_{p_1} \,.
\end{equation} 
Since $\eta$ has limited decay in momentum space (see (\ref{eq:etap})), we will only be able to control the expectation of $\cT_n$ for $n = 2,3,4$. To control some error terms, it is also important to use less derivatives on each particle. We define, for $\delta > 0$ small enough (we will later impose the condition $\delta \in (0;1/6)$), 
\begin{equation}\label{eq:cAnd} 
\cA^{(\delta)}_{n} = \sum_{p_1, \dots , p_n \in \L^*_+} |p_1|^{3/2+ \delta} \dots |p_n|^{3/2+ \delta} a_{p_1}^* \dots a_{p_n}^* a_{p_n} \dots a_{p_1} \,.
\end{equation} 
We will be able to control the expectation of $\cA^{(\delta)}_n$, for all $n \in \bN$. Additionally, we will also need the observable 
\begin{equation}\label{eq:cSand}
\cS^{(\eps,\delta)}_n = \sum_{p_1, \dots , p_n \in \L^*_+} |p_1|^{2+ \eps} |p_2|^{3/2+\delta} \dots |p_n|^{3/2+ \delta} a_{p_1}^* \dots a_{p_n}^* a_{p_n} \dots a_{p_1} \,. 
\end{equation}  
All these operators act on the excitation Fock space $\cF^{\leq N}_+$. In order to bound their  expectation on our trial state, we need to control their growth under the action of $B(\eta)$, similarly as we did in Lemma \ref{lm:N-Pgrow} for $\cP^{(r)}$. 
\begin{lemma} \label{lm:regu} 
For $n \in \bN \backslash \{ 0 \}$ and $0< \delta < 1/6$, we consider $\cA_n^{(\delta)}$ as in (\ref{eq:cAnd}). We define recursively the sequence $\alpha_n$ (depending on the parameter $\delta$) by setting $\alpha_1 = 1/2 + \delta$, $\alpha_2 = 2+2\delta$ and 
\begin{equation}\label{eq:aln}  \alpha_n = \big[ \alpha_{n-1} + \alpha_{n-2} \big] /2 + 7/4 + 3\delta/2 \,.\end{equation} 
Then, for every $k \in \bN$, there exists a constant $C > 0$ (depending also on $n$ and $\delta$) such that 
\begin{equation}\label{eq:Adn} \langle e^{B(\eta)} \xi , \cA^{(\delta)}_{n} (\cN_+ + 1)^k e^{B(\eta)}  \xi \rangle \leq C \ell^{-\alpha_{n}} \Big\{ \| (\cN_+ + 1)^{k/2} \xi \|^2 + \sum_{j=1}^n \langle \xi, \mathcal{A}_{j}^{(\delta)} (\cN_+ + 1)^k \xi \rangle \Big\} \end{equation}
for all $\xi \in \cF^{\leq N}_+$. 

For $n \in \mathbb{N}\backslash \{ 0 \}$, let \[ \cI_n = \{ ( \eps ; \delta ) \in (-1;3) \times (0;1/6)  : \eps+2\delta < 3/2^{(n-1)} \}\,. \]
For $(\eps ; \delta) \in \cI_n$ we consider $\cS^{(\eps,\delta)}_n$ as in (\ref{eq:cSand}). Moreover, we define the sequence $\beta^\eps_n = \alpha_n + 1/2 + \eps - \delta$, with $\alpha_n$ as in (\ref{eq:aln}) (the sequence $\beta_n^\eps$ depends also on $\delta$; since this dependence does not play an important role in the proof, we do not make it explicit in the notation). Then, for every $k \in \bN$, there exists a constant $C > 0$ (depending also on $n,\eps,\delta$) such that 
\begin{equation}\label{eq:Sedn} 
\begin{split} \langle e^{B(\eta)} \xi , &\cS^{(\eps,\delta)}_{n} e^{B(\eta)} \xi \rangle \leq C \ell^{-\beta^\eps_{n}} \Big\{ \|  \xi \|^2 + \sum_{j=1}^n \sup_{\eps,\delta \in \cI_j}  \; \langle \xi, \mathcal{S}_{j}^{(\eps,\delta)}  \xi \rangle \Big\} \end{split} 
\end{equation}
for all $\xi \in \cF^{\leq N}_+$.

For $n \in \{2,3,4 \}$, we can also control the growth of the operator $\cT_n$, defined in (\ref{eq:cAnd}). We find 
\begin{equation}\label{eq:T2} 
\begin{split}  
\langle e^{B(\eta)} \xi , \cT_2 e^{B(\eta)} \xi \rangle &\leq C \ell^{-3} \langle \xi, \big( 1 + \cP^{(4)} + \cT_2  \big)  \xi \rangle \\
\langle e^{B(\eta)} \xi , \cT_3 e^{B(\eta)} \xi \rangle &\leq C \ell^{-4} \langle \xi, \big( 1 +  \cP^{(4)} + \cZ_{4,2} + \cT_3 \big)  \xi \rangle \\
\langle e^{B(\eta)} \xi , \cT_4 e^{B(\eta)} \xi \rangle &\leq C \ell^{-6} \langle \xi, \big( 1 +  \cP^{(4)} + \cZ_{4,4} + \cZ_{4,2,2} + \cT_4 \big)  \xi \rangle
\end{split}
\end{equation} 
for every $\xi  \in \cF^{\leq N}_+$. Here we introduced the notation (for $m =2,4$) 
\begin{equation}\label{eq:Zdef}\begin{split}  \cZ_{4, 2} &= \sum_{\substack{p_1, p_2 \in \L^*_+ : \\ p_1 \not = \pm p_2}}  |p_1|^4 p_2^2 \, a^*_{p_1} a_{p_2}^* a_{p_2} a_{p_1}, \qquad  \cZ_{4, 4} = \sum_{\substack{p_1, p_2 \in \L^*_+ : \\ p_1 \not = \pm p_2}}  |p_1|^4 |p_2|^{4} \, a^*_{p_1} a_{p_2}^* a_{p_2} a_{p_1} \\  \cZ_{4,2,2} &= \sum_{\substack{p_1, p_2, p_3 \in \L^*_+ :\\  p_1 \not = \pm p_2, \pm p_3}}  |p_1|^4  p_2^2 p_3^2 \, a^*_{p_1} a_{p_2}^* a_{p_3}^* a_{p_3} a_{p_2} a_{p_1} \,.  \end{split}  \end{equation}  
Finally, we will also need an improvement of (\ref{eq:Sedn}), for $n=3$. For $\eps > -1$, $0 < \delta < 1/6$ with $\eps + \delta < 1$, we find 
\begin{equation}\label{eq:impr-S3} \langle e^{B(\eta)} \xi , \cS_3^{(\eps,\delta)} e^{B(\eta)} \xi \rangle \leq C \ell^{-3-\eps-2\delta} \Big\{ \langle \xi , \big[ 1 + \cP^{(4)} + \cZ_{4,2} \big] \xi \rangle+ \sup_{(\eps, \delta) \in \cI_3} \langle \xi , S^{(\eps,\delta)}_3 \xi \rangle \Big\}  \end{equation} 
for all $\xi \in \cF_+^{\leq N}$ (observe that, in (\ref{eq:Sedn}), $\beta_3^\eps = 7/2+\eps + 2\delta$). 
\end{lemma}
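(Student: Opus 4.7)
The strategy throughout is Gr\"onwall's inequality applied to $\phi(t) = \langle e^{tB(\eta)}\xi, \cO(\cN_++1)^k e^{tB(\eta)}\xi\rangle$ for each operator $\cO$ appearing in the statement, in the same spirit as the proof of Lemma \ref{lm:N-Pgrow}. Differentiating yields $\dot{\phi}(t) = \langle \xi_t, [\cO, B(\eta)]\,\xi_t\rangle$, with $\xi_t = e^{tB(\eta)}\xi$, and the entire task reduces to bounding this commutator by $C\phi(t)$ plus a source term controlled by expectations of lower-order observables from the same family. The $\ell$-dependent prefactors $\ell^{-\alpha_n}$, $\ell^{-\beta^\eps_n}$, $\ell^{-3}$, $\ell^{-4}$, $\ell^{-6}$ then emerge from the cumulative cost of the recursion, using the pointwise decay bounds (\ref{eq:etap}) and, crucially, the $H^r$-type estimate (\ref{eq:etaHr}) on $\eta$.

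For $\cA_n^{(\delta)}$, I would compute $[\cA_n^{(\delta)}, b^*_q b^*_{-q}]$ by moving $b^*_q b^*_{-q}$ past the string $a^*_{p_1}\cdots a^*_{p_n} a_{p_n}\cdots a_{p_1}$, using (\ref{eq:comm-bp})--(\ref{eq:commaa}) together with $[a_p, b^*_q] \approx \delta_{p,q}$ up to $O(1/N)$-corrections. The commutator decomposes according to the number of contractions between $\{b^*_q, b^*_{-q}\}$ and the $a_{p_i}$'s: zero contractions yield a term bounded by $C\phi(t)$; one contraction produces a weighted $(n-1)$-particle operator with factor $|q|^{3/2+\delta}\eta_q$; two contractions yield an $(n-2)$-particle operator with factor $|q|^{3+2\delta}\eta_q$. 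A Cauchy-Schwarz splitting then pairs the $\eta$-side, bounded via $\sum_q|q|^{3+2\delta}|\eta_q|^2 \leq C\ell^{2+2\delta}$, with the inductive bounds for $\cA_{n-1}^{(\delta)}$ and $\cA_{n-2}^{(\delta)}$; the resulting geometric-mean structure $\ell^{-(\alpha_{n-1}+\alpha_{n-2})/2}$ modulated by an $\eta$-penalty of $\ell^{-(7/4+3\delta/2)}$ reproduces exactly the recursion (\ref{eq:aln}). The base cases $n=1$ (covered by Lemma \ref{lm:N-Pgrow}, since $\cA_1^{(\delta)} = \cP^{(3/2+\delta)}$) and $n=2$ (direct estimate) close the induction. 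The parallel argument for $\cS_n^{(\eps,\delta)}$ tracks the asymmetric weight $|p_1|^{2+\eps}$ separately: the relevant $\eta$-sum becomes $\sum_q|q|^{4+2\eps}|\eta_q|^2 \leq C\ell^{3+2\eps}$, which accounts precisely for the shift $\beta^\eps_n - \alpha_n = 1/2 + \eps - \delta$, while the hypothesis $\eps+2\delta < 3/2^{n-1}$ guarantees convergence of the $\eta$-weighted sums through all $n-1$ recursive steps.

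For $\cT_n$ with $n\in\{2,3,4\}$, the heavier weight $p^2$ changes the relevant $\eta$-bound to $\sum_q|q|^4|\eta_q|^2 \leq C\ell^3$ (from (\ref{eq:etaHr}) with $r=4$), yielding the $\ell^{-3}$ factor for $\cT_2$ and, via iteration, $\ell^{-4}$ for $\cT_3$ and $\ell^{-6}$ for $\cT_4$. The restricted-momentum operators $\cZ_{4,2},\cZ_{4,4},\cZ_{4,2,2}$ from (\ref{eq:Zdef}) appear because the commutator with $B(\eta)$ produces diagonal contributions where two (or more) of the $p_i$ coincide up to sign; these collision pieces are not captured by lower-order $\cT_k$'s, so they must be isolated through the $\cZ$-family (together with $\cP^{(4)}$), and the remaining distinct-momentum pieces treated inductively. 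For the sharpened estimate (\ref{eq:impr-S3}), the less restrictive hypothesis $\eps+\delta < 1$ (compared with $\eps+2\delta < 3/4$ required for the generic $n=3$ bound) permits a refined Cauchy-Schwarz that absorbs the would-be $\cS_1^{(\cdot,\cdot)},\cS_2^{(\cdot,\cdot)}$ contributions directly into $\cP^{(4)}$ and $\cZ_{4,2}$, tightening the $\ell$-factor from $\ell^{-\beta^\eps_3}=\ell^{-7/2-\eps-2\delta}$ to $\ell^{-3-\eps-2\delta}$, a saving of $\ell^{1/2}$.

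The main obstacle will be the combinatorial bookkeeping of commutator terms in the multi-particle setting: one must separate the many ways in which $b^*_q, b^*_{-q}$ can contract with the $n$ annihilation operators, correctly identify the diagonal pieces that generate the $\cZ$-operators at each inductive step, and choose the Cauchy-Schwarz partition symmetrically enough so that the recursion $\alpha_n = (\alpha_{n-1}+\alpha_{n-2})/2 + 7/4 + 3\delta/2$ is exactly saturated rather than degraded at any step. This is a purely technical but lengthy task, closely parallel to the bounds for $\cP^{(r)}$ in Lemma \ref{lm:N-Pgrow}; no new conceptual input is needed beyond careful use of (\ref{eq:etaHr}).
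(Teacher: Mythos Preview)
Your proposal is essentially the paper's approach: Gr\"onwall on $t\mapsto\langle e^{tB(\eta)}\xi,\cO\,e^{tB(\eta)}\xi\rangle$, commutator expansion by number of contractions, Cauchy--Schwarz, and induction on $n$ using (\ref{eq:etaHr}). The recursion (\ref{eq:aln}) indeed arises exactly as you say, and the $\cS_n^{(\eps,\delta)}$ argument in the paper tracks the distinguished weight $|p_1|^{2+\eps}$ separately, introducing an auxiliary parameter $\theta$ to keep the $\eta$-sums summable under the constraint $(\eps,\delta)\in\cI_n$.

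One point where your intuition is inverted: for $\cT_n$, the restricted operators $\cZ_{4,2}$, $\cZ_{4,4}$, $\cZ_{4,2,2}$ are \emph{not} the ``collision pieces''---they carry the restriction $p_1\neq\pm p_2$, i.e.\ they are the \emph{off-diagonal} part. The actual mechanism in the paper is this: the two-contraction term in $[\cT_3,B(\eta)]$ produces $\sum_{p,q}|\eta_q||q|^4 p^2\|a_q a_p\xi_t\|\|a_p\xi_t\|$, which cannot be closed by Cauchy--Schwarz alone. One splits off the diagonal $p=\pm q$ (bounded via $\cP^{(4)}(\cN_++1)$) and for $p\neq\pm q$ introduces an auxiliary quantity $W_{4,2}(t)=\langle\xi_t,\cZ_{4,2}\xi_t\rangle$. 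The point of the momentum restriction is that $[\cZ_{4,2},B(\eta)]$ has \emph{no} two-contraction term, so $W_{4,2}$ satisfies a clean Gr\"onwall inequality with source $\ell^{-4}\langle\xi,(1+\cP^{(4)})\xi\rangle$. The same device (with $W_{4,4}$ and $W_{4,2,2}$) handles $\cT_4$, and the improved bound (\ref{eq:impr-S3}) reuses the $W_{4,2}$ estimate. If you attempt the computation with your current picture you will find the bookkeeping does not close; once you reverse the roles of the diagonal and off-diagonal pieces, everything goes through as you outline.
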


{\it Remark.} The sequence $\alpha_n$ defined in (\ref{eq:aln}) is given explicitly by  
\begin{equation}\label{eq:aln2} \alpha_n = \left( \frac{7}{6} + \delta \right) n - \frac{4}{9} \left( 1- \left(-\frac{1}{2} \right)^n \right)\,. \end{equation} 
 
\begin{proof} 
We begin with (\ref{eq:Adn}). We consider $k=0$; the case $k > 0$ can be handled similarly. For $n \geq 1$ and $0 < \delta < 1/6$, we set 
\[ F^{(\delta)}_n (t) = \langle e^{t B(\eta)} \xi , \cA_{n}^{(\delta)} e^{t B(\eta)} \xi \rangle \,.\]
For $n \geq 2$, we compute
\begin{equation*} \label{eq:derF} \begin{split} \frac{dF^{(\delta)}_n}{dt} (t) &= \langle e^{t B(\eta)} \xi , \big[  \cA_{n}^{(\delta)} ,  B(\eta) \big]  e^{t B(\eta)} \xi \rangle \\ &= \sum_{p_1, \dots ,p_n \in \L^*_+} |p_1|^{3/2+\delta} \dots |p_n|^{3/2+\delta} \langle e^{t B(\eta)} \xi , \big[ a_{p_1}^* \dots a_{p_n}^* a_{p_n} \dots a_{p_1}, B(\eta) \big]  e^{t B(\eta)} \xi \rangle\,. \end{split} \end{equation*} 
With the identity 
\begin{equation*} \label{eq:nth_commutator}
\left[ a^*_{p_1}\dots a^*_{p_n} a_{p_n}\dots a_{p_1} , b^*_q \right]= \sum_{j=1}^n \delta_{q,p_j} b^*_{p_j} a^*_{p_1}\dots a^*_{p_{j-1}}a^*_{p_{j+1}}\dots a^*_{p_n}  a_{p_n}\dots a_{p_{j+1}} a_{p_{j-1}}\dots a_{p_1}
\end{equation*}
we find  
\[ \begin{split}
\big[ a^*_{p_1}\dots a^*_{p_n} &a_{p_n}\dots a_{p_1} , b^*_q b_{-q}^* \big] \\ = \; &\sum_{j=1}^n \delta_{p_j, -q} b_q^* b_{p_j}^* a_{p_1}^* \dots a_{p_{j-1}}^* a_{p_{j+1}}^*  \dots a_{p_n}^* a_{p_n} \dots a_{p_{j+1}} a_{p_{j-1}} \dots a_{p_1} \\ &+\sum_{j=1}^n \delta_{p_j, q} b_{p_j}^* a_{p_1}^* \dots a_{p_{j-1}}^* a_{p_{j+1}}^*  \dots a_{p_n}^* a_{p_n} \dots a_{p_{j+1}} a_{p_{j-1}} \dots a_{p_1} b_{-q}^* \,.\end{split} \]
Thus
\begin{equation}\label{eq:comm} \begin{split}
\big[ a^*_{p_1} &\dots a^*_{p_n} a_{p_n}\dots a_{p_1} , b^*_q b_{-q}^* \big] \\ = \; & \sum_{j=1}^n (\delta_{p_j, -q} + \delta_{p_j, q}) \, b_q^* b_{-q}^* a_{p_1}^* \dots a_{p_{j-1}}^* a_{p_{j+1}}^*  \dots a_{p_n}^* a_{p_n} \dots a_{p_{j+1}} a_{p_{j-1}} \dots a_{p_1} \\ &+  \sum_{j=1}^n \sum_{i \not = j} \delta_{p_j, q} \delta_{p_i, -q} b_q^* b_{-q}^* a_{p_1}^* \dots a_{p_{j-1}}^* a_{p_{j+1}}^*  \dots a_{p_{i-1}}^* a_{p_{i+1}}^* \dots a_{p_n}^* \\ &\hspace{3.9cm} \times a_{p_n} \dots a_{p_{j+1}} a_{p_{j-1}} \dots a_{p_{i+1}} a_{p_{i-1}} \dots a_{p_1}\,. \end{split} \end{equation} 
Therefore, we can bound 
\begin{equation}\label{eq:Fd1} \begin{split}  \Big| \frac{dF^{(\delta)}_n}{dt} (t)  \Big| \leq \; &C \sum_{p_1, \dots , p_{n-1} , q \in \L^*_+} |\eta_q | |q|^{3/2+\delta} |p_1|^{3/2 + \delta} \dots |p_{n-1}|^{3/2+ \delta} \\ &\hspace{2cm} \times  \| a_q a_{p_1} \dots a_{p_{n-1}} e^{tB(\eta)} \xi \| \| a_{-q}^* a_{p_1} \dots a_{p_{n-1}}  e^{tB(\eta)} \xi \| 
\\ &+ C \sum_{p_1, \dots , p_{n-2} , q \in \L^*_+} |\eta_q | |q|^{3+2\delta} |p_1|^{3/2 + \delta} \dots |p_{n-2}|^{3/2+ \delta} \\ &\hspace{2cm} \times \| a_q a_{p_1} \dots a_{p_{n-2}} e^{tB(\eta)} \xi \| \| a_{-q}^* a_{p_1} \dots a_{p_{n-2}}  e^{tB(\eta)} \xi \| \end{split} \end{equation} 
for a constant $C$ depending on $n$. Estimating $\| a^*_{-q} \zeta \| \leq \| a_{-q} \zeta \| + \| \zeta \|$ and applying Cauchy-Schwarz's inequality, we obtain, for any $n \geq 3$,  
\[ \begin{split}  \Big| \frac{dF^{(\delta)}_n}{dt} (t)  \Big| \leq \; &C \| \eta \|_\infty  F^{(\delta)}_n (t) +C F^{(\delta)}_n (t)^{\frac{1}{2}} F^{(\delta)}_{n-1} (t)^{\frac{1}{2}} \Big[ \sum_{q \in \L^*_+} |q|^{3/2+\delta} \eta_q^2 \Big]^{\frac{1}{2}} \\ &+ C \big[ \sup_{q\in \L^*_+} |q|^{3/2+ \delta} \eta_q \big]  \, F^{(\delta)}_{n-1} (t) + C F^{(\delta)}_{n-1} (t)^{\frac{1}{2}}  F^{(\delta)}_{n-2} (t)^{\frac{1}{2}} \Big[ \sum_{q \in \L^*_+}  |q|^{9/2 + 3\delta} \eta_q^2 \Big]^{\frac{1}{2}} \,.\end{split} \]
With Lemma \ref{lm:driftscatt}, we arrive at 
\begin{equation} \label{eq:Fn-gro}  \Big| \frac{dF^{(\delta)}_n}{dt} (t)  \Big| \leq C F^{(\delta)}_n (t) + C \ell^{-1/2-\delta} F^{(\delta)}_{n-1} (t) + C \ell^{-7/4-3\delta/2} F^{(\delta)}_{n-1} (t)^{\frac{1}{2}} \,  F^{(\delta)}_{n-2} (t)^{\frac{1}{2}}\,. \end{equation} 
This bound is also valid for $n=2$, setting $F^{(\delta)}_0 (t) = \| \xi \|^2$. If $n=1$, we can use (\ref{eq:P_r}) to estimate 
\[ F^{(\delta)}_1 (t) \leq C \ell^{-1/2-\delta}  \big[ \| \xi \|^2 + \langle \xi, \cA_1^{(\delta)} \xi \rangle \big]  \]
for all $t \in [0;1]$. Inserting this bound on the r.h.s. of (\ref{eq:Fn-gro}) (with $n=2$), we obtain 
\[ F^{(\delta)}_2 (t) \leq C F^{(\delta)}_2 (0) + C \ell^{-2-2\delta} \langle \xi,\big[ \| \xi \|^2 + \cA_1^{(\delta)} \big]  \xi \rangle  \leq C \ell^{-2-2\delta} \langle \xi , \big[ 1 + \cA_1^{(\delta)} + \cA_2^{(\delta)}  \big] \xi \rangle\,. \]
Defining the coefficients $\alpha_n$ iteratively, as in (\ref{eq:aln}), by simple induction we conclude from (\ref{eq:Fn-gro}) that, for all $n \in \bN$, there exists a constant $C > 0$ such that 
\begin{equation}\label{eq:Fbd}  F^{(\delta)}_n (t) \leq C \ell^{-\alpha_n} \big\langle \xi, \big[ 1 + \sum_{j=1}^n \cA_j^{(\delta)} \big] \xi \big\rangle \, . \end{equation} 

Let us consider (\ref{eq:Sedn}), again for $k=0$. For $n \geq 1$, $(\eps ;\delta) \in \cI_n$, $t \in [0;1]$, we define 
\[ G^{(\eps,\delta)}_n (t) = \langle e^{t B(\eta)} \xi, \cS^{(\eps,\delta)}_n e^{t B(\eta)} \xi \rangle\,. \]
Proceeding similarly to (\ref{eq:Fn-gro}) we find, for $n \geq 2$ (with the convention that $G_0^{(\eps,\delta)} (t) = 0$ and $F_0^{(\delta)} (t) = \| \xi \|^2$ for all $t \in [0;1]$),   
\begin{equation} \label{eq:derG} \begin{split} \Big| \frac{dG_n^{(\eps,\delta)} (t)}{dt} \Big| \leq \; &C G_n^{(\eps,\delta)} (t) + C \ell^{-1/2-\delta} G_{n-1}^{(\eps,\delta)} (t) + C \ell^{-7/4-3\delta/2}  \, G_{n-1}^{(\eps,\delta)} (t)^{\frac{1}{2}} \, G_{n-2}^{(\eps,\delta)} (t)^{\frac{1}{2}} \\ &+ C \ell^{-1-\eps} F_{n-1}^{(\delta)} (t) + C \ell^{-2-\delta-\eps/2+\theta/2} \, G_{n-1}^{(\eps+\theta, \delta)} (t)^{\frac{1}{2}} \, F_{n-2}^{(\delta)} (t)^{\frac{1}{2}} \end{split}  \end{equation}
for a $\theta > \eps + 2\delta$. The second line arises from the contributions to the commutator (\ref{eq:comm}) where $q$ coincides 
with the variable raised to the power $2+\eps$. In fact, the contribution from the first term in (\ref{eq:comm}) can be estimated by 
\[ \begin{split}  \sum_{q, p_1, \dots ,p_{n-1} \in \L^*_+} &|\eta_q| |q|^{2+\eps} |p_1|^{3/2+\delta} \dots |p_{n-1}|^{3/2 + \delta} \\ & \quad \times\| a_q a_{p_1} \dots a_{p_{n-1}} e^{t B(\eta)} \xi \| \| a_{-q}^* a_{p_1} \dots a_{p_{n-1}} e^{t B(\eta)} \xi \| \\  &\leq C \| \eta \|_\infty G_n^{(\eps, \delta)} (t) + C G_n^{(\eps,\delta)} (t)^{\frac{1}{2}} F_{n-1}^{(\delta)} (t)^{\frac{1}{2}} \Big[ \sum_q \eta_q^2 |q|^{2+ \eps} \Big]^{\frac{1}{2}} \\ &\leq C G_n^{(\eps,\delta)} (t) + C \ell^{-1-\eps} F_{n-1}^{(\delta)} (t)\,. \end{split} \]
The contribution from the second term on the r.h.s. of (\ref{eq:comm}), on the other hand, can be bounded by 
\[ \begin{split} \sum_{q, p_1, \dots ,p_{n-2} \in \L^*_+} &|\eta_q | |q|^{7/2+\eps + \delta} |p_1|^{3/2+\delta} \dots |p_{n-2}|^{3/2 + \delta} \\ &\quad \times  \| a_q a_{p_1} \dots a_{p_{n-2}} e^{t B(\eta)} \xi \| \| a_{-q}^* a_{p_1} \dots a_{p_{n-2}} e^{t B(\eta)} \xi \|  \\ &\leq C \Big[ \sup_q |\eta_q| |q|^{3/2+\delta} \Big] G_{n-1}^{(\eps,\delta)} (t) + C G_{n-1}^{(\eps + \theta, \delta)} (t)^{\frac{1}{2}} F_{n-2}^{(\delta)} (t)^{\frac{1}{2}} \Big[ \sum |q|^{5+\eps + 2 \delta -\theta} \eta_q^2 \Big]^{\frac{1}{2}} \\ &\leq C G_{n-1}^{(\eps,\delta)} (t) + C \ell^{-2-\delta-\eps/2 +\theta/2} G_{n-1}^{(\eps + \theta, \delta)} (t)^{\frac{1}{2}} F_{n-2}^{(\delta)} (t)^{\frac{1}{2}} \end{split}  \]
for a $\theta > \eps + 2\delta$ (this condition is needed to apply (\ref{eq:etaHr}), in Lemma \ref{lm:driftscatt}). 

If $n=1$, we use again (\ref{eq:P_r}) to estimate 
\[ G_1^{(\eps,\delta)} (t) \leq C \ell^{-1-\eps} \Big\{ \| \xi \|^2 + \langle \xi, \cS^{(\eps,\delta)}_1 \xi \rangle \Big\} \leq C \ell^{-\beta^\eps_1} \Big\{ \| \xi \|^2 + \sup_{(\eps,\delta) \in \cI_1} \langle \xi, \cS^{(\eps,\delta)}_1 \xi \rangle \Big\} \]
for all $\eps < 3$ ($G_1^{(\eps,\delta)}$ does not depend on $\delta$). Inserting this bound in (\ref{eq:derG}), we arrive at 
\[ \Big| \frac{dG_2^{(\eps,\delta)} (t)}{dt} \Big| \leq C G_2^{(\eps,\delta)} (t) + C \ell^{-5/2-\eps-\delta}  \Big\{ \| \xi \|^2 + \sup_{(\eps,\delta) \in \cI_1} \langle \xi, \cS^{(\eps,\delta)}_1 \xi \rangle \Big\}  \]
if we can find $\theta > 0$ such that $\theta > \eps + 2\delta$ and $\eps + \theta < 3$, i.e. if $\eps + \delta < 3/2$ (this condition is certainly true, if $\eps + 2\delta < 3/2$). By Gronwall's lemma (noticing that $\beta_2^\eps = 5/2+\eps+\delta$), we conclude that 
\[ \begin{split} G^{(\eps,\delta)}_2 (t) &\leq C \ell^{-\beta^\eps_2}\Big\{ \| \xi \|^2 +  \sup_{(\eps,\delta) \in \cI_1} \langle \xi, \cS^{(\eps,\delta)}_1 \xi \rangle  + \sup_{(\eps,\delta) \in \cI_2} \langle \xi, \cS^{(\eps,\delta)}_2 \xi \rangle \Big\} \end{split} \]
for all $\delta \in (0;1/6)$, $\eps \in (-1;3)$ such that $\eps + 2\delta < 3/2$. Now, we proceed by induction. We fix $n \in \bN$ and we assume that for all $j \leq n-1$ there exists a constant $C > 0$ such that 
\[ G_j^{(\eps, \delta)} (t) \leq C \ell^{-\beta^\eps_j} \Big\{ \| \xi \|^2 + \sum_{i=1}^j \sup_{(\eps,\delta) \in \cI_i} \langle \xi, \cS^{(\eps,\delta)}_i \xi \rangle  \Big\}  \]
for all $\delta \in (0;1/6)$ and all $\eps \in (-1;3)$ with $\eps + \delta < 3/2^{(j-1)}$ and all $t \in [0;1]$. Then, using also (\ref{eq:Fbd}), (\ref{eq:derG}) implies that 
\begin{equation}\label{eq:derG2}  \Big| \frac{dG_n^{(\eps,\delta)} (t)}{dt} \Big| \leq C G_n^{(\eps,\delta)} (t) + C \ell^{-\beta^\eps_n} \Big\{ \| \xi \|^2 + \sum_{i=1}^{n-1} \sup_{(\eps,\delta) \in \cI_i} \langle \xi, \cS^{(\eps,\delta)}_i \xi \rangle  \Big\}  \end{equation} 
if we can show that   
\begin{equation} \label{eq:beta-cond} \left\{ \begin{array}{ll} 
\beta_n^{\eps} \geq \beta_{n-1}^{\eps} + \delta -1/2 \\
\beta_n^{\eps} \geq 7/4 + 3\delta/2 + (\beta^{\eps}_{n-1} + \beta^{\eps}_{n-2})/2 \\
\beta_n^{\eps} \geq 1 + \eps + \alpha_{n-1} \\
\beta_n^{\eps} \geq 2 + \delta + \eps/2 - \theta/2 + \beta^{\eps+ \theta}_{n-1}/2 + \alpha_{n-2}/2 \\
\end{array} \right. \end{equation} 
and if we can find $\theta \in \bR$ such that $\theta > \eps + 2 \delta$ and $\eps + \theta + 2\delta <  3/2^{(n-2)}$, ie. if $\eps + 2\delta < 3/2^{(n-1)}$. To verify (\ref{eq:beta-cond}), we use that $\beta_n^{\eps} = \alpha_n + 1/2 + \eps - \delta$. The first and the third conditions in (\ref{eq:beta-cond}) are equivalent to 
\[ \alpha_n \geq \alpha_{n-1} + 1/2 +\delta \]
which follows easily from the explicit formula (\ref{eq:aln2}). The second and the fourth conditions are immediate consequences of the recursive definition (\ref{eq:aln}) of the coefficients $\alpha_n$. From (\ref{eq:derG2}), by Gronwall's lemma we conclude that 
\begin{equation}\label{eq:Gnfin} G_n^{(\eps,\delta)} (t) \leq C \ell^{-\beta^\eps_n} \Big\{ \| \xi \|^2 + \sum_{i=1}^{n} \sup_{(\eps,\delta) \in \cI_i} \langle \xi, \cS^{(\eps,\delta)}_i \xi \rangle  \Big\} \end{equation}
for all $\delta \in (0;1/6), \eps \in (-1;3)$ with $\eps+ 2\delta < 3/2^{(n-1)}$.

Next, we show (\ref{eq:T2}). For $t \in [0;1]$ and for $n = 2,3,4$, we set 
\[ H_n (t)  = \langle e^{t B(\eta)} \xi , \cT_n e^{t B(\eta)} \xi \rangle \, . \] 
Proceeding as in the proof of (\ref{eq:Fd1}), we find 
\begin{equation}\label{eq:derH2} \begin{split} \Big| \frac{dH_2 (t)}{dt} \Big| \leq \; &C \sum_{p,q \in \L^*_+} |\eta_q| \, q^2 p^2 \, \| a_q a_p e^{t B(\eta)} \xi \| \| a_{-q}^* a_p e^{t B(\eta)} \xi \|\\ & + C \sum_{q \in \L^*_+} |\eta_q| \, |q|^4 \| a_q e^{t B(\eta)} \xi \| \| a_{-q}^*e^{t B(\eta)} \xi \|\,. \end{split}  \end{equation} 
Using $\| a_{-q}^* \zeta \| \leq \| a_{-q} \zeta \| + \| \zeta \|$ and Cauchy-Schwarz's inequality we obtain, with (\ref{eq:P_r}) and (\ref{eq:etaHr}),
\[ \Big| \frac{dH_2 (t)}{dt} \Big| \leq C H_2 (t) + C \ell^{-2} \langle \xi, (1+\cP^{(2)} )\xi \rangle + C  \ell^{-3}  \langle \xi , (1+ \cP^{(4)} ) \xi \rangle\,. \]
By Gronwall's lemma, we conclude that 
\begin{equation}\label{eq:H2fin} H_2 (t) \leq C \ell^{-3} \langle \xi, \big( 1 + \cT^{(2)} + \cP^{(4)} \big) \xi \rangle \end{equation} 
for all $t \in [0;1]$. 

Analogously to (\ref{eq:derH2}), we find 
\[ \begin{split} \Big| \frac{dH_3 (t)}{dt} \Big| \leq \; &C \sum_{q,p_1, p_2 \in \L^*_+} |\eta_q | \, q^2 p_1^2 p_2^2 \, \| a_q a_{p_1} a_{p_2} e^{t B(\eta)} \xi \| \| a_{-q}^* a_{p_1} a_{p_2} e^{t B(\eta)} \xi \|\\ & + C \sum_{q, p  \in \L^*_+} |\eta_q| \, |q|^4 p^2 \| a_q a_p e^{t B(\eta)} \xi \| \| a_{-q}^* a_p e^{t B(\eta)} \xi \|\,. \end{split}  \]
Thus 
\begin{equation}\label{eq:derH3} \begin{split} \Big| \frac{dH_3 (t)}{dt} \Big| \leq \; &C H_3 (t) + C \ell^{-1} H_2 (t)  + C \sum_{q,p \in \L^*_+} |\eta_q| |q|^4 p^2 \| a_q a_p e^{t B(\eta)} \xi \| \| a_p e^{t B(\eta)} \xi \|\,. \end{split} \end{equation} 
To control the last term, we distinguish the contribution 
\begin{equation}\label{eq:qpsam} \begin{split} \sum_{q\in \L^*_+} |\eta_q| |q|^6 \| a_q^2  e^{t B(\eta)} \xi \| \| a_{q} e^{t B(\eta)} \xi \| &\leq C  \langle e^{t B(\eta)} \xi , \cP^{(4)} (\cN_+ + 1) e^{t B(\eta)} \xi \rangle \\ &\leq C \ell^{-3} \langle \xi, \cP^{(4)} (\cN_+ + 1) \xi \rangle \end{split} \end{equation} 
arising from terms with $p= q$, a similar contribution from terms with $p=-q$ and the contribution arising from terms with $p \not = -q, q$, which can be bounded, with Cauchy-Schwarz's inequality, by \begin{equation}\label{eq:T3-not} \begin{split} \sum_{q,p \in \L^*_+ : p \not = -q,q} |\eta_q| |q|^4 p^2 \|a_q a_p e^{t B(\eta)} \xi \| \| a_p e^{t B(\eta)} \xi \| &\leq C \ell^{-3/2} W_{4,2}^{1/2} (t)  \langle e^{t B(\eta)} \xi , \cP^{(2)} e^{t B(\eta)} \xi \rangle^{1/2} \\ & \leq W_{4,2} (t) + C \ell^{-4} \langle \xi, ( 1 +\cP^{(2)}) \xi \rangle  \end{split} \end{equation} 
where we applied (\ref{eq:P_r}) and we defined 
\begin{equation}\label{eq:Wdef} W_{4,2} (t) = \sum_{p_1, p_2 \in \L^*_+ : p_1 \not = -p_2, p_2} |p_1|^4 p_2^2 \, \| a_{p_1} a_{p_2} e^{t B(\eta)} \xi \|^2 \,.\end{equation} 
To compute the derivative of $W_{4,2}$, we proceed once again as in (\ref{eq:Fd1}), noticing however that, because of the restriction to $p_1 \not = -p_2, p_2$, the contribution from the second term on the r.h.s. of (\ref{eq:comm}) vanishes. We find, with (\ref{eq:P_r}), 
\[ \begin{split} \Big| \frac{dW_{4,2} (t)}{dt} \Big| &\leq C \sum_{q,p \in \L^*_+ : p \not = -q} | \eta_q | |q|^4 p^2 \| a_q a_p e^{t B(\eta)} \xi \| \| a_{-q}^* a_p e^{tB(\eta)} \xi \| \\ & \hspace{.4cm} + C \sum_{q,p \in \L^*_+ : p \not = -q} | \eta_q | q^2 |p|^4 \| a_q a_p e^{t B(\eta)} \xi \| \| a_{-q}^* a_p e^{tB(\eta)} \xi \| \\ &\leq C W_{4,2} (t) + C \ell^{-3} \langle e^{t B(\eta)} \xi , \cP^{(2)} e^{t B(\eta)} \xi \rangle + C \ell^{-1} \langle e^{t B(\eta)} \xi , \cP^{(4)} e^{t B(\eta)} \xi \rangle \\ &\leq C W_{4,2} (t) + C \ell^{-4} \langle \xi, \big(1 +  \cP^{(4)} \big) \xi \rangle \,. \end{split} \]
By Gronwall's lemma, we conclude (recalling (\ref{eq:Zdef})) that 
\begin{equation} \label{eq:Wfin} W_{4,2} (t) \leq C \ell^{-4} \langle \xi, \big( 1 + \cP^{(4)} + \cZ_{4,2} \big) \xi \rangle \end{equation} 
for all $t \in [0;1]$. Inserting this estimate in (\ref{eq:T3-not}), and then, together with (\ref{eq:qpsam}), in (\ref{eq:derH3}), we obtain (using also that $\cP^{(4)} \cN_+ \leq \cZ_{4,2}$) 
\[ \Big| \frac{dH_3 (t)}{dt} \Big| \leq C H_3 (t) + C \ell^{-1} H_2 (t) + C \ell^{-4} \langle \xi, \big( 1 +  \cP^{(4)} + \cZ_{4,2}  \big)  \xi  \rangle\,. \]
With (\ref{eq:H2fin}) and Gronwall's lemma, we conclude that  
\begin{equation}\label{eq:H3fin} H_3 (t) \leq C \ell^{-4} \langle  \xi, \big( 1 + \cT^{(3)} + \cZ_{4,2} + \cP^{(4)} \big) \xi \rangle \, . \end{equation} 
To control $H_4$, we proceed again as we did to show (\ref{eq:derH3}) and we bound 
\[ \begin{split} 
\Big| \frac{dH_4 (t)}{dt} \Big| \leq \; &C H_4 (t) + C \ell^{-1} H_3 (t) \\ &+ C \sum_{q , p_1, p_2 \in \L^*_+} |\eta_q| |q|^4 p_1^2 p_2^2 \| a_q a_{p_1} a_{p-2} e^{t B(\eta)} \xi \| \| a_{p_1} a_{p_2} e^{t B(\eta)} \xi \| \,.\end{split} \]
In the last term, if $q = \pm p_1$ or $q = \pm p_2$, we find terms that can be bounded using (\ref{eq:etap}) and (\ref{eq:H3fin}) (and the trivial estimate $\cT_2 \cN_+ \leq \cT_3$) by 
\begin{equation}\label{eq:pqsame}  \begin{split} 
\sum_{q , p \in \L^*_+} &|\eta_q| |q|^6 p^2 \| a_q a_{\pm q} a_p e^{t B(\eta)} \xi \| \| a_{\pm q} a_p e^{t B(\eta)} \xi \| \\  &\leq C \ell^{-2} \sum_{q , p \in \L^*_+} q^2 p^2 \| a_q  a_p \cN_+^{1/2} e^{t B(\eta)} \xi \| \| a_{\pm q} a_p e^{t B(\eta)} \xi \| \\ &\leq C \ell^{-2} \langle e^{t B(\eta)} \xi , \cT_2 ( \cN_+  + 1) e^{t B(\eta)} \xi \rangle  \leq C \ell^{-6} \langle \xi, (1 + \cT_3 + \cZ_{4,2} +  \cP^{(4)} ) \xi \rangle \,.
\end{split} \end{equation} 
Contributions from terms with $q \not = \pm p_1, \pm p_2$, on the other hand, can be estimated (with (\ref{eq:H2fin})) by \begin{equation}\label{eq:T4-not} \begin{split} \sum_{q,p_1,p_2 : q \not = \pm p_1, \pm p_2} |\eta_q| |q|^4 p_1^2 p_2^2 \|a_q a_{p_1} &a_{p_2}  e^{t B(\eta)} \xi \| \| a_{p_1} a_{p_2}  e^{t B(\eta)} \xi \| \\ &\leq C \ell^{-3/2}  W_{4,2,2}^{1/2} (t) \langle e^{t B(\eta)} \xi , \cT_2  \, e^{t B(\eta)} \xi \rangle^{1/2} \\ &\leq C W_{4,2,2} (t) + C \ell^{-6} \langle \xi, ( 1 + \cT_2 + \cP^{(4)} ) \xi \rangle  \end{split} \end{equation} 
where we defined 
\[ W_{4,2,2} (t) = \sum_{p_1,p_2,p_3 \in \L^*_+ : p_1 \not = \pm p_2, \pm p_3} |p_1|^4 p_2^2 \, p_3^2 \, \| a_{p_1} a_{p_2} a_{p_3} e^{t B (\eta)} \xi \|^2 \,.\]
We compute 
\[ \begin{split}  \Big| \frac{dW_{4,2,2} (t)}{dt} \Big| \leq \; &C \sum_{q, p_2,p_3 \in \L_+^* : q \not = \pm p_2, \pm p_3} |\eta_q| |q|^4 p_2^2 p_3^2 \| a_q a_{p_1} a_{p_2} e^{t B(\eta)} \xi \| \| a_{-q}^* a_{p_1} a_{p_2} e^{t B(\eta)} \xi \| \\ &+ C \sum_{q,p_1, p_2 \in \L^*_+ : p_1 \not = \pm q, \pm p_2} q^2 |\eta_q| |p_1|^4 p_2^2 \| a_q a_{p_1} a_{p_2} e^{t B(\eta)} \xi \| \| a_{-q}^* a_{p_1} a_{p_2} e^{t B (\eta)} \xi \| \\ &+ C \sum_{q,p \in \L^*_+ : q \not = \pm p} |p|^4 |q|^4 |\eta_q| \, \| a_q a_p e^{t B(\eta)} \xi \| \| a_{-q}^* a_p e^{t B(\eta)} \xi \|   \end{split}\  \]
which leads to 
\begin{equation}\label{eq:W422}  \Big| \frac{dW_{4,2,2} (t)}{dt} \Big| \leq C W_{4,2,2} (t) + C \ell^{-6} \langle \xi, (1 + \cP^{(4)} + \cZ_{4,2}) \xi \rangle + C W_{4,4} (t) \end{equation}  
where 
\[ W_{4,4} (t) = \sum_{q,p \in \L^*_+ : q \not = \pm p} |q|^4 |p|^4 \| a_q a_p e^{t B(\eta)} \xi \|^2 \]
satisfies the estimate 
\[ \begin{split} \Big| \frac{dW_{4,4} (t)}{dt} \Big| &\leq C \sum_{q \not = p} |p|^4 |q|^4 \eta_q \| a_q a_p e^{t B(\eta)} \xi \| \| a_{_q}^* a_p e^{t B(\eta)} \xi \| \\ &\leq C W_{4,4} (t) + C \ell^{-6} \langle \xi, (1 + \cP^{(4)}) \xi \rangle \,.\end{split} \]
Thus, recalling the definition (\ref{eq:Zdef}), we find 
\[ W_{4,4} (t) \leq C \ell^{-6} \langle \xi, (1 + \cP^{(4)}+ \cZ_{4,4}) \xi \rangle \,.\]
Inserting this bound in (\ref{eq:W422}), we obtain 
\[ W_{4,2,2} (t) \leq C \ell^{-6} \langle \xi, (1 + \cP^{(4)} + \cZ_{4,2} + \cZ_{4,2,2} ) \xi \rangle \,. \]
Plugging the last equation in (\ref{eq:T4-not}) and using (\ref{eq:pqsame}), we arrive at
\[ H_4 (t) \leq C \ell^{-6} \langle \xi, (1 + \cP^{(4)} + \cZ_{4,4} + \cZ_{4,2,2} + \cT_4 ) \xi \rangle \,. \]

Finally, we prove (\ref{eq:impr-S3}). For $\eps > -1$, $\delta \in (0;1/6)$ with $\eps + \delta < 1$, we define 
\[ J^{(\eps,\delta)} (t) = \langle e^{t B(\eta)} \xi, \cS^{(\eps,\delta)}_3 e^{t B(\eta)} \xi \rangle \,.\]
Proceeding as in the proof of (\ref{eq:derG}), we find 
\[\begin{split}  \Big| \frac{dJ^{(\eps,\delta)} (t)}{dt} \Big| \leq \; &C J^{(\eps,\delta)} (t) + C \ell^{-1-\eps} F^{(\delta)}_2 (t)  + C \ell^{-1/2-\delta}  G_2^{(\eps,\delta)} (t) \\ &+ \sum_{p,q \in \L^*_+} |\eta_q| |q|^{7/2+\eps + \delta} |p|^{3/2+\delta} \| a_p a_q e^{t B(\eta)} \xi \| \| a_p e^{tB(\eta)} \xi \| \\ &+  \sum_{p,q \in \L^*_+} |\eta_q| |q|^{3+ 2\delta} |p|^{2+\eps} \| a_p a_q e^{t B(\eta)} \xi \| \| a_p e^{tB(\eta)} \xi \| \,.\end{split} \] 
 Recalling the definition (\ref{eq:Wdef}), we can estimate (distinguishing $p = q$ from $p \not = q$) 
 \[ \begin{split} \sum_{p,q \in \L^*_+} &|\eta_q| |q|^{7/2+\eps + \delta} |p|^{3/2+\delta} \| a_p a_q e^{t B(\eta)} \xi \| \| a_p e^{tB(\eta)} \xi \| \\  \leq \; &C \langle e^{t B(\eta)} \xi , \cP^{(3+\eps+2\delta)} (\cN_+ + 1)  e^{tB(\eta)} \xi \rangle \\ &+ C W_{4,2} (t)^{\frac{1}{2}} \Big( \sum_{q \in \L^*_+} \eta_q^2 |q|^{3+2\eps+2\delta} \Big)^{\frac{1}{2}}   \langle e^{t B(\eta)} \xi , \cP^{(1+2\delta)} e^{t B(\eta)} \xi \rangle^{\frac{1}{2}} 
  \\ \leq \; &C \langle e^{t B(\eta)} \xi , \cP^{(3+\eps+2\delta)} e^{tB(\eta)} \xi \rangle + C \ell^{-1-\eps-\delta} W_{4,2} (t)^{\frac{1}{2}} \langle e^{t B(\eta)} \xi , \cP^{(1+2\delta)} e^{t B(\eta)} \xi \rangle^{\frac{1}{2}} 
  \end{split} \]
 and, similarly,  
 \[ \begin{split} \sum_{p,q \in \L^*_+} &|\eta_q| |q|^{3+ 2\delta} |p|^{2+\eps} \| a_p a_q e^{t B(\eta)} \xi \| \| a_p e^{tB(\eta)} \xi \| \\  \leq \; &C \langle e^{t B(\eta)} \xi , \cP^{(3+\eps+2\delta)} (\cN_+ + 1)  e^{tB(\eta)} \xi \rangle + C \ell^{-\frac{1}{2} - 2\delta} W_{4,2} (t)^{\frac{1}{2}} \langle e^{t B(\eta)} \xi , \cP^{(2+2\eps)} e^{t B(\eta)} \xi \rangle^{\frac{1}{2}}\,. \end{split} \]
With Lemma \ref{lm:N-Pgrow}, with (\ref{eq:Fbd}), (\ref{eq:Gnfin}) and (\ref{eq:Wfin}), we conclude that
\[ \begin{split}    \Big| \frac{dJ^{(\eps,\delta)} (t)}{dt} \Big| \leq \; &C J^{(\eps,\delta)} (t) \\ &+ C \ell^{-3-\eps- 2\delta} \Big\{ \langle \xi, \big(1+ \cP^{(4)} +  \cZ_{4,2} \big) \xi \rangle + \sup_{(\eps, \delta) \in \cI_2} \langle \xi, \cS_2^{(\eps,\delta)} \xi \rangle \Big\}  \end{split}\]
 for all $t \in [0;1]$. By Gronwall's lemma, we obtain (\ref{eq:impr-S3}). 
\end{proof}

\section{Proof of Theorem \ref{thm:main}}\label{sec:main}

With the unitary operator $U_N$ as in (\ref{eq:UNdef}), with $\eta$ as introduced after (\ref{eq:g0}) and $\tau$ as in (\ref{eq:taup}), we define $\Phi_N \in L^2_s (\L^N)$ setting 
\begin{equation}\label{eq:PhiN-def} \Phi_N = U_N^* e^{B(\eta)} e^{B(\tau)} \Omega \,. \end{equation} 
We recall that we assumed $N^{-1+\nu}\leq\ell\leq N^{-3/4-\nu}$ (see Prop. \ref{prop:eff}) and $\ell_0>0$ small enough (independent of $N$).
From Prop. \ref{prop:cM}, we find that 
\begin{equation}\label{eq:en-Phi}
\begin{split} 
\langle \Phi_N, H_N^\text{eff} \Phi_N \rangle &= \langle \Omega, \cM^\text{eff}_{N,\ell} \Omega \rangle \\ &=  4\pi \frak{a} (N-1) + e_\Lambda \frak{a}^2 \\ &\hspace{.4cm} -\frac{1}{2} \sum_{p \in \L^*_+} \Big[ p^2 + 8\pi \frak{a} - \sqrt{|p|^4 + 16 \pi \frak{a} p^2} - \frac{(8\pi \frak{a})^2}{2p^2} \Big] + \cO (N^{-\eps}) \end{split}
\end{equation}
for a sufficiently small $\eps > 0$. 

Additionally, with Lemma \ref{lm:regu} we obtain important regularity estimates for $\Phi_N$. 
From (\ref{eq:T2}) (and from (\ref{eq:P_r}) in Lemma \ref{lm:N-Pgrow}), we find $C > 0$ such that  
\begin{equation}\label{eq:apri1}  \begin{split} 
\langle \Phi_N, (-\Delta_{x_1}) \Phi_N \rangle &\leq \frac{C}{N\ell} \\ \langle \Phi_N, (-\Delta_{x_1})(-\Delta_{x_2}) \Phi_N \rangle &\leq \frac{C}{N^2 \ell^3}  \\ \langle \Phi_N, (-\Delta_{x_1})(- \Delta_{x_2})(- \Delta_{x_3})  \Phi_N \rangle &\leq \frac{C}{N^3 \ell^4} \\
\langle \Phi_N, (-\Delta_{x_1})(- \Delta_{x_2})(- \Delta_{x_3})(-\Delta_{x_4}) \Phi_N \rangle &\leq \frac{C}{N^4 \ell^6}\,.
 \end{split} \end{equation} 
From (\ref{eq:Adn}) we find, for $n \in \bN$ and $0 < \delta < 1/6$, a constant $C > 0$ such that 
\begin{equation}\label{eq:apri2} \begin{split} 
\langle \Phi_N, (-\Delta_{x_1})^{3/4+\delta/2} \dots (-\Delta_{x_n})^{3/4+ \delta/2} \Phi_N \rangle &\leq \frac{C}{N^n \ell^{\alpha_n}} \,.
\end{split} \end{equation} 
From (\ref{eq:Sedn}) in Lemma \ref{lm:regu}, we find, for $n \in \bN$ and for every $\eps \in (-1;3)$, $\delta \in (0;1/6)$ such that $\eps + 2\delta < 3/2^{n-1}$, a constant $C > 0$ such that 
\begin{equation}\label{eq:apri3} \begin{split} 
\langle \Phi_N, (-\Delta_{x_1})^{1+\eps/2} (- \Delta_{x_2})^{3/4+\delta/2}  \dots (-\Delta_{x_n})^{3/4+\delta/2} \Phi_N \rangle &\leq \frac{C}{N^n \ell^{\beta^\eps_n}} \,.
\end{split} 
\end{equation}

Let us prove (\ref{eq:apri3}), the other bounds can be shown similarly. First of all, we symmetrize the expectation on the l.h.s. of (\ref{eq:apri3}), writing 
\[ \begin{split} \langle \Phi_N, &(-\Delta_{x_1})^{1+\eps/2} (- \Delta_{x_2})^{3/4+\delta/2}  \dots (-\Delta_{x_n})^{3/4+\delta/2} \Phi_N \rangle \\ &= \frac{1}{{N \choose n}} \sum_{1 \leq i_1 < \dots < i_n \leq N} \langle \Phi_N, (-\Delta_{x_{i_1}})^{1+\eps/2} (- \Delta_{x_{i_2}})^{3/4+\delta/2}  \dots (-\Delta_{x_{i_n}})^{3/4+\delta/2} \Phi_N \rangle \,. \end{split} \]
Next, we express the observable in second quantized form and we apply the rules (\ref{eq:U-rules}). We find
\[  \begin{split} \langle \Phi_N, &(-\Delta_{x_1})^{1+\eps/2} (- \Delta_{x_2})^{3/4+\delta/2}  \dots (-\Delta_{x_n})^{3/4+\delta/2} \Phi_N \rangle \\ &\leq \frac{C}{N^n} \sum_{p_1, \dots , p_n \in \L^*_+} |p_1|^{2+\eps} |p_2|^{3/2+\delta} \dots |p_n|^{3/2+\delta} \\ &\hspace{3cm} \times \langle e^{B(\eta)} e^{B(\tau)} \Omega , a_{p_1}^* \dots a_{p_n}^* a_{p_n} \dots a_{p_1} e^{B(\eta)} e^{B(\tau)} \Omega \rangle \,.\end{split} \]
With (\ref{eq:Sedn}), we conclude that 
\[  \begin{split} \langle \Phi_N, &(-\Delta_{x_1})^{1+\eps/2} (- \Delta_{x_2})^{3/4+\delta/2}  \dots (-\Delta_{x_n})^{3/4+\delta/2} \Phi_N \rangle \\ &\hspace{3cm} \leq \frac{C}{N^n \ell^{\beta_n^\eps}}  \Big\{ 1 + \sum_{j=1}^n \sup_{\eps, \delta \in \cI_j} \langle e^{B(\tau)} \Omega, S_j^{(\eps,\delta)} e^{B(\tau)} \Omega \rangle \Big\}\,.  \end{split} \]
To control the growth of $S_j^{(\eps,\delta)}$, we can proceed exactly as in the proof of Lemma \ref{lm:regu}; the difference is that, by (\ref{eq:tau-dec}), $|\tau_p| \leq C /|p|^4$, uniformly in $N,\ell$
(this should be compared with the bound (\ref{eq:etap}), for the coefficients $\eta_p$). As a consequence, for $0 < r< 5$, we find 
\[ \sum_{p \in \L^*_+} |p|^r |\tau_p|^2 \leq C \]
and thus the analog of the bounds in Lemma \ref{lm:regu}, with $B(\eta)$ replaced by $B(\tau)$, 
holds uniformly in $\ell$. This observation leads to (\ref{eq:apri3}). 

With $\Phi_N$ as in (\ref{eq:PhiN-def}), we define the trial function $\Psi_N \in L^2_s (\L^N)$ by 
\[ \Psi_N ({\bf x}) = \Phi_N ({\bf x}) \cdot \prod_{i<j}^N f_\ell (x_i - x_j)\,. \] 
The presence of the Jastrow factor guarantees that $\Psi_N$ satisfies the hard-sphere condition \eqref{eq:hard}.  Combining (\ref{eq:en-psi}), Prop.  \ref{lm:3body} and Prop. \ref{prop:eff}, we obtain 
\begin{equation}\label{eq:en-Psi1} \begin{split} 
&\frac{\langle \Psi_N , \sum_{j=1}^N -\Delta_{x_j} \Psi_N \rangle}{\| \Psi_N \|^2}  \\ & \leq \langle \Phi_N , H_N^\text{eff} \Phi_N \rangle - \frac{N(N-1)}{2} \langle \Phi_N, \Big\{  \big[ H_{N-2}^\text{eff} - 4 \pi \frak{a} N \big] \otimes u_\ell (x_{N-1} - x_N) \Big\} \Phi_N \Big\rangle + C N^{-\eps} \,.\end{split} \end{equation}
Here we used (\ref{eq:apri1}), (\ref{eq:apri2}) and (\ref{eq:apri3}) to verify the assumption (\ref{eq:3D-assum}) of Prop. \ref{lm:3body} and the assumption (\ref{eq:Delta-phi}) for Prop. \ref{prop:eff}. Moreover, we used (\ref{eq:en-Phi}) to verify the condition $\langle \Phi_N , H_N^\text{eff} \Phi_N \rangle \leq 4 \pi \frak{a} N + C$ in Prop. \ref{prop:eff}. 

Inserting (\ref{eq:en-Phi}) on the r.h.s. of (\ref{eq:en-Psi1}), we arrive at 
\begin{equation}\label{eq:en-Psi2} \begin{split} &\frac{\langle \Psi_N , \sum_{j=1}^N -\Delta_{x_j} \Psi_N \rangle}{\| \Psi_N \|^2} \\ &\hspace{.4cm}  \leq 4\pi \frak{a} (N-1) + e_\Lambda \frak{a}^2 -\frac{1}{2} \sum_{p \in \L^*_+} \Big[ p^2 + 8\pi \frak{a} - \sqrt{|p|^4 + 16 \pi \frak{a} p^2} - \frac{(8\pi \frak{a})^2}{2p^2} \Big] \\ &\hspace{.8cm} - \frac{N(N-1)}{2} \langle \Phi_N, \Big\{  \big[ H_{N-2}^\text{eff} - 4 \pi \frak{a} N \big] \otimes u_\ell (x_{N-1} - x_N) \Big\} \Phi_N \Big\rangle + C N^{-\eps}  \,.
\end{split}
\end{equation}
To conclude the proof of Theorem \ref{thm:main}, we still have to show that the contribution on the last line is negligible, in the limit $N \to \infty$. 

From (\ref{eq:GN-JN-LB}) in Prop. \ref{prop:cGcJ}, we find
\begin{equation} \label{eq:diff-en} H_{N-2}^\text{eff} - 4 \pi \frak{a} N \geq U_{N-2}^* e^{B(\eta)} \Big\{ - C (\cN_+ + 1) - C N^{-\kappa} \cP^{(2+\kappa)} (\cN_+ + 1) \Big\} e^{-B(\eta)} U_{N-2} \end{equation} 
for $0< \kappa < \nu /2$. Notice here that both sides of the equation are operators on the Hilbert space $L^2_s (\L^{N-2})$ describing states with $(N-2)$ particles.  

For $\mu> 0$ to be chosen small enough, we can estimate 
\begin{equation}\label{eq:markov} (\cN_+ + 1) \leq C N^\mu + C (\cN_+ + 1) \chi (\cN_+ \geq N^\mu) \leq C N^\mu + C N^{-m\mu} (\cN_+ + 1)^{m+1} \end{equation} 
for any $m \in \bN$. Thus, the contribution arising from the first term in the parenthesis on the r.h.s. of (\ref{eq:diff-en}) can be bounded by  
\[ \begin{split} &\frac{N (N-1)}{2} \Big\langle \Phi_N ,  \Big\{ \Big[ U_{N-2}^* e^{B(\eta)} (\cN_+ + 1) e^{-B(\eta)} U_{N-2} \Big]  \otimes u_\ell (x_{N-1} - x_N) \Big\} \Phi_N \Big\rangle  \\ &\leq C N^{2+\mu} \langle \Phi_N , u_\ell (x_{N-1} - x_N) \Phi_N \rangle \\ &\hspace{.3cm} + C N^{2-m\mu} \Big\langle \Phi_N , \Big\{ \Big[ U_{N-2}^* e^{B(\eta)} (\cN_+ + 1)^{m+1} e^{-B(\eta)} U_{N-2} \Big] \otimes u_\ell (x_{N-1} - x_N) \Big\} \Phi_N \Big\rangle\,. \end{split}  \]
Using $\| u_\ell \|_1 \leq C\ell^2 / N$ and \eqref{eq:W1-L1} in the first and $\| u_\ell \|_\infty \leq C$ in the second term (by Lemma \ref{lm:hardcorescatt}), we obtain 
\[ \begin{split} \frac{N (N-1)}{2} \Big\langle \Phi_N ,  \Big\{ \Big[ U_{N-2}^* e^{B(\eta)} &(\cN_+ + 1) e^{-B(\eta)} U_{N-2} \Big] \otimes u_\ell (x_{N-1} - x_N) \Big\} \Phi_N \Big\rangle  \\ &\leq C N^{1+\mu} \ell^2 \langle \Phi_N , (1-\Delta_{x_1}) ( 1-\Delta_{x_2}) \Phi_N \rangle \\ &\hspace{.3cm} + C N^{2-m\mu} \Big\langle e^{B(\eta)} e^{B(\tau)} \Omega, (\cN_+ + 1)^{m+1} e^{B(\eta)} e^{B(\tau)} \Omega \Big\rangle \,.
 \end{split}  \]
Here we used Lemma \ref{lm:N-Pgrow} to control the growth of $(\cN_+ +1)^{m+1}$ under the action of $B(\eta)$. Moreover, with $\frak{q} = 1 - |\ph_0 \rangle \langle \ph_0|$ denoting the projection onto the orthogonal complement to the condensate wave function $\ph_0$ in $L^2 (\L)$ and with $\frak{q}_j = 1 \otimes \dots \otimes \frak{q} \otimes \dots \otimes 1$ acting as $\frak{q}$ on the $j$-th particle, we estimated, on the $N$-particle space $L^2_s (\L^N)$,  
\begin{equation}\label{eq:exte} U_{N-2}^* \, \cN_+ U_{N-2} \otimes 1 = \sum_{j=1}^{N-2} \frak{q}_j \leq \sum_{j=1}^N \frak{q}_j = U^*_N \cN_+ U_N \end{equation} 
(with a slight abuse of notation, $\cN_+$ denotes the number of particles operators on 
$\cF^{\leq (N-2)}_+$ on the l.h.s. and the number of particles operator on $\cF^{\leq N}_+$ on the r.h.s.). Using again Lemma \ref{lm:N-Pgrow} (and Lemma \ref{lm:action-tau}, for the action of $B(\tau)$), together with the bounds in (\ref{eq:apri1}), we conclude that 
\begin{equation} \label{eq:UNUfin} \begin{split} \frac{N (N-1)}{2} \Big\langle \Phi_N ,  \Big\{ \Big[ U_{N-2}^* e^{B(\eta)} (\cN_+ + 1) &e^{-B(\eta)} U_{N-2} \Big] \otimes u_\ell (x_{N-1} - x_N) \Big\} \Phi_N \Big\rangle \\ &\leq 
N^{1+\mu} \ell^2  \Big( 1 + \frac{1}{N^2\ell^3} \Big) + C N^{2-m\mu} \leq C N^{-\eps} \end{split} \end{equation} 
choosing first $\mu > 0$ small enough and then $m \in \bN$ large enough.

Let us now focus on the contribution of the second term in the parenthesis on the r.h.s. of (\ref{eq:diff-en}). Also here, we use (\ref{eq:markov}) to estimate
\[ \begin{split} 
&\frac{N^{1-\kappa} (N-1)}{2} \Big\langle \Phi_N , \Big\{ \Big[ U_{N-2}^* e^{B(\eta)} \cP^{(2+\kappa)} (\cN_+ + 1) 
e^{-B(\eta)} U_{N-2} \Big] \otimes u_\ell (x_{N-1} - x_N) \Big\} \Phi_N \Big\rangle \\ &\leq C N^{2-\kappa+\mu} \big\langle \Phi_N,  \big\{ \big[ U_{N-2}^* e^{B(\eta)} \cP^{(2+\kappa)} e^{-B(\eta)} U_{N-2} \big] \otimes u_\ell (x_{N-1} - x_N) \big\} \Phi_N \big\rangle \\ &\hspace{.2cm} + C N^{2-\kappa -m \mu} \big\langle \Phi_N, \big\{ \big[ U_{N-2}^* e^{B(\eta)} \cP^{(2+\kappa)} (\cN_+ + 1)^{m+1} e^{-B(\eta)} U_{N-2} \big] \otimes u_\ell (x_{N-1} - x_N) \big\} \Phi_N \big\rangle \\ &= \text{R}_1 + \text{R}_2 \,.\end{split} \]
To bound $\text{R}_2$, we can estimate $\| u_\ell \|_\infty \leq C$, we can apply Lemma \ref{lm:N-Pgrow} to control the growth of $\cP^{(2+\kappa)} (\cN_+ + 1)^{m+1}$ under conjugation with $e^{B(\eta)}$ and we can proceed similarly as in (\ref{eq:exte}) to replace $U_{N-2}$ with $U_N$. 
We find 
\[ \text{R}_2 \leq CN^{2-\kappa-m\mu} \big\langle e^{B(\eta)} e^{B(\tau)} \Omega, \big( \cP^{(2+\kappa)} + \ell^{-1-\kappa} \big) (\cN_+ + 1)^{m+1} e^{B(\eta)} e^{B(\tau)} \Omega \big\rangle \,. \]
Applying again Lemma \ref{lm:N-Pgrow} (and then Lemma \ref{lm:action-tau} for the action of $B(\tau)$), we conclude that 
\begin{equation}\label{eq:R2fin} \text{R}_{2} \leq C N^{2-\kappa-m\mu} \ell^{-1-\kappa}\,. \end{equation} 
As for the term $\text{R}_1$, we first use (\ref{eq:W1-L1}) in Lemma \ref{lemma:sobolev_bounds} to estimate, for $\delta > 0$ small enough, 
\begin{equation}\label{eq:R1112} \begin{split} 
\text{R}_1 \leq \; &C N^{2-\kappa+\mu} \| u_\ell \|_1 \big\langle \Phi_N , \big\{ \big[ 
U_{N-2}^* e^{B(\eta)} \cP^{(2+\kappa)} e^{-B(\eta)} U_{N-2} \big] \otimes 1 \big\} \Phi_N \big\rangle 
\\ &+  C N^{2-\kappa+\mu} \| u_\ell \|_1 \\ &\hspace{.5cm} \times 
\big\langle \Phi_N , \big\{ \big[ 
U_{N-2}^* e^{B(\eta)} \cP^{(2+\kappa)} e^{-B(\eta)} U_{N-2} \big] \otimes (\Delta_{x_{N-1}}\Delta_{x_N})^{3/4+\delta/2}  \big\} \Phi_N \big\rangle  \\ = \; &\text{R}_{11} + \text{R}_{12} .\end{split} \end{equation}
To control $\text{R}_{12}$, we apply Lemma \ref{lm:N-Pgrow} to bound 
\[ \begin{split} U_{N-2}^* e^{B(\eta)} \cP^{(2+\kappa)} e^{-B(\eta)} U_{N-2} &\leq C U_{N-2}^* \big[ \cP^{(2+\kappa)} + \ell^{-1-\kappa} \big] U_{N-2} \\ &= C \Big[ \sum_{j=1}^{N-2} (-\Delta_{x_j})^{1+\kappa/2}  + \ell^{-1-\kappa} \Big] \,. \end{split} \]
Thus 
\[ \begin{split} 
\text{R}_{12} \leq C N^{-1-\kappa+\mu} \ell^2 \big\langle e^{B(\eta)} e^{B(\tau)} \Omega, \big[ S^{(\kappa, \delta)}_3 + \ell^{-1-\kappa}  \cA^{(\delta)}_2 \big]  e^{B(\eta)} e^{B(\tau)} \Omega \big\rangle \,.\end{split} \]
With (\ref{eq:Adn}) and with (\ref{eq:impr-S3}) from Lemma \ref{lm:regu}, we conclude that 
\begin{equation}\label{eq:R12fin} \text{R}_{12} \leq  C N^\mu N^{-\wt{\eps}} \end{equation}
for some $\wt{\eps} > 0$, if $\delta$ is chosen small enough, and $0< \kappa < \nu/2$. 

It turns out that the term $\text{R}_{11}$ is more subtle; here we cannot afford the error arising from conjugation of $\cP^{(2+\kappa)}$ with $e^{-B(\eta)}$. Instead, we have to use the fact that we conjugate back with $e^{B(\eta)}$ when we take expectation in the state $\Phi_N = e^{B(\eta)} e^{B(\tau)} \Omega$. The two generalized Bogoliubov transformations do not cancel identically (because one acts on $(N-2)$ particles, the other on $N$), but of course their combined action produce a much smaller error. We will make use of the following lemma. 
\begin{lemma} \label{lemma:Pr-ex}
For $r\in(1;4]$ we have 
\begin{equation} \label{eq:exact_action_P}
e^{-B(\eta)} \mathcal{P}^{(r)}e^{B(\eta)}=  \mathcal{P}^{(r)}+ \sum_{p \in \L^*_+} |p|^r \eta_p\left( b^*_p b^*_{-p}+\mathrm{h.c.} \right)+\sum_{p \in \L^*_+} |p|^r \eta_p^2 + \mathcal{X}_1
\end{equation}
with 
\begin{equation*}\label{eq:cX}
\pm \mathcal{X}_1 \le C (\cN_+ + 1) + C N^{-1}  \big( \cP^{(r)} + \ell^{1-r} \big) (\cN_+ + 1) .
\end{equation*}
Moreover, 
\begin{equation} \label{eq:exact_action_offdiag}
e^{-B(\eta)} \sum_{p \in \L^*_+} |p|^r \eta_p(b^*_p b_{-p}^* + \mathrm{h.c.}) e^{B(\eta)} = \sum_{p \in 
\L^*_+} |p|^r \eta_p \big[ b^*_p b_{-p}^*+b_p b_{-p} \big]+2 \sum_{p \in \L^*_+} |p|^r \eta_p^2+ 
\mathcal{X}_2
\end{equation}
with 
\begin{equation*}
\pm \mathcal{X}_2 \le C (\cN_+ + 1) + C N^{-1}  \big( \cP^{(r)} + \ell^{1-r} \big) (\cN_+ + 1). \end{equation*}
\end{lemma}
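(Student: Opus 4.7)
The plan is to prove both identities via a second-order Duhamel expansion of $e^{-B(\eta)} X e^{B(\eta)}$, computing the single and double commutators in closed form using \eqref{eq:comm-bp} and absorbing the remaining contributions---the triple-commutator Duhamel remainder together with all corrections arising from the $\cN_+/N$ and $a^*a/N$ terms in the modified CCR---into the operators $\mathcal{X}_1$, $\mathcal{X}_2$.

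For \eqref{eq:exact_action_P}, I would set $X = \cP^{(r)}$ and write
\[
e^{-B(\eta)}\,\cP^{(r)}\,e^{B(\eta)} = \cP^{(r)} + [\cP^{(r)}, B(\eta)] + \tfrac{1}{2}[[\cP^{(r)}, B(\eta)], B(\eta)] + \cR,
\]
with $\cR$ the standard triple-commutator Duhamel remainder. The first commutator has already been computed inside the proof of Lemma~\ref{lm:N-Pgrow} and gives exactly the off-diagonal operator on the right-hand side. The double commutator $[[\cP^{(r)}, B(\eta)], B(\eta)]$ I would unfold by a second application of \eqref{eq:comm-bp}: its Kronecker-$\delta$ (canonical) part yields, after the Duhamel factor $1/2$, the $c$-number constant appearing on the right-hand side of \eqref{eq:exact_action_P}, while all pieces proportional to $\cN_+/N$ or $a^*a/N$ carry an explicit prefactor $N^{-1}$ and are estimated against $(\cP^{(r)}+\ell^{1-r})(\cN_++1)$ using $|\eta_p|\leq C|p|^{-2}$, Cauchy--Schwarz and \eqref{eq:etaHr}. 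Finally, for $\cR$ itself the factors $e^{\pm s B(\eta)}$ are disposed of by Lemma~\ref{lm:N-Pgrow}, and each nested commutator with $B(\eta)$ inside the triple bracket delivers either an additional factor $\eta_p$ (summable to $\ell^{1-r}$ via \eqref{eq:etaHr}) or a further $N^{-1}$ from \eqref{eq:comm-bp}, so that $\cR$ inherits the extra $N^{-1}$ required for the bound on $\mathcal{X}_1$.

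The identity \eqref{eq:exact_action_offdiag} is obtained by the same mechanism with $X = \sum_p |p|^r \eta_p (b_p^* b_{-p}^* + \mathrm{h.c.})$. Here the main contribution already arises from $[X, B(\eta)]$: the Kronecker-$\delta$ contractions $q = \pm p$ in $[b_p b_{-p}, b_q^* b_{-q}^*]$ reproduce, after collecting hermitian-conjugate terms, both the off-diagonal piece $\sum_p |p|^r \eta_p(b_p^* b_{-p}^* + b_p b_{-p})$ on the right-hand side and the stated constant, while the remaining $\cN_+/N$ and $a^*a/N$ corrections from \eqref{eq:comm-bp} together with all higher-order Duhamel contributions carry at least one additional factor of $\eta_p$ or $N^{-1}$ and are absorbed into $\mathcal{X}_2$ by means of Lemma~\ref{lm:N-Pgrow}, Cauchy--Schwarz and \eqref{eq:etaHr}.

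The main obstacle is not conceptual but combinatorial: in each nested commutator one has to systematically separate the canonical Kronecker-$\delta$ contributions---which produce the explicit constants and off-diagonal operators on the right-hand sides---from the $\cN_+/N$ and $a^*a/N$ corrections that furnish the crucial prefactor $N^{-1}$, and then to estimate each remaining operator-valued piece against $(\cP^{(r)}+\ell^{1-r})(\cN_++1)$ via Cauchy--Schwarz, the moment bound \eqref{eq:etaHr}, and the pointwise estimate $|\eta_p|\leq C|p|^{-2}$.
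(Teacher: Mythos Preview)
Your overall strategy—second-order Duhamel expansion and identification of the double commutator as ``constant plus small error''—is exactly the paper's strategy, and the ingredients you list (the modified CCR \eqref{eq:comm-bp}, the decay \eqref{eq:etap}, the moment bound \eqref{eq:etaHr}, and Lemma~\ref{lm:N-Pgrow}) are the right ones. Two remarks, one organizational and one substantive.

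\medskip
\textbf{Organization.} The paper proceeds in the opposite order from you: it first proves \eqref{eq:exact_action_offdiag} by a \emph{single} Duhamel step on the off-diagonal operator, computing the commutator
\[
\Big[\sum_{p}|p|^r\eta_p\big(b_p^*b_{-p}^*+b_pb_{-p}\big),\,B(\eta)\Big]=2\sum_p|p|^r\eta_p^2+\Xi
\]
exactly; since the scalar part is invariant under conjugation, only $e^{-tB}\Xi e^{tB}$ needs to be estimated (via Lemma~\ref{lm:N-Pgrow}), and no triple commutator appears. Then \eqref{eq:exact_action_P} follows by one more Duhamel step on $\cP^{(r)}$ and integrating the identity just obtained over $s\in[0,1]$. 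This is slightly cleaner than your Taylor-with-triple-commutator-remainder scheme, because it spares you the separate estimate on $\cR$; but the content is the same.

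\medskip
\textbf{A genuine confusion.} In your treatment of \eqref{eq:exact_action_offdiag} you write that the Kronecker-$\delta$ contractions in $[b_pb_{-p},b_q^*b_{-q}^*]$ ``reproduce \ldots the off-diagonal piece $\sum_p|p|^r\eta_p(b_p^*b_{-p}^*+b_pb_{-p})$''. This is not correct: that off-diagonal operator is $X$ itself, i.e.\ the zeroth-order term in the Duhamel expansion, and does \emph{not} arise from the commutator. A direct computation using \eqref{eq:comm-bp} shows that $[b_pb_{-p},b_q^*b_{-q}^*]$ is purely diagonal (a scalar plus $a^*a$ and $a^*a^*aa$ pieces), so $[X,B(\eta)]$ contributes only the constant $2\sum_p|p|^r\eta_p^2$ and the error $\Xi$. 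Once you compute the commutator you will see this immediately, but as written your proposal misattributes the origin of the main terms in the second identity.
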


We defer the proof of Lemma \ref{lemma:Pr-ex} to the end of the section, showing first how it can be used to estimate the error $\text{R}_{11}$ and to conclude the proof of Theorem \ref{thm:main}. Notice first that $2+\kappa\le4$ since $\kappa<\nu/2$ and $\nu$ is small enough. We can therefore apply Lemma \ref{lemma:Pr-ex} to find 
\[ \begin{split} 
 \langle \Phi_N , &\big\{ \big[ U_{N-2}^* e^{B(\eta)} \cP^{(2+\kappa)} e^{-B(\eta)} U_{N-2} \big] \otimes 1 \big\} \Phi_N \rangle \\ \leq \; &\big\langle \Phi_N \big\{ U_{N-2}^* \cP^{(2+\kappa)} U_{N-2} \otimes 1 \big\} \Phi_N \rangle + \sum_{p \in \L^*_+} |p|^{2+\kappa} \eta_p^2 \\ &+ \sum_{p \in \L^*_+} |p|^{2+\kappa} \eta_p  \big\langle \Phi_N \big\{ U_{N-2}^* \big[ b_p^* b_{-p}^* + \text{h.c.} \big] U_{N-2} \otimes 1 \big\} \Phi_N \rangle \\ &+ C   \langle \Phi_N , \big\{  U_{N-2}^* \big[ 1+N^{-1}\big(\mathcal{P}^{(2+\k)} + \ell^{-1-\k}\big)\big] (\mathcal{N}_++1)  U_{N-2} \otimes 1\big\} \Phi_N \rangle\,. 
 \end{split} \]
We observe that 
\[ U_{N-2}^* \cP^{(2+\kappa)} U_{N-2} \otimes 1 = \sum_{j=1}^{N-2} (-\Delta_{x_j})^{2+\kappa} \leq \sum_{j=1}^{N} (-\Delta_{x_j})^{2+\kappa} = U_N^* \cP^{(2+\kappa)} U_N \]
and that, similarly,  
\[\begin{split}  U_{N-2}^*\big[ 1+N^{-1}\big( \mathcal{P}^{(2+\k)} &+ \ell^{-1-\k}\big)\big](\mathcal{N}_++1) U_{N-2}\\\leq\;& U_N^*\big[ 1+N^{-1}\big(\mathcal{P}^{(2+\k)} + \ell^{-1-\k} \big)\big](\mathcal{N}_++1) U_N \,.\end{split} \]

Moreover, we find 
\[ \begin{split} \sum_{p \in \L^*_+} |p|^{2+\kappa} \eta_p U_{N-2}^* \big[ b_p^* b_{-p}^* + \text{h.c.} \big] U_{N-2}  &= \frac{1}{N-2} \sum_{p \in \L^*_+} |p|^{2+\kappa} \eta_p \big[ a_p^* a_{-p}^* a_0 a_0 + \text{h.c.} \big] \\ &= \frac{1}{N-2} \sum_{i<j}^{N-2} \big[ \theta (x_i - x_j) (\frak{p}_i \otimes \frak{p}_j) + \text{h.c.} \big] \\ &=  \frac{1}{N-2} \sum_{i<j}^{N} \big[ \theta (x_i - x_j) (\frak{p}_i \otimes \frak{p}_j) + \text{h.c.} \big] \\ &\hspace{.4cm} - \frac{1}{N-2} \sum_{\substack{i< j : \\ j = N-1, N}} \big[ \theta (x_i - x_j) (\frak{p}_i \otimes \frak{p}_j) + \text{h.c.} \big] \end{split} \]
with $\theta$ defined by the Fourier coefficients $\hat\theta_p = |p|^{2+\kappa} \eta_p$, and with $\frak{p}_j$ denoting the orthogonal projection $\frak{p} = |\ph_0 \rangle \langle \ph_0|$ on the condensate wave function acting on the $j$-particle. Rewriting the first term in second quantized form (but now, on the $N$-particle space), we find 
\[ \begin{split} 
\sum_{p \in \L^*_+} |p|^{2+\kappa} \eta_p U_{N-2}^* \big[ &b_p^* b_{-p}^* + \text{h.c.} \big] U_{N-2} \\
&=  \frac{N}{N-2} 
\sum_{p \in \L^*_+} |p|^{2+\kappa} \eta_p U_N^* \big[ b_p^* b_{-p}^* + \text{h.c.} \big] U_N \\ &\hspace{.4cm} 
- \frac{1}{N-2} \sum_{i < j : j=N-1, N}  \big[ \theta (x_i - x_j) (\frak{p}_i \otimes \frak{p}_j) + \text{h.c.} \big]\,.
 \end{split} \]
Therefore, we find 
\begin{equation} \label{eq:P-exp1} \begin{split} 
\big\langle  \Phi_N , \big\{ \big[& U_{N-2}^* e^{B(\eta)} \cP^{(2+\kappa)} e^{-B(\eta)} U_{N-2} \big] \otimes 1 \big\} \Phi_N \big\rangle \\ \leq \; &\big\langle e^{B(\eta)} e^{B(\tau)} \Omega, \cP^{(2+\kappa)} e^{B(\eta)} e^{B(\tau)} \Omega \big\rangle  + \sum_{p \in \L^*_+} |p|^{2+\kappa} \eta_p^2 \\ &- \frac{N}{N-2} \sum_{p \in \L^*_+} |p|^{2+\kappa} \eta_p \big\langle e^{B(\eta)} e^{B(\tau)} \Omega,  \big( b_p^* b_{-p}^* + b_p b_{-p} \big)  e^{B(\eta)} e^{B(\tau)} \Omega \big\rangle \\ & + \frac{1}{N-2}  \sum_{i < j : j=N-1, N}  \big\langle \Phi_N , \big[ \theta (x_i - x_j) (\frak{p}_i \otimes \frak{p}_j) + \text{h.c.} \big] \Phi_N \rangle  \\
& +C  \langle e^{B(\eta)} e^{B(\tau)} \Omega ,\big[ 1+N^{-1}\big( \mathcal{P}^{(2+\k)} + \ell^{-1-\k} \big)\big](\mathcal{N}_++1)e^{B(\eta)} e^{B(\tau)} \Omega\rangle \,. \end{split} \end{equation} 
Applying again Lemma \ref{lemma:Pr-ex} to the first and third terms on the r.h.s. of (\ref{eq:P-exp1}), and Lemma \ref{lm:N-Pgrow} to the last, we obtain 
\begin{equation}\label{eq:P-exp2} \begin{split} 
\big\langle \Phi_N , \big\{ \big[ U_{N-2}^* e^{B(\eta)} &\cP^{(2+\kappa)} e^{-B(\eta)} U_{N-2} \big] \otimes 1 \big\} \Phi_N \big\rangle \\ \leq \; &\big\langle e^{B(\tau)} \Omega, \cP^{(2+\kappa)} 
e^{B(\tau)} \Omega \big\rangle + \left[ 2 - \frac{2N}{N-2} \right] \sum_{p \in \L^*_+} |p|^{2+\kappa} \eta_p^2 \\ &- \frac{2}{N-2} \sum_{p \in \L^*_+} |p|^{2+\kappa} \eta_p 
\langle e^{B(\tau)} \Omega, \big[ b_p^* b_{-p}^* + b_p b_{-p} \big] e^{B(\tau)} \Omega \rangle \\
&+ \frac{1}{N-2}  \sum_{i < j : j=N-1, N}  \big\langle \Phi_N , \big[ \theta (x_i - x_j) (\frak{p}_i \otimes \frak{p}_j) + \text{h.c.} \big] \Phi_N \big\rangle  \\
& + C \langle e^{B(\tau)} \Omega, \big[ 1+N^{-1}\big(\mathcal{P}^{(2+\k)} + \ell^{-1-\k} \big)\big](\mathcal{N}_++1)  e^{B(\tau)} \Omega\rangle \,. \end{split} \end{equation} 
With the properties of $\tau$ (see Lemma \ref{lm:action-tau}) it is easy to check that all expectations in the state $e^{B(\tau)} \Omega$ are bounded, uniformly in $N,\ell$. Moreover, by (\ref{eq:etaHr}), we find 
\[ \left[ 2 - \frac{2N}{N-2} \right] \sum_{p \in \L^*_+} |p|^{2+\kappa} \eta_p^2 \leq \frac{C}{N \ell^{1+\kappa}} \,. \]
Finally, we can estimate the term on the fourth line in (\ref{eq:P-exp2}) by  
\[ \Big| \frac{1}{N-2}  \sum_{i < j : j=N-1, N}  \big\langle \Phi_N , \big[ \theta (x_i - x_j) (\frak{p}_i \otimes \frak{p}_j) + \text{h.c.} \big] \Phi_N \rangle \Big| \leq C \big| \langle \Phi_N , \theta (x_1 - x_2) (\frak{p}_1 \otimes \frak{p}_2)  \Phi_N \rangle \big| \,.\]
Since $\frak{p}_1 \theta (x_1 - x_2) \frak{p}_1 = \frak{p}_1 \hat{\theta}_0 = 0$ and, similarly, $\frak{p}_2 \theta (x_1 - x_2) \frak{p}_2 = 0$, we have 
\[ \begin{split} \big| \langle \Phi_N , \theta (x_1 - x_2) (\frak{p}_1 \otimes \frak{p}_2)  \Phi_N \rangle \big| &= \big| \langle \Phi_N , (\frak{q}_1 \otimes \frak{q}_2) \theta (x_1 - x_2) (\frak{p}_1 \otimes \frak{p}_2)  \Phi_N \rangle \big| \\ &\leq \| \theta \|_2 \| (\frak{q}_1 \otimes \frak{q}_2) \Phi_N \|  \| \Phi_N \| \,.\end{split} \]
With $\| \theta \|_2 = \| \hat{\theta} \|_2 \leq C \ell^{-3/2 -\kappa}$, for $0< \kappa < 1/2$, and with 
\[  \begin{split} 
\| (\frak{q}_1 \otimes \frak{q}_2) \Phi_N \|^2 &\leq C N^{-2} \big\langle \Phi_N, \big[ \sum_{i=1}^N \frak{q}_i \big]^2 \Phi_N \big\rangle \\ &= C N^{-2} \langle e^{B(\eta)} e^{B(\tau)} \Omega, (\cN_+ + 1)^2 e^{B(\eta)} e^{B(\tau)} \Omega \rangle \leq C N^{-2} \end{split} \]
we conclude that 
\[  \Big| \frac{1}{N-2}  \sum_{i < j : j=N-1, N}  \big\langle \Phi_N , \big[ \theta (x_i - x_j) (\frak{p}_i \otimes \frak{p}_j) + \text{h.c.} \big] \Phi_N \rangle \Big| \leq \frac{C}{N \ell^{3/2+\kappa}}\,.\ \]
Therefore, we obtain 
\[ \Big| \big\langle \Phi_N , \big\{ \big[ U_{N-2}^* e^{B(\eta)} \cP^{(2+\kappa)} e^{-B(\eta)} U_{N-2} \big] \otimes 1 \big\} \Phi_N \big\rangle \Big| \leq \frac{C}{N\ell^{3/2+\kappa}} \] 
for $\ell \leq N^{-2/3}$. Since $\| u_\ell \|_1 \leq C \ell^2 /N$ by Lemma \ref{lm:hardcorescatt}, the error term $\text{R}_{11}$ introduced in (\ref{eq:R1112}) is bounded by 
\[ \text{R}_{11} \leq C N^{-\kappa + \mu} \ell^{1/2-\kappa} \leq C N^\mu \ell^{1/2}\,. \]  
With (\ref{eq:R12fin}), we find 
\[ \text{R}_1 \leq C N^\mu N^{-\wt{\eps}} \]
for $\wt{\eps} > 0$ small enough. Combining this bound with (\ref{eq:R2fin}) we conclude, choosing first $\mu > 0$ small enough and then $m \in \bN$ sufficiently large, that  
\[ \frac{N(N-1)}{2}  \Big| \big\langle \Phi_N , \big\{ \big[ U_{N-2}^* e^{B(\eta)} \cP^{(2+\kappa)} (\cN_+ + 1) e^{-B(\eta)} U_{N-2} \big] \otimes u_\ell (x_{N-1} - x_N) \big\} \Phi_N \big\rangle \Big| \leq C N^{-\eps} \]
for a sufficiently small $\eps > 0$. Together with (\ref{eq:diff-en}) and (\ref{eq:UNUfin}), this estimate implies that 
\[ - \frac{N(N-1)}{2} \big\langle \Phi_N, \big\{  \big[ H_{N-2}^\text{eff} - 4 \pi \frak{a} N \big] \otimes u_\ell (x_{N-1} - x_N) \big\} \Phi_N \big\rangle \leq C N^{-\eps} \,.\]
From (\ref{eq:en-Psi2}), we obtain 
\[ \begin{split} &\frac{\langle \Psi_N , \sum_{j=1}^N -\Delta_{x_j} \Psi_N \rangle}{\| \Psi_N \|^2} \\ &\hspace{.4cm}  \leq 4\pi \frak{a} (N-1) + e_\Lambda \frak{a}^2 -\frac{1}{2} \sum_{p \in \L^*_+} \Big[ p^2 + 8\pi \frak{a} - \sqrt{|p|^4 + 16 \pi \frak{a} p^2} - \frac{(8\pi \frak{a})^2}{2p^2} \Big] + C N^{-\eps} \,.\end{split} \] 

We conclude the proof of Theorem \ref{thm:main} by giving the proof of Lemma \ref{lemma:Pr-ex}.
\begin{proof}[Proof of Lemma \ref{lemma:Pr-ex}]
With (\ref{eq:commaa}), we can compute $[\mathcal{P}^{(r)},B(\eta)]$ to show that  
\begin{equation} \label{eq:first_commut_diag}
e^{-B(\eta)} \mathcal{P}^{(r)} e^{B(\eta)} = \mathcal{P}^{(r)}+ \int_0^1 ds \,e^{-sB(\eta)} \sum_{p \in \L^*_+} |p|^r \eta_p \Big[ b^*_p b_{-p}^*+b_p b_{-p} \Big]e^{sB(\eta)}\,.
	\end{equation}
Furthermore, expanding the integrand on the r.h.s. of (\ref{eq:first_commut_diag}), we write 
	\begin{equation} \label{eq:first_commut_off_diag}
	\begin{split}
	e^{-sB(\eta)} \sum_{p \in \L^*_+} |p|^r& \eta_p \Big[ b^*_p b_{-p}^*+b_p b_{-p} \Big]e^{sB(\eta)} \\
	=\;&\sum_{p \in \L^*_+} |p|^r \eta_p \Big[ b^*_p b_{-p}^*+b_p b_{-p} \Big] \\
	&+ \int_0^s dt\, e^{-tB(\eta)} \sum_{p \in \L^*_+} |p|^r \eta_p \Big[ b^*_p b^*_{-p}+b_p b_{-p},B(\eta) \Big]e^{tB(\eta)}\,.
	\end{split}
	\end{equation}
	Let us compute the last commutator. With \eqref{eq:comm-bp},
we find
	\begin{equation*}
	\begin{split}
	\sum_{p \in \L^*_+} |p|^r \eta_p \Big[ b^*_p b^*_{-p} + b_p b_{-p},B(\eta) \Big]=\;& \frac{1}{2} \sum_{p,q \in \L^*_+} |p|^r \eta_p \eta_q \big[ b_p b_{-p},b^*_q b^*_{-q} \big]+\mathrm{h.c.}\\
	=\;&2\sum_{p \in \L^*_+}|p|^r \eta_p^2+ \Xi
	\end{split}
	\end{equation*}
	with
	\begin{equation*}
	\begin{split}
	\Xi=\;&4 \sum_{p\in \L^*_+} |p|^r \eta_p^2 a^*_p \left[ \left( 1-\frac{\mathcal{N}_++2}{N} \right)\left( 1-\frac{ \mathcal{N}_++1}{N} \right) -\frac{1}{2N^2}\right]a_p\\
	&+ 2\sum_{p \in \L^*_+}|p|^r \eta_p^2 \left[\left( 1-\fra{\mathcal{N}_++1}{N} \right) \left( 1-\fra{\mathcal{N}_+}{N} \right)-1\right]\\
	&-\frac{1}{N} \sum_{p,q \in \L^*_+}|p|^r \eta_p \eta_q a^*_q a^*_{-q} \left[ 2\left(1-\frac{\mathcal{N}_+}{N}\right) -\frac{3}{N}\right] a_p a_{-p}+\mathrm{h.c.}\,.
	\end{split}
	\end{equation*}
To control the last term, we write $a_q^* a_{-q}^* a_p a_{-p} = a_q^* a_p a_{-q}^* a_{-p} - \delta_{-q,p} a_q^* a_{-p}$ and  we bound, for an arbitrary $\xi \in \cF_+^{\leq N}$,  
\[ \begin{split} \Big| \frac{1}{N} \sum_{p,q \in \L^*_+} |p|^r &\eta_p \eta_q \langle \xi, a_q^* a_p a_{-q}^* a_{-p} \xi \rangle \Big| \\ &\leq \frac{1}{N} \sum_{p,q \in \L^*_+} |p|^r |\eta_p| |\eta_q| \| a_p^* a_q \xi \| \| a_{-q}^* a_{-p} \xi \| \\ &\leq \frac{1}{N} \sum_{p,q \in \L^*_+} |p|^r |\eta_p| |\eta_q| \big[ \| a_p a_q \xi \| + \| a_q \xi \| \big] \big[ \| a_{-q} a_{-p} \xi \| + \| a_{-p} \xi \| \big] .\end{split} \]
With Cauchy-Schwarz's inequality and with the bounds $r \leq 4$, $|\eta_p|\le C|p|^{-2}$, we find
\begin{equation*}
	\pm \Xi \le C (\cN_+ + 1) + C N^{-1}  \left( \cP^{(r)} + \ell^{1-r} \right) (\cN_+ + 1) .
\end{equation*}
Inserting this back in \eqref{eq:first_commut_off_diag} and using \eqref{eq:P_r} we obtain 
	\begin{equation*}
	e^{-sB(\eta)} \sum_{p \in \L^*_+} |p|^r \eta_p \Big[ b^*_p b_{-p}^*+b_p b_{-p} \Big]e^{sB(\eta)}=\sum_{p \in \L^*_+} |p|^r \eta_p \Big[ b^*_p b_{-p}^*+b_p b_{-p} \Big]+2s \sum_{p \in \L^*_+} |p|^r \eta_p^2+ \widetilde{\Xi}
	\end{equation*}
	again with
	\begin{equation*}
	\pm \widetilde{\Xi} \le C (\cN_+ + 1) + C N^{-1}  \left( \cP^{(r)} + \ell^{1-r} \right) (\cN_+ + 1) .	\end{equation*}
Setting $s=1$, this proves \eqref{eq:exact_action_offdiag}. Plugging now \eqref{eq:exact_action_offdiag} in \eqref{eq:first_commut_diag} and integrating over $s$ we find \eqref{eq:exact_action_P}.
\end{proof}

\appendix

\section{Properties of one-particle scattering equations} \label{App:scatt}\label{sec:app}

In this section we provide the proof of Lemma \ref{lm:hardcorescatt} and Lemma \ref{lm:driftscatt}. We start with Lemma \ref{lm:hardcorescatt}, where we describe  properties of the solution of the eigenvalue equation (\ref{eq:fell}).

\begin{proof}[Proof of Lemma \ref{lm:hardcorescatt}] 
By standard arguments, the ground state solution of (\ref{eq:ev}) is radial. Thus, we consider the ansatz $f_\ell (x) = m_\ell (|x|) / |x|$, which leads to the equation
\[ m'' (r) + \lambda_\ell m (r) = 0 \]
for $r \in [\aa /N ; \ell]$, with the boundary conditions $m (\aa /N) = 0$, $m' (\ell) = 1$ and $m(\ell) = \ell$. From $m (\aa /N) = 0$ and $m (\ell) = \ell$, we obtain 
\[ 
m (r)= \frac{\ell \: \sin(\sqrt \l_\ell (r-\aa/N))}{\sin(\sqrt \l_\ell (\ell-\aa/N))} 
\]
for all $r \in [\aa /N ; \ell]$. This proves (\ref{eq:fell-x}). Imposing $m' (\ell) = 1$, we arrive at 
\begin{equation}\label{eq:tan2} \tan \big( \sqrt{\l_\ell}\, (\ell- \aa/N) \big) = \sqrt{\l_\ell} \,\ell \end{equation} 
which shows (\ref{eq:lambdaell}). This equation allows us to estimate the eigenvalue $\lambda_\ell$. As already shown in \cite[Lemma A.1]{DGPHD-BEC}, we find 
\begin{equation}\label{eq:lam1} \lambda_\ell = \frac{3\aa}{N \ell^3} \big( 1 + \cO (\aa / N \ell) \big) \end{equation} 
which implies that $\sqrt{\lambda_\ell} (\ell -\aa /N) \simeq \sqrt{\lambda_\ell}  \ell  \simeq (N \ell)^{-1/2} \ll 1$. With $\tan s = s + s^3 /3 + 2 s^5 /15 + \cO (s^7)$, we obtain
\[ \sqrt{\lambda_\ell}  \,\ell =  \sqrt{\l_\ell} \big( \ell - \aa /N \big) + \frac{1}{3} \l_\ell^{3/2} \big( \ell -  \aa /N \big)^3 + \frac{2}{15} \l_\ell^{5/2} \big( \ell -  \aa /N \big)^5 + \cO \big( (N\ell)^{-7/2} \big)  \]
which leads to (\ref{eq:lambdaell-exp}). 

With (\ref{eq:lam1}) for $\lambda_\ell$, we can expand the expression (\ref{eq:fell-x}). We find, for $\aa/N \leq |x| \leq \ell$,  
\begin{align} \label{espansione-f-ell}
f_\ell(x) &= 1  - \frac{\aa}{N|x|} +\frac{3\aa}{2N\ell} - \frac{\aa^2}{2N^2 \ell |x|} - \frac{\aa |x|^2}{2N\ell^3} +\cO \Big(\frac {\aa^2} {N^2 \ell^2}\Big)\\
\partial_r f_\ell (x) & =  \frac \aa {N |x|^2} - \frac {\aa |x|} {N \ell^3} +\cO \Big(\frac {\aa^2} {N^2 \ell^2 |x|}\Big)\,.
\end{align}
With these approximations, we obtain (\ref{eq:Vell-zero}), (\ref{eq:w-bds}), \eqref{eq:normsomega} and \eqref{eq:Lp-norms}. Finally, we show (\ref{eq:omegap}). An explicit computation (using also the eigenvalue equation (\ref{eq:tan2})) gives
\begin{equation*}\label{eq:wp1}
\begin{split}  \widehat{\o}_p &= \widehat{\chi}_\ell (p) - \frac{2\pi \ell}{\sin \big( \sqrt{\lambda_\ell} \, (\ell - \aa/N) \big)}  \int_{\aa / N}^\ell  dr \, r  \sin \big( \sqrt{\lambda_\ell} \, (r-\aa/N) \big) \int_0^\pi  d\theta \sin \theta \, e^{-i|p| r \cos \theta} \\ &=  \frac{\lambda_\ell}{\lambda_\ell - p^2} \frac{4\pi \ell}{p^2} \Big[ \frac{\sin (|p| \ell)}{|p| \ell} - \cos |p| \ell \Big] - \frac{4\pi}{|p| (\lambda_\ell - p^2)} \frac{\sin (|p| \aa /N)}{\cos (\sqrt{\lambda_\ell} (\ell - \aa_N))}. \end{split} \end{equation*} 
For $|p| \geq \ell^{-1}$, we have $|\lambda_\ell - p^2| \geq c p^2$. With (\ref{eq:lam1}), we easily find $|\widehat{\omega}_p| \leq C / (Np^2)$, if $\ell^{-1} \leq |p| \leq N$, and $|\widehat{\omega}_p| \leq C / |p|^3$, if $|p| > N$. From \eqref{eq:normsomega}, we also have $|\widehat{\omega}_p| \leq \| \omega \|_1 \leq C \ell^2 / N$ for all $p \in \L^*$; this implies (\ref{eq:omegap}). 
\end{proof}

Next, we show Lemma \ref{lm:driftscatt}, devoted to the properties of the solution of (\ref{eq:g0}). 

\begin{proof}[Proof of Lemma \ref{lm:driftscatt}] 
We begin with (\ref{eq:checketa}). From the definition $g_{\ell_0}(x)=f_{\ell_0}(x)/f_\ell(x)$ and the explicit expression \eqref{eq:fell-x} we have 
 \be \label{eq:gell0-explicit}
g_{\ell_0}(x)=  \frac{\ell_0}{\ell}\, \frac{\sin(\sqrt \l_{\ell_0} (|x|-\aa/N))}{\sin(\sqrt \l_{\ell_0} (\ell_0-\aa/N))}  \, \frac{\sin(\sqrt \l_\ell (\ell-\aa/N))}{\sin(\sqrt \l_\ell (|x|-\aa/N))} 
\ee
for all $\aa / N \leq |x| \leq \ell$. Expanding, we find $g_{\ell_0} (x) = 1 + \cO (\aa / N\ell)$ and thus $|\check{\eta} (x)| \leq C \aa / \ell \leq C \aa / (|x| + \ell)$ for all $\aa / N \leq |x| \leq \ell$. For $|x| \geq \ell$, $g_{\ell_0} (x) = f_{\ell_0} (x)$ and (\ref{eq:w-bds}) implies that $|\check{\eta} (x)| \leq C \aa / |x| \leq C \aa /(|x| + \ell)$. Finally, for $|x| \leq \aa / N$, we defined 
\[ g_{\ell_0} (x) = \lim_{|y| \downarrow \aa / N} g_{\ell_0} (y) = 1 - \frac{3\aa}{2N \ell} + \cO \Big( \frac{1}{N^2 \ell^2} \Big) \]
which gives $|\check{\eta} (x)| \leq C \aa / \ell \leq C \aa  / (|x| + \ell)$. This shows the first estimate in (\ref{eq:checketa}). To bound $\nabla \check{\eta}$, we proceed similarly. For $\aa / N \leq |x| \leq \ell$, we find 
\be \label{eq:dpr-f}
\partial_r f_\ell (x) = f_\ell(x) \left( \frac{\sqrt{\l_\ell}}{\tan \big(\sqrt{\l_\ell}(r -\aa/N)\big)} - \frac 1 r \right)
\ee
and thus 
\be \label{eq:dpr-eta}
\dpr_r \check{\eta}(x) = N g_{\ell_0}(x)  \left( \frac{\sqrt{\l_\ell}}{\tan \big(\sqrt{\l_\ell}(|x| -\aa/N)\big)}- \frac{\sqrt{\l_{\ell_0}}}{\tan \big(\sqrt{\l_{\ell_0}}(|x| -\aa/N)\big)} \right)\,.
\ee
With $|g_{\ell_0}(x)| \leq C$ and expanding $\tan s = s + \cO (s^3)$, we find $|\nabla \check{\eta} (x)| \leq C \aa / \ell^2 \leq C \aa / (|x| + \ell)^2$, for all $\aa / N \leq |x| \leq \ell$. For $|x| \geq \ell$, we have $g_{\ell_0} (x) = f_{\ell_0} (x)$ and the estimate $|\nabla \check{\eta} (x)| \leq C \aa / (|x| + \ell)^2$ follows from (\ref{eq:w-bds}). 

Next, we show \eqref{eq:intVellg0}. With \eqref{eq:g0} (noticing that the flux of $f_\ell^2 \nabla g_{\ell_0}$ through the spheres $|x| = \aa / N$ and $|x| = \ell_0$ vanishes), 
we have
\[\begin{split}
2 N \lambda_\ell \int \chi_{\ell}(x) f_\ell^2 (x) g_{\ell_0}(x) dx&\; =  2N \l_{\ell_0}  \int \chi_{\ell_0}(x) f_\ell^2(x)  g_{\ell_0}(x) dx \\
&\;  = 2N \l_{\ell_0}  \int \chi_{\ell_0}(x) dx +  2N \l_{\ell_0}  \int \chi_{\ell_0}(x)  (f_\ell f_{\ell_0}-1) (x) 
\end{split}\]
since $g_{\ell_0}(x)=f_{\ell_0}(x)/f_\ell(x)$.  With Lemma \ref{lm:hardcorescatt} we have
\[
\Big| 2N \l_{\ell_0}  \int \chi_{\ell_0}(x) dx - 8  \pi \aa\Big | \leq C N^{-1}
\]
and
\[ \begin{split}
\Big| 2N \l_{\ell_0}  \int \chi_{\ell_0}(x)  f_{\ell}(x) (f_{\ell_0}(x)-1)  dx\Big | & \leq  C \| \o_{\ell_0}\|_1 \leq C /N \\
\Big| 2N \l_{\ell_0}  \int \chi_{\ell_0}(x) (f_\ell(x)-1)  dx\Big |& \leq  C \| \o_\ell\|_1 \leq C \ell^2 /N \,.
\end{split}\]
This proves the first bound in \eqref{eq:intVellg0} and also in (\ref{eq:intchiell0}). 
To show the second 
bound in \eqref{eq:intVellg0}, we compute (with a slight abuse of notation we write here, for $r > 0$, $f_\ell (r), g_{\ell_0} (r)$ to indicate the values of $f_\ell (x), g_{\ell_0} (x)$, for $|x| = r$)  
\[ \begin{split} 
2N \l_\ell \int &\chi_\ell (x) f_\ell^2 (x) g_{\ell_0} (x) e^{-i p \cdot x} dx \\ &=\;  \frac{4\pi  N \l_\ell }{|p|}\int_{\aa/N}^\ell r  f_\ell^2(r) g_{\ell_0}(r) \sin(|p|r)dr \\
& = \frac{4\pi N \l_\ell}{|p|^2} \int_{\aa/N}^\ell    \Big[ (f_\ell^2(r) + 2r f_\ell(r)\, \dpr_r f_\ell(r))  g_{\ell_0}(r) + r f_\ell^2(r) \dpr_r g_{\ell_0}(r)\Big] \cos (|p|r) dr  \\
& \hskip 2cm - \frac{4 \pi \ell N \l_\ell}{|p|^2}  g_{\ell_0}(\ell) \cos(|p|\ell)\,.
\end{split}\]
From Lemma \ref{lm:hardcorescatt}, we have $f_\ell (r) , r |\partial_r f_\ell (r)| \leq C$. From 
(\ref{eq:checketa}), we find (recalling that $g_{\ell_0} = 1 + \check{\eta} / N$) that $|\partial_r g_{\ell_0} (r)| \leq C / (N\ell^2)$. With the bound (\ref{eq:lambdaell-exp}) (or (\ref{eq:lam1})) for $\lambda_\ell$, we conclude that 
\[ \Big| 2N \l_\ell \int \chi_\ell (x) f_\ell^2 (x) g_{\ell_0} (x) e^{-i p \cdot x} dx \Big| \leq \frac{C}{\ell^2 p^2} \,.\]
The second bound in (\ref{eq:intchiell0}) can be proven analogously (on the r.h.s. $\ell$ is then replaced by $\ell_0$, which is chosen of order one). 

Eqs. (\ref{eq:eta-scat0}), (\ref{eq:eta-scat}) follow directly from (\ref{eq:g0}). As for \eqref{eq:eta0}, we rewrite
\be\label{eq:eta0bis}
	\eta_0=\int \check{\eta}(x)\, dx=\int_{|x|<\ell} \check{\eta}(x)\, dx-N\int_{|x|>\ell}\omega_{\ell_0}(x)\, dx\,.
\ee
Using \eqref{eq:normsomega},\eqref{espansione-f-ell} and the fact that $g_{\ell_0} (x)=1+O(\aa/N\ell)$ for $|x|<\ell$,  we obtain 
\[
	\eta_0=-N \int \omega_{\ell_0}(x)\, dx + \int_{|x|<\ell} \check{\eta}(x)\, dx+N\int_{|x|<\ell}\omega_{\ell_0}(x)\, dx=-\frac25\pi\aa\ell_0^2+\cO\Big(\frac{\aa^2\ell_0}{N}\Big)+\cO(\aa\ell^2) \,.
\]

To prove (\ref{eq:etap}), we consider the Fourier coefficients $D_p$ defined in (\ref{eq:Dp-def}) and the corresponding function $\check{D} (x) = -\nabla \cdot  \big[ (f_\ell^2 (x) - 1) \nabla \check{\eta} (x) \big]$. For any $p \in \L^*$, we have 
\begin{equation}\label{eq:Dp1} |D_p| \leq \int_{\aa / N \leq |c| \leq \ell} | \check{D} (x)| dx\,. \end{equation} 
For $\aa / N \leq |x| \leq \ell$, we find 
\be \begin{split} \label{eq:Dr}
\check{D}(x) & =  2 f_\ell(x)  (\dpr_r f_\ell)(x) (\dpr_r \check{\eta})(x) + (f^2_\ell(x)-1) \D \check{\eta}(x)  \\ & = N g_{\ell_0}(x) (\l_\ell - \l_{\ell_0}) (f^2_\ell(x)-1) -2 (\dpr_r \check{\eta})(x) \frac{(\dpr_r f_\ell)(x)}{f_\ell(x)}
\end{split}\ee
where in the second line we used the definition $\check{\eta} = N (g_{\ell_0} -1) = N (f_{\ell_0} / f_{\ell} - 1)$ and the scattering equation (\ref{eq:ev}) for $f_\ell$ and $f_{\ell_0}$ to replace 
\[ \label{eq:eta-scat-pos}
 \D \check \eta (x) = -2 N \frac{\nabla f_\ell (x)}{f_\ell (x)} \cdot \nabla g_{\ell_0} (x) + N  (\l_\ell -\l_{\ell_0}) g_{\ell_0} (x) \,.
\]
Using (\ref{eq:gell0-explicit}) to bound $|g_{\ell_0} (x) | \leq C$, (\ref{eq:w-bds}) to show $|f_\ell^2 (x) - 1| \leq C \aa / (N |x|)$ and using \eqref{eq:dpr-f}, \eqref{eq:dpr-eta} to control  the second term on the r.h.s. of (\ref{eq:Dr}), we find 
\begin{equation}\label{eq:Dx0}  \big| \check{D} (x) \big| \leq \frac{C \aa}{|x|} (\l_\ell - \l_{\ell_0}) \leq \frac{C \aa^2}{N |x| \ell^3} \end{equation} 
for all $\aa/ N \leq |x| \leq \ell$. Inserting \eqref{eq:Dx0} in \eqref{eq:Dp1}, we arrive at \begin{equation} \label{eq:Dp-bd1} 
|D_p| \leq C / (N \ell). \end{equation} 
From the scattering equation (\ref{eq:eta-scat0}), we can estimate 
\be \label{eq:etap-proof}
|\eta_p| \leq \frac{C}{|p|^2} \Big ( |D_p| +  |(\widehat V_\ell \ast \widehat{g}_{\ell_0})(p) |+|N\l_{\ell_0}( \widehat{\chi_{\ell_0} f_\ell^2} \ast \widehat{g}_{\ell_0})(p)|\Big) \,.
\ee
Combining (\ref{eq:Dp-bd1}) with the first bounds in (\ref{eq:intVellg0}), (\ref{eq:intchiell0}), we immediately conclude that $|\eta_p| \leq C / p^2$. To prove the remaining bounds in (\ref{eq:etap}), we write 
\be \begin{split} \label{eq:Dp-improved}
 D_p &\;= \frac{4\pi}{|p|} \int_{\aa/N}^\ell  r \,  \check{D}(r) \sin (|p|r) dr \\ 
&\; = \frac{4\pi}{|p|^2}  \, \left[ r \check{D} (r) \cos (|p| r) \right] |_{r = \frak{a}/N} + \frac{4\pi}{|p|^2} \int_{\aa/N}^\ell  \Big[ \check D(r) + r \, \dpr_r  \check{D}(r) \Big] \cos (|p|r) dr  \,.
\end{split}\ee
With \eqref{eq:Dx0} and $N \ell \gg 1$, we can estimate the boundary term by 
\begin{equation}\label{eq:boundary}  \left| \frac{4\pi}{|p|^2}  \, \left[ r \check{D} (r) \cos (|p| r) \right] |_{r = \frak{a}/N} \right| \leq \frac{C}{p^2 \ell^2} \, . \end{equation}
From \eqref{eq:Dr} we get
\be \begin{split} \label{eq:dpr-Dr}
\dpr_r \check D(r) =\;&  \dpr_r\check{\eta}(r) (\l_\ell - \l_{\ell_0})(f_\ell^2(r)-1) + N g_{\ell_0}(r) (\l_\ell - \l_{\ell_0}) 2 f_\ell(r) \dpr_r f_\ell(r)   \\
& -2 \dpr_r \Big((\dpr_r \check{\eta})(r) \frac{(\dpr_r f_\ell)(r)}{f_\ell(r)}\Big)\,.
\end{split}\ee
Using the bounds  $|\dpr_r\check{\eta}(r) |\leq C r^{-2}$, $|\dpr_r f_\ell(r)|\leq (C N r^2)^{-1}$, the boundness of $f_\ell$ and $g_{\ell_0}$ and \eqref{eq:lambdaell-exp} we easily see that the first line of \eqref{eq:dpr-Dr} is bounded by $\cO \big( (N \ell^3 r^2)^{-1} \big)$. As for the second line of \eqref{eq:dpr-Dr}, we find, using \eqref{eq:dpr-f} and  \eqref{eq:dpr-eta}, 
\[ \begin{split}
& - \dpr_r \Big((\dpr_r \check{\eta})(r) \frac{(\dpr_r f_\ell)(r)}{f_\ell(r)}\Big) \\
&= N g_{\ell_0}(r) \Bigg[  \bigg( \frac{\sqrt{\l_\ell}}{\tan \big(\sqrt{\l_\ell}(r -\aa/N)\big)}- \frac{\sqrt{\l_{\ell_0}}}{\tan \big(\sqrt{\l_{\ell_0}}(r -\aa/N)\big)}  \bigg)^2  \bigg( \frac{\sqrt{\l_\ell}}{\tan \big(\sqrt{\l_\ell}(r -\aa/N)\big)} - \frac 1 r \bigg) \\
& \hskip 1cm + \bigg( \frac{\l_\ell}{\sin^2\big(\sqrt{\l_\ell}(r -\aa/N)\big)}  - 
\frac{\l_{\ell_0}}{\sin^2\big(\sqrt{\l_{\ell_0}}(r -\aa/N)\big)}\bigg)  \bigg( \frac{\sqrt{\l_\ell}}{\tan \big(\sqrt{\l_\ell}(r -\aa/N)\big)} - \frac 1 r \bigg) \bigg) \\
& \hskip 1cm - \bigg( \frac{\sqrt{\l_\ell}}{\tan \big(\sqrt{\l_\ell}(r -\aa/N)\big)}- \frac{\sqrt{\l_{\ell_0}}}{\tan\big (\sqrt{\l_{\ell_0}}(r -\aa/N)\big)}  \bigg)\bigg( \frac 1 {r^2}- \frac{\l_\ell}{\sin^2\big(\sqrt{\l_\ell}(r -\aa/N)\big)} \bigg)\Bigg]\,.
\end{split}\]
Expanding $1/ \tan (s)= 1/s + s/3 + \cO(s^3)$ and $1/\sin^2 (s) = 1/ s^{2} +1/ 3 + \cO (s^2)$, we obtain 
\[
\Big| \dpr_r \Big((\dpr_r \check{\eta})(r) \frac{(\dpr_r f_\ell)(r)}{f_\ell(r)}\Big) \Big| \leq  \frac{C}{r^2} (\l_\ell -\l_{\ell_0}) \leq \frac{C}{N \ell^3 r^2}\,.
\]
Thus, $| r \partial_r \check{D} (r)| \leq C / (N \ell^3 r) \leq C / \ell^3$ for all $\aa / N \leq |x| \leq \ell$. Combined with (\ref{eq:Dx0}), (\ref{eq:Dp-improved}) and (\ref{eq:boundary}), we conclude that 
\[ |D_p| \leq \frac{C}{p^2 \ell^2}\,. \]
Inserting this estimate in (\ref{eq:etap-proof}), together with the second bounds in (\ref{eq:intVellg0}), (\ref{eq:intchiell0}), we obtain $|\eta_p| \leq C/ (\ell^2 |p|^4)$, which finishes the proof of (\ref{eq:etap}). Eq. (\ref{eq:etaHr}) is a simple consequence of (\ref{eq:etap}). 
\end{proof}

\bigskip

{\it Acknowledgment.} B. S. gratefully acknowledges partial support from the NCCR SwissMAP, from the Swiss National Science Foundation through the Grant ``Dynamical and energetic properties of Bose-Einstein condensates'' and from the European Research Council through the ERC-AdG CLaQS.  G.B. acknowledges support through the project ``Progetto Giovani GNFM 2020: Emergent Features in Quantum Bosonic Theories and Semiclassical Analysis''. G.B., S.C., and A.O. warmly acknowledge support of  the GNFM Gruppo Nazionale per la Fisica Matematica - INDAM.

%%%%%%%%%%%%%%%%%%%%%%%%%%%%%%%%%
%%%%%%%%%%%%%%%%%%%%%%%%%%%%%%%%%
%%%%%%%%%%%%%%%%%%%%%%%%%%%%%%%%%

\def\bskip{\\[-0.6cm]}

\end{document}